\pdfminorversion=4

\documentclass[showpacs,preprintnumbers,amsmath,amssymb]{article}

\usepackage{dsfont}  
\usepackage{fullpage} %Makes pages bigger so there are fewer badboxes
\usepackage{amsmath,amssymb,amsfonts,amsthm}
\usepackage{bbold}
\usepackage{authblk}
\usepackage[english]{babel}
\usepackage{graphicx}
\usepackage{epsfig}

\makeatletter
\def\cont{\mathbin{\dimen0=\ht\strutbox \dimen1=\ht\strutbox \divide\dimen0 by 2
\divide\dimen1 by 4
  \hbox{\vbox{\hrule width\dimen1}\hskip-0.4pt\vrule   height\dimen0}}\,}

\usepackage{chngcntr}
\usepackage{apptools}
\AtAppendix{\counterwithin{proposition}{section}}

\newtheorem{definition}{Definition}  

\newtheorem{proposition}{Proposition}

\usepackage{bm}

\makeatletter

\DeclareMathOperator{\tr}{tr}

\newcommand{\vOm}{\mathsf{\Omega}}
\newcommand{\vR}{\mathsf{R}}

\newcommand{\Real}{\mathbb{R}} 
\newcommand{\sltR}{{\mathfrak sl}(2,\Real)}
\newcommand{\Complex}{\mathbb{C}} 
\newcommand{\sltC}{{\mathfrak sl}(2,\Complex)}
\newcommand{\fg}{\mathfrak{g}}
\newcommand{\fk}{\mathfrak{k}}
\newcommand{\fh}{\mathfrak{h}}

\newcommand{\E}{{\cal E}}
\newcommand{\Phat}{\hat{P}}
\newcommand{\A}[3][ ]{{\overset {#1}{A}}_{\mathfrak{#2}\mathfrak{#3}}}
\newcommand{\sA}[3][ ]{{\overset {#1}{A}}_{\shat{\mathfrak{#2}}\mathfrak{#3}}}

\newcommand{\sAs}[3][ ]{{\overset {#1}{A}}_{\shat{\mathfrak{#2}}\shat{\mathfrak{#3}}}}

\newcommand{\cN}{{\cal N}}

\newcommand{\cD}{{\cal D}}

\newcommand{\cV}{{\cal V}} 

\newcommand{\U}{{\cal U}} 
\newcommand{\cP}{\mathcal P} 
\newcommand{\ag}{\alpha}

\newcommand{\dg}{\delta}

\newcommand{\eg}{\epsilon}

\newcommand{\lam}{\lambda}

\newcommand{\vareg}{\varepsilon}
 
\newcommand{\sg}{\sigma} 
\newcommand{\dI}{\mathbb{d}} 
\newcommand{\di}{\partial}

\newcommand{\be}{\begin{equation}} 
\newcommand{\ee}{\end{equation}} 
\newcommand{\bearr}{\begin{eqnarray}}
\newcommand{\eearr}{\end{eqnarray}}

\newcommand{\Cyc}[1]{
  \vtop{\mathsurround=0pt
  \ialign{##\crcr$\textstyle{\rm Cyc}\strut$\crcr
    \noalign{\kern-0.4ex\nointerlineskip}{\tiny#1}\crcr}}\ }
        
\newcommand{\de}{\delta}
\newcommand{\Om}{\Omega}
\newcommand{\Vt}{\tilde{\mathcal V}}
\newcommand{\V}{\mathcal V}
\newcommand{\sk}{\mathfrak k}
\newcommand{\R}{\mathbb R}
\newcommand{\One}{\mathds{1}}
\newcommand{\slt}{\mathfrak{sl}(2,\mathbb R)}
\newcommand{\sot}{\mathfrak{so}(2)}
\newcommand{\Omg}{\Omega_\fg}
\newcommand{\Omk}{\Omega_\fk}
\newcommand{\Omh}{\Omega_\fh}
\newcommand{\Vh}{\hat{\mathcal{V}}}
\newcommand{\Ve}{\overset 1 \V}
\newcommand{\Vz}{\overset 2 \V}
\newcommand{\Vhe}[1][]{\overset{1}{\Vh}{}^{#1}}
\newcommand{\Vhz}[1][]{\overset{2}{\Vh}{}^{#1}}
\newcommand{\Pe}{\overset{1}{P}}
\newcommand{\To}{\overset{1}{T}}
\newcommand{\Pz}{\overset{2}{P}}
\newcommand{\Tt}{\overset{2}{T}}
\newcommand{\Jh}{\hat{J}}
\newcommand{\oo}[1]{\overset 1{#1}}
\newcommand{\ot}[1]{\overset 2{#1}}
\newcommand{\oot}[1]{\overset {12}{#1}}
\newcommand{\norm}[1]{\parallel\! #1\!\parallel}
\newcommand{\Akh}{A_{\fk\fh}}
\newcommand{\Ahk}{A_{\fh\fk}}
\newcommand{\shat}[1]{\underset{\check{}}{#1}}
\newcommand{\Asgg}{A_{\shat{\fg}\shat{\fg}}}
\newcommand{\ue}{\oo u}
\newcommand{\uz}{\ot u}
\newcommand{\M}{\mathcal M}
\newcommand{\ove}[2]{\overset{#1}{#2}}

\begin{document} 

\title{Integrable structures and the quantization of free null initial data for gravity}
\author[1,2]{\bf Andreas Fuchs \thanks{Email: afuchs@geometrie.tuwien.ac.at}}
\author[1]{\bf Michael P. Reisenberger \thanks{Email: miguel@fisica.edu.uy}}
\affil[1]{Instituto de F\a'{\i}sica, Facultad de Ciencias,\protect\\
Universidad de la Rep\a'ublica Oriental del Uruguay,\protect\\
Igu\a'a 4225, esq. Mataojo, Montevideo, Uruguay\\
\ }
%,\protect\\ Email: miguel@fisica.edu.uy}

\affil[2]{Institut f\"{u}r Diskrete Mathematik und Geometrie, \protect\\
Technische Universit\"{a}t Wien, \protect\\
Wiedner Hauptstra{\ss}e 8-10/104, A-1040 Vienna, Austria} 
% \protect\\ Email: afuchs@geometrie.tuwien.ac.at} 

\date{March 25, 2017}

\maketitle

\begin{abstract}
Variables for constraint free null canonical vacuum general relativity are presented which have simple Poisson brackets that facilitate
quantization. Free initial data for vacuum general relativity on a pair of intersecting null hypersurfaces has been known since the 1960s. These consist
of the ``main'' data which are set on the bulk of the two null hypersurfaces, and additional ``surface'' data set only on their intersection 2-surface.
More recently the complete set of Poisson brackets of such data has been obtained. However the complexity of these brackets is an obstacle 
to their quantization. Part of this difficulty may be overcome using methods from the treatment of cylindrically symmetric gravity. Specializing 
from general to cylindrically symmetric solutions changes the Poisson algebra of the null initial data surprisingly little, but cylindrically 
symmetric vacuum general relativity is an integrable system, making powerful tools available. Here a transformation is constructed at the 
cylindrically symmetric level which maps the main initial data to new data forming a Poisson algebra for which an exact deformation
quantization is known. (Although an auxiliary condition on the data has been quantized only in the asymptotically flat case, and a suitable 
representation of the algebra of quantum data by operators on a Hilbert space has not yet been found.)
The definition of the new main data generalizes naturally to arbitrary, symmetryless gravitational fields, with the Poisson brackets 
retaining their simplicity. The corresponding generalization of the quantization is however ambiguous and requires further analysis. 
\end{abstract}

\section{Introduction}

\noindent Free (unconstrained) initial data for General Relativity (GR) on certain 
types of piecewise null hypersurfaces have been known since the 1960s 
\cite{Sachs,Penrose,Bondi,Dautcourt}. In \cite{MR,PRL} the Poisson brackets were 
found for a complete set of free data on a {\em double null sheet}. This is a compact 
hypersurface $\cN$ consisting of two null branches, $\cN_L$ and $\cN_R$, 
swept out by the two congruences of future directed, normal null geodesics  (called 
{\em generators}) emerging from a spacelike 2-disk $S_0$. The two branches are 
truncated on disks $S_L$ and $S_R$ respectively before any of the generators form 
a caustic or cross.  (See Fig.\ \ref{Nfigure}.)

\begin{figure}
\begin{center}

\includegraphics[height=5cm]{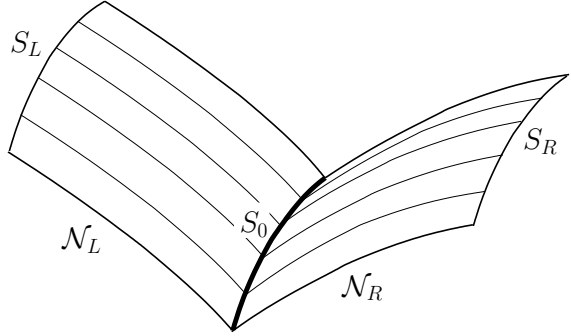}
\caption{A double null sheet in 2+1 dimensional spacetime. In 3+1 dimensions $S_0$ is a disk instead of a line segment, and $\cN_L$ and $\cN_R$ are solid cylinders
instead of 2-surfaces.}
\label{Nfigure}
\end{center}
\end{figure}

One of the chief motivations for calculating these brackets is the hope that they 
might be quantized to yield a canonical quantization of vacuum GR. But, although 
the brackets obtained are not overly complicated, it is by no means obvious how to 
quantize them. Fortunately it seems a large part of the difficulty can be overcome 
by first solving the problem in the simpler context of cylindrically symmetric 
gravitational waves. The Poisson brackets of the main initial datum of 
\cite{MR,PRL}, a complex field $\mu$ called the {\em Beltrami coefficient} on $\cN$, 
are essentially the same in the cylindrically symmetric case as in the general case, 
but cylindrically symmetric gravity is an integrable system which has been studied 
intensively \cite{Belinskii_Zakharov, Maison, Breitenlohner_Maison, Husain}. In particular 
its quantization has been explored in several works: 
\cite{Nicolai91, KorotkinSamtleben, Mena, Kuchar, Ashtekar_Pierri} and others. 

Using ideas from this literature we construct a non-local change of variables which 
replaces $\mu$ by a new field $\E_{ab}$ on $\cN$, a $2 \times 2$ matrix which we will 
call the {\em deformed conformal metric}. The Poisson brackets of $\E$ are simpler 
than those of  $\mu$ and, more importantly, in the cylindrically symmetric case the 
quantization of $\E$ can essentially be read off from the quantization of the closely 
related {\em monodromy matrix}\footnote{
The term ``monodromy matrix'' is often used to denote the holonomy of the Lax connection 
around a curve in space. Here, on the contrary, it is used in the sense of \cite{Breitenlohner_Maison}: 
the monodromy matrix encodes a monodromy in the spectral parameter plane.} 
$\M_{ab}$ given by Korotkin and Samtleben 
\cite{KorotkinSamtleben}. The transformation $\mu \rightarrow \E$ works also for 
gravitational fields without cylindrical symmetry, and simplifies the Poisson 
brackets also in this general context. Even the quantization of $\E$ extends formally 
to the symmetryless case, but unfortunately ambiguous products of delta distributions 
appear in the commutation relations. Perhaps a quantization of the full set of 
cylindrically symmetric initial data, instead of just the main datum,
would help to disambiguate these relations. This will not be attempted here. 

The  quantization of  \cite{KorotkinSamtleben} is natural in that it is adapted to the 
infinite dimensional group of dynamical symmetries of cylindrically symmetric gravity, 
namely the Geroch group \cite{Geroch, Kinnersley, Kinnersley_Chitre1, Kinnersley_Chitre2, 
Kinnersley_Chitre3, Julia85}, and it is complete at the algebraic level. It is 
an exact specification of the associative $*$-algebra of the quantized monodromy matrix 
elements, that is, a specification of the commutators of these data, including all terms 
of higher order in $\hbar$, and of the action of complex conjugation $*$. A unitary 
representation of the algebra by operators on a Hilbert space is, however, not given. (Actually 
such a representation was proposed in \cite{KorotkinSamtleben_PRL}, but unitarity 
was not demonstrated.)

A large part of the present paper is dedicated to obtaining the Poisson brackets of 
$\E$ from the brackets of the null datum $\mu$ given in \cite{PRL}, both in the cylindrically symmetric 
case and in the general case. In the cylindrically symmetric case the brackets we obtain are equivalent 
to the brackets for the monodromy matrix $\M$ found by Korotkin and Samtleben in \cite{KorotkinSamtleben}, 
and quantized by them. This equivalence was expected, but since their brackets were obtained from those of 
spacelike initial data instead of null initial data it is by no means trivial.  

Actually our calculation of the bracket generalizes the result of \cite{KorotkinSamtleben} somewhat even in the 
cylindrically symmetric case, and it closes a logical gap in their calculation.
It generalizes the result of \cite{KorotkinSamtleben} because it does not assume that spacetime is 
asymptotically flat in any sense. Assumptions about the asymptotic geometry of spacetime cannot be implemented as 
restrictions on the data on our null initial data hypersurface, which is compact. We shall see that the quantization 
of \cite{KorotkinSamtleben} of the Poisson algebra can be taken over almost unchanged to the non asymptotically flat 
context. Only the extension to this context of the quantization of an auxiliary condition, $\det \M = 1$, presents 
difficulties (which we will not attempt to resolve here). 

The logical gap that we close in the calculation of \cite{KorotkinSamtleben} is the following: They evaluate the 
bracket of certain fields at coinciding points by taking the limit of the bracket at non-coinciding points as the 
points approach each other. Indeed it is an important result of their work that this limit exists. But of course
such a procedure can in general lead to errors, as it would, for instance, in the case of two canonically conjugate fields.
In the present work the bracket is evaluated directly, without recourse to this point splitting procedure
(and the result of \cite{KorotkinSamtleben} is confirmed).

The remainder of the paper is organized as follows. In section \ref{data0} $\mu$, the main free null datum of 
\cite{PRL}, and $\rho$, the ``area density``, are defined; the Poisson algebra of $\mu$, its complex conjugate 
$\bar{\mu}$, and $\rho$ is reviewed; and the corresponding symmetry reduced data and brackets in the cylindrically 
symmetric model are presented. In section \ref{transformation} $\E$ is defined in terms of the data $\mu$ and $\rho$
in the cylindrically symmetric context. The relation of $\E$ to the variables of \cite{KorotkinSamtleben} is explained 
in section \ref{KS_variables}. Then, in section \ref{brackets}, the Poisson brackets of $\E$ are calculated from 
those of $\mu$, $\bar{\mu}$, and $\rho$ given in section \ref{data0}. The paper closes with a 
presentation of the generalization of our classical results to gravitational fields without cylindrical symmetry 
in section \ref{symmetryless}, and a brief statement of the quantization of the Poisson algebra of the $\E$ obtained 
from the results of \cite{KorotkinSamtleben} in section \ref{quantization}. An appendix on path ordered 
exponentials is included.  

Of course many things are {\em not} done in this paper: The transformation $\mu \rightarrow \E$ is invertible. 
For any $\E$ that is regular in a suitable sense there is a unique Beltrami coefficient $\mu$ that transforms 
to $\E$ and is regular at the symmetry axis. However, since the demonstration of this claim requires the 
definition of a number of structures not needed for the remaining results, it will not be included here.

It should also be emphasized that we will only discuss the Poisson algebra and quantization of a {\em subset} 
of the initial data, including the main datum $\mu$, and not of the whole set of null initial data on $\cN$ defined in 
\cite{MR, PRL, MR13}. 

\section{Free null initial data and Poisson brackets with and without cylindrical symmetry}\label{data0}

In the classical gravitational fields we will consider spacetime will be assumed to be a smooth manifold, 
and the metric on it everywhere smooth and Lorentzian. The sole exception will be at the symmetry axis of 
cylindrically symmetric fields, where other regularity conditions will be imposed. 
The intersection 2-surface $S_0$ of the double null sheet $\cN$ on which initial data is set will be assumed 
to be smoothly embedded in spacetime. These assumptions imply that the generators of $\cN$ are smoothly embedded 
and that the branches $\cN_A$ ($A = L,R$) are also, provided that the truncating surfaces $S_A$ are smooth.

The Beltrami coefficient $\mu$, the main initial datum of \cite{PRL}, encodes the conformal structure of the 
induced metric on $\cN$: If a chart $(x_A, \theta^1, \theta^2)$ is chosen on the branch $\cN_A$ ($A = L,R$) such 
that the $\theta^a$ ($a = 1,2$) are constant along the generators then $\di_{x_A}$ is tangent to the generators 
and hence null and normal to all tangents of $\cN_A$,\footnote{ 
Let $k_A \propto \di_{x_A}$ be the tangent to the generators corresponding to an affine parametrization of these. Clearly $k_A$ is 
normal to $\cN_A$ at $S_0$, since it is normal to both $S_0$ and to itself (being null). Any tangent $t$ to $\cN_A$ at any point can 
be obtained by Lie dragging a tangent to $\cN_A$ at $S_0$ to that point along $k_A$. But this Lie dragging leaves the inner product 
with $k_A$ unchanged, since $\pounds_{k_A} (k_A \cdot t) = t \cdot \nabla_{k_A} k_{A} + k_A \cdot \nabla_{k_{A}} t = 0$ because 
$\nabla_{k_A} k_{A} = 0$ and $k_A \cdot \nabla_{k_{A}} t = k_A \cdot \nabla_{t} k_{A} = \tfrac{1}{2}\nabla_{t} k^2_{A} = 0$. $k_A$ is 
thus normal to $\cN_A$ everywhere.} 
implying that the line element on $\cN_A$ takes the form
\be   \label{line_element}
ds^2 = h_{ab} d\theta^a d\theta^b,
\ee
with no $dx_A$ terms. In other words, the induced metric is effectively a two  
dimensional Riemannian metric on cross sections of $\cN$ transverse to the 
generators. Using the complex coordinate $z = \theta^1 + i\theta^2$ one may rewrite 
the line element as
\be             \label{mu_parametrization}
       ds^2 = \rho (1 - \mu\bar{\mu})^{-1}[dz + \mu d\bar{z}][d\bar{z} + \bar{\mu}dz],
   \ee
with $\mu$ a complex number valued field of modulus less than 1, $\bar{\mu}$ its 
conjugate, and $\rho = \sqrt{\det h}$ the area density transverse to the 
generators. $\mu$ is the Beltrami coefficient. It encodes the two 
real degrees of freedom of the unimodular matrix $e_{ab} = h_{ab}/\rho$. $e$ will be 
called the {\em conformal 2-metric} because it captures precisely the degrees of freedom 
of $h$ that are invariant under local rescalings. (The parametrization of $e_{ab}$ by 
$\mu$ and $\bar{\mu}$ also works when $e_{ab}$ is complex, but then $\bar{\mu}$ is no 
longer the complex conjugate of $\mu$.)

The free data used in \cite{MR, PRL} consists of $\mu$ given on all of $\cN$ and some 
additional data specified only on the intersection 2-surface $S_0$, including among others 
$\rho_0$, the area density on $S_0$. The data on $S_0$ are specified as a function of the coordinates 
$\theta$, while $\mu$ is specified on each branch $\cN_A$ as a function of the $\theta$ (as before, constant 
along the generators) and the {\em area parameter}, $v_A$, which is set to $1$ on $S_0$ and is 
proportional to $\sqrt{\rho}$ on each generator so that    
\begin{equation}\label{rho_rho0_v2}
\rho(v_A, \theta^1, \theta^2) = \rho_0(\theta^1, \theta^2) v_A^2.  
\end{equation}
Note that it is assumed in \cite{MR, PRL}, and here, that $\rho$ varies monotonically along each generator in $\cN$.
As is explained in \cite{PRL} and further on in the present section, this is not a severe restriction on the 
applicability of the formalism.   

In \cite{PRL} it was found that each of the fields $\mu$ and $\bar{\mu}$ Poisson commutes with itself, that is
\be \label{trivial_mu_brackets}
\{\mu(\mathbf{1}),\mu(\mathbf{2})\} = \{\bar{\mu}(\mathbf{1}),\bar{\mu}(\mathbf{2})\} = 0,
\ee
where $\mathbf{1}, \mathbf{2}$ are points on $\cN$, and also that data living on distinct branches of $\cN$ Poisson commute. 
Furthermore, it was found that the field $\rho_0$ Poisson commutes with itself and with $\mu$ and $\bar{\mu}$, from which it
follows that $\rho = \rho_0 v_A^2$ also Poisson commutes with itself and with $\mu$ and $\bar{\mu}$:
\be \label{trivial_rho_brackets}
\{\rho(\mathbf{1}),\rho(\mathbf{2})\} = \{\rho(\mathbf{1}),\mu(\mathbf{2})\} = \{\rho(\mathbf{1}),\bar{\mu}(\mathbf{2})\} = 0.
\ee
The only non-zero bracket between the fields $\mu$, $\bar{\mu}$ and $\rho$ is the one between 
$\mu$ and $\bar{\mu}$ at points $\mathbf{1}, \mathbf{2}$ on the same branch $\cN_A$ of $\cN$. It is
\be
\{\mu(\mathbf{1}),\bar{\mu}(\mathbf{2})\}
 = 4\pi G\, \dg^2(\theta_\mathbf{2} - \theta_\mathbf{1})\, H(\mathbf{1},\mathbf{2})
\left[\frac{1 - \mu\bar{\mu}}{\sqrt{\rho}}\right]_\mathbf{1}\left[\frac{1 - \mu\bar{\mu}}{\sqrt{\rho}}\right]_\mathbf{2}\:
e^{\int_\mathbf{1}^\mathbf{2} \frac{1}{1 - \mu\bar{\mu}}[\bar{\mu} d\mu - \mu d\bar{\mu}]}.
\label{mu_mubar_bracket}
\ee
The delta distribution in the bracket vanishes unless the points $\mathbf{1}$ and 
$\mathbf{2}$ lie on the same generator.  When they do lie on the same generator the integral 
in the exponential is evaluated along the segment of generator from $\mathbf{1}$ to $\mathbf{2}$, and  
$H(\mathbf{1},\mathbf{2})$ is a step function which equals $1$ when the point $\mathbf{1}$ 
lies on $S_0$ or between $S_0$ and the point $\mathbf{2}$, and vanishes otherwise. 
(To define the product of these factors with the delta as a distribution $H$ and the integral 
may be extended continuously to pairs of points $\mathbf{1}, \mathbf{2}$ lying on distinct generators. 
The product does not depend on the continuous extensions chosen.) 

The fields $\mu$, $\bar{\mu}$ and $\rho$ on $\cN$ generate a closed Poisson algebra in 
which $\rho$ commutes with everything. This algebra does not include the full set of initial data
- there are data which do not commute with $\rho$ - but in the present work we will concern ourselves only
with the problem of finding a quantization of the algebra generated by $\mu$, $\bar{\mu}$ and $\rho$.
In this context only the quantization of $\mu$ and $\bar{\mu}$ is non-trivial. The quantum commutators of 
$\rho$ with $\mu$, $\bar{\mu}$ and $\rho$ itself will be set to zero, as the Poisson brackets suggest.
$\rho$ is thus unchanged by the action, via Poisson bracket or commutator, of any functional of $\mu$, 
$\bar{\mu}$ and $\rho$. It can therefore be treated both in the classical and the quantum theory of this subalgebra 
of the data as a fixed, state independent function on $\cN$. 

Note that only data on the same generator have non-zero brackets. This is a
reflection of causality. Only points lying on the same generator are connected by a 
causal curve.\footnote{
This is always true in a spacetime neighborhood of any point of $\cN$, and we will 
{\em require} it to be true globally for the double null sheets that we consider. 
It is possible to immerse, or even embed, a double null sheet such that points on 
different generators are connected by a causal curve in the ambient spacetime. But 
then there is always an isometric covering spacetime in which they are not causally 
connected: It is always possible to embed the double null sheet into an isometric 
covering of part of the original spacetime, with the covering map mapping the image 
of $\cN$ in the covering spacetime into the image of $\cN$ in the original 
spacetime, such that distinct generators are not connected by any causal curve. See 
\cite{MR}. It follows that the hypothesis that the generators are causally 
disconnected does not restrict the initial data in any way.}

The bracket (\ref{mu_mubar_bracket}) has the curious feature that it does not 
strictly preserve the reality of the induced metric on $\cN$. There exist functions 
of $\mu$ and $\bar{\mu}$ which are real on real metrics, but nevertheless generate 
Hamiltonian flows from real metrics to metrics with a non-zero imaginary component.
However, this is more a nuisance than a real problem because the imaginary component  
generated always takes the form of a shock wave that travels along $\cN$, and does not affect 
the spacetime metric in the interior, $\cD$, of the domain of dependence of $\cN$, which 
remains real. The bracket therefore provides a Poisson structure on the space of real 
solution metrics on $\cD$. See \cite{PRL}. This awkward aspect of 
the formalism disappears when the deformed conformal metric $\E$ is used as data 
in place of $\mu$: $\E$ encodes the degrees of freedom of $\mu$ modulo, precisely, 
the shock wave modes mentioned. 

Because data on distinct generators Poisson commute the Poisson algebra decomposes, 
roughly speaking, into commuting subalgebras, formed by the data on each generator. 
Of course this is not quite correct because the Poisson bracket \eqref{mu_mubar_bracket} 
is a distribution which is singular precisely when $\mathbf{1}$ and $\mathbf{2}$ lie 
on the same generator, but ``morally`` it is true: if one replaces the Dirac delta in the bracket 
by a Kronecker delta times a normalization factor, as one might in a lattice model, then the 
algebra certainly decomposes as claimed. This suggests that we might learn a great 
deal about the quantization of the Poisson algebra (\ref{trivial_mu_brackets} - 
\ref{mu_mubar_bracket}) of $\mu$, $\bar{\mu}$ and $\rho$ by studying the quantization 
of the ``one generator algebra''
\begin{align}
\{\mu(\mathbf{1}),\mu(\mathbf{2})\} & = \{\bar{\mu}(\mathbf{1}),\bar{\mu}(\mathbf{2})\} 
= \{\rho(\mathbf{1}),\mu(\mathbf{2})\} = \{\rho(\mathbf{1}),\bar{\mu}(\mathbf{2})\} = \{\rho(\mathbf{1}),\rho(\mathbf{2})\} = 0 
\label{one_gen_trivial}\\
\{\mu(\mathbf{1}),\bar{\mu}(\mathbf{2})\} & = 4\pi G_2\, H(\mathbf{1},\mathbf{2})
%\nonumber\\
\left[\frac{1 - \mu\bar{\mu}}{\sqrt{\rho}}\right]_\mathbf{1}\left[\frac{1 - \mu\bar{\mu}}{\sqrt{\rho}}\right]_\mathbf{2}\:
e^{\int_\mathbf{1}^\mathbf{2} \frac{1}{1 - \mu\bar{\mu}}[\bar{\mu} d\mu - \mu d\bar{\mu}]}
\label{one_gen_mu_mubar}
\end{align}
of fields $\mu$, $\bar{\mu}$ and $\rho$ on a line. This is just the algebra 
(\ref{trivial_mu_brackets} - \ref{mu_mubar_bracket}) with the delta distribution in 
$\theta_\mathbf{2}-\theta_\mathbf{1}$ removed, the points $\mathbf{1}$ and $\mathbf{2}$ 
restricted to the same generator, and a rescaled Newton's constant $G_2$ in place of $G$. 

It is also the Poisson algebra of $\mu$, $\bar{\mu}$ and $\rho$ on the double null sheet of 
figure \ref{Symmetry_adapted_N} in cylindrically symmetric gravity, provided $G_2$ is equal 
to $G$ divided by the coordinate area of $S_0$ in symmetry adapted $\theta$ coordinates.\footnote{
The $\theta$ coordinates are symmetry adapted if the derivatives $\di_{\theta^a}$ are Killing vectors generating 
the cylindrical symmetry. With such coordinates the area density $\rho$ is constant on each symmetry orbit, and 
$G_2 = G/\int_{S_0} d^2\theta$ satisfies $G_2/\sqrt{\rho(\mathbf{1})\rho(\mathbf{2})} 
= G\sqrt{A(\mathbf{1})A(\mathbf{2})}$, where $A(p)$ is the area of the intersection of $\cN$ with the symmetry orbit 
through $p$. The Poisson algebra (\ref{one_gen_trivial}, \ref{one_gen_mu_mubar}) is therefore independent of the 
choice of symmetry adapted $\theta$ coordinates, but, somewhat surprisingly, it does depend on the symmetry adapted 
double null sheet chosen. This does not imply any ambiguity in the classical theory, because a rescaling of the 
brackets by a common factor corresponds to a rescaling of the action, which does not affect the classical solutions.
It does, however, seem to mean that the cylindrically symmetric quantum theory is not unambiguously defined by the full four 
dimensional quantum theory. For instance, consider the intersection $S$ of $\cN$ with the cylindrical symmetry orbit 
of circumference $10^6$ Planck lengths. The Poisson bracket \eqref{one_gen_mu_mubar} suggest that in a coherent state
the quantum uncertainty in the components of the conformal metric on $S$ is of the order of one over the root of the 
area of $S$ in Planck units. That is, the cylindrically symmetric quantum theory depends on the choice of $\cN$. This 
ambiguity is not unreasonable: The space of classical solutions has a well defined subspace of cylindrically symmetric solutions, but 
the states of the quantized cylindrically symmetric theory, in which the non symmetric modes of the initial data 
are strictly zero, is presumably not contained in the space of states of the full theory, in which all modes are 
expected to realize at least vacuum fluctuations.}
We can therefore use the known results on the quantization of cylindrically symmetric gravity, in particular those 
of Korotkin and Samtleben \cite{KorotkinSamtleben}, to quantize the one generator algebra
(\ref{one_gen_trivial}, \ref{one_gen_mu_mubar}).

\begin{figure}
\centering
\includegraphics{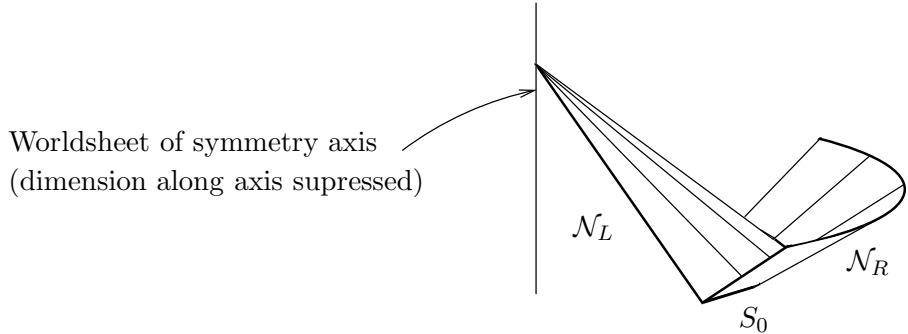}
\caption{A double null sheet adapted to a cylindrically symmetric spacetime, that 
is, a spacetime admitting a family of isometries the orbits of which are spacelike 
cylinders. The direction of the translation symmetry has been suppressed in the 
figure, reducing the spacetime to $2+1$ dimensions, and the symmetry orbits to 
circles. The central 2-surface $S_0$ of the adapted double null sheet is a portion 
of a symmetry orbit. If the charts $x_A, \theta^1, \theta^2$ are adapted to the 
cylindrical symmetry, in the sense that $\di/\di\theta^1$ and $\di/\di\theta^2$ are 
Killing vectors generating this symmetry, then $\mu$ depends only on $x_A$. Actually,
this $\cN$ does not quite fit our definition of a double null sheet, because $\cN_L$ 
touches the symmetry axis, where the generators meet. A double null sheet in the strict 
sense can be obtained by removing a neighborhood of the axis from $\cN$.
This subtlety has no consequences here and we shall call $\cN$ a double null sheet.}
\label{Symmetry_adapted_N}
\end{figure}

By cylindrically symmetric gravity we mean here vacuum general relativity with two commuting spacelike 
Killing fields that generate cylindrical symmetry orbits. And, following \cite{KorotkinSamtleben} and tradition, 
we add the requirement that the Killing orbits are orthogonal to a family of 2-surfaces. This apparently stringent 
additional condition is actually enforced by the vacuum field equations provided only two numbers, the so called 
{\em twist constants}, vanish. See \cite{Wald} Theorem 7.1.1. and \cite{Chrusciel}.  

Korotkin and Samtleben do not quantize the datum $\mu$, but they do quantize, 
among other things, the monodromy matrix $\M$ we have already mentioned, which is essentially the same as the deformed 
conformal metric $\E$. These encode the same physical degrees of freedom as $\mu$. In the following sections we will 
express $\E$, and $\M$, in terms of $\mu$, and verify that (\ref{one_gen_trivial}, \ref{one_gen_mu_mubar}) indeed implies 
the Poisson algebra of $\M$ that Korotkin and Samtleben quantize. 

The Poisson algebra of $\rho$ and $\mu$ in cylindrically symmetric gravity can be shown to coincide with the one generator algebra 
(\ref{one_gen_trivial}, \ref{one_gen_mu_mubar}) either by making a Poisson reduction of the Poisson algebra of null 
initial data in full four dimensional general relativity given in \cite{PRL}, or by calculating the Poisson brackets 
from the Einstein-Hilbert action restricted to cylindrically symmetric metrics, using a method analogous to that of 
\cite{MR, PRL}. Here we will do neither, because it is not necessary. The coincidence of the one generator Poisson 
algebra with that of cylindrically symmetric gravity certainly motivates the definition of the transformation 
$\mu \mapsto \E$ but the ultimate justification of this definition is that it transforms the one generator algebra 
into the algebra quantized in \cite{KorotkinSamtleben}, and this we verify directly. 

The model quantized in \cite{KorotkinSamtleben} is restricted by some further conditions, beyond cylindrical symmetry. It 
is assumed that spacetime becomes flat (locally) as one travels away from the symmetry axis, and that certain regularity
conditions hold at the symmetry axis. We will of course not put any conditions on the field at infinity, we cannot because 
$\cN_L$ doesn't reach infinity. But we will impose regularity conditions at the axis, namely that the area density on the 
symmetry orbits, $\rho$, vanishes at the axis, and that the limit of $e$ as the axis is approached along $\cN_L$ is well defined. 
Indeed, our basic definition of the transformation $\mu \rightarrow \E$ supposes that $e$ has a limiting value at the axis.
Nevertheless, in our calculation of the Poisson brackets we will need to treat {\em variations} $\dg$ about regular solutions 
for which $\dg e$ is singular at the axis. For this reason we extend the definition of the map $\mu \rightarrow \E$ to some 
fields that are singular at the axis.

It will also be assumed that $\rho$ and $\V$ are smooth on $\cN_L$, and that $\rho$ increases monotonically along the generators 
of $\cN_L$ as one moves away from the axis. (Note that in our figures and descriptions it will be assumed, for definiteness, 
that $d\rho$ is spacelike, and thus that the worldsheet of the symmetry axis is Lorentzian. This assumption is not required 
for our results.) 

These regularity conditions do not limit the scope of applicability of our results nearly as much as one might think. In solutions the 
monotonicity of $\rho$ on the generators of $\cN_L$ is largely a consequence of the field equations: In cylindrically symmetric 
vacuum solutions that are regular off the axis the Raychaudhuri equation implies that $\rho$ has at most one maximum, it either 
increases monotonically from zero at the axis forever or it reaches a maximum value and then decreases to zero in a finite affine 
distance. $\rho$ thus increases monotonically at least in a neighborhood of the axis. In the regular and asymptotically flat 
solutions that are the main focus of \cite{KorotkinSamtleben} it must be monotonic on all $\cN_L$ because $d\rho$ is non-zero 
and spacelike everywhere in spacetime. 

The regularity of $e$ at the axis is a stronger condition. Generically the conformal metric is not well defined at the axis. 
For instance, in flat spacetime the conformal metric of the cylindrically symmetric double null sheet of figure \ref{Symmetry_adapted_N} 
is singular at the axis. But also this condition restricts the applicability of the results less than it would seem to. Recall that we are 
studying cylindrically symmetric data not as an end in itself, but as a means to understand the one generator Poisson algebra, and 
ultimately the full algebra 
(\ref{trivial_mu_brackets} - \ref{mu_mubar_bracket}) in an arbitrary spacetime. Our results apply to the algebra 
(\ref{trivial_mu_brackets} - \ref{mu_mubar_bracket}) on any double null sheet for which the conditions on $\rho$ and $e_{ab}$ are 
satisfied on each generator.  Such double null sheets are certainly not generic, but there seem to be enough of them to describe 
any smooth solution to the vacuum field equations completely in terms of initial data on them. 

A double null sheet satisfying our conditions can be constructed from past light cones of regular points in any vacuum 
solution. If suitable coordinates $\theta^a$ are used to label the generators of a light cone, then the conformal metric with 
respect to these coordinates will be finite at the vertex: For instance, if $u^\ag$ are Riemann normal coordinates about the 
vertex, with $u^0$ timelike, and $\theta^a = u^a/(u^0 - u^3)$ ($a = 1,2$) then $e_{ab} = \dg_{ab}$.  Furthermore,  $\rho$ 
vanishes at the vertex and, if the cone is truncated close enough to the vertex, varies monotonically along the generators. 
The double null sheet can be constructed from the truncated past light cones of two regular points, provided these truncated 
cones intersect. Simply take $S_0$ to be a disk in the intersection, then the generators of the light cones that connect $S_0$ 
to the vertices sweep out the double null sheet.\footnote{
Note that the branch $\cN_L$ of the symmetry adapted double null sheet of figure \ref{Symmetry_adapted_N} is {\em not} a 
portion of a lightcone. The caustic at the axis is a line, not a point. This is the reason $e$ cannot have a well defined 
limit there in flat spacetime.}
Data on such double null sheets suffice to describe a solution if every spacetime point lies in the interior of the domain 
of dependence of some double null sheets of this type. This is clearly true in flat spacetime so, since it is essentially a 
local statement, it ought to be true also in curved spacetime.   

Of course it may nevertheless be interesting to generalize the cylindrically symmetric formalism to the case in which $e_{ab}$ is 
singular at the axis. This seems to be possible. As already mentioned the transformation $\mu \rightarrow \E$ can be extended 
easily to some fields for which $e$ is singular at the axis. More singular fields can perhaps be treated using the so called monodromy 
data of Alekseev \cite{Alekseev_summary} which is well defined when the axis is singular.

Note that although we have defined both branches, $\cN_L$ and $\cN_R$, of the double null sheet adapted to cylindrical 
symmetry, in the remainder of the paper we will only concern ourselves with the data on the branch $\cN_L$ swept out by 
generators going into the symmetry axis.

\section{The transformation to new variables.}\label{transformation}

In the present section we will define the map from the Beltrami coefficient $\mu$ to the deformed conformal metric $\E$ on 
$\cN_L$. $\E$ is a real, symmetric $2 \times 2$ matrix of determinant $1$, like the conformal 2-metric $e$. In fact it turns 
out that in cylindrically symmetric vacuum solutions satisfying our regularity conditions $\E$ at a point $r \in \cN_L$ 
equals $e$ on the axis at a certain instant of time $t_r$ determined by $r$ \cite{KorotkinSamtleben}. (See section \ref{KS_variables}.)

Figure \ref{Symmetry_reduced2} shows the symmetry reduced spacetime. In suitable coordinates the metric components of 
cylindrically symmetric solutions are constant on the symmetry orbits, and can therefore be thought of as functions on the
quotient of spacetime by these symmetry orbits. This quotient, a two dimensional manifold with boundary, is the reduced 
spacetime. The boundary is the worldline of the image of the symmetry axis in this spacetime. The branch $\cN_L$ of the 
adapted double null sheet is mapped to a line segment, which we will also call $\cN_L$. The symmetry axis at the instant 
$t_r$, a line in the full spacetime, corresponds to a point in the reduced spacetime which lies at the intersection of the 
past lightcone of $r \in \cN_L$ and the axis worldline. 

\begin{figure}
\begin{center}
\epsfig{file=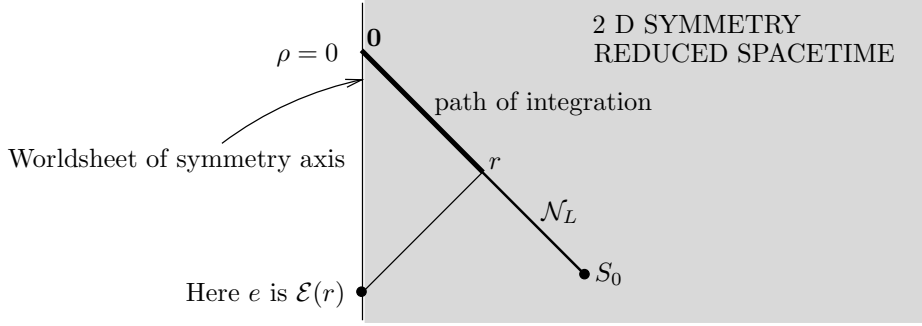}
\caption{The figure shows the two dimensional symmetry reduced spacetime, in which each point corresponds to a 
cylindrical symmetry orbit in the original spacetime. The vertical line is the worldline of the symmetry axis, and the 
boundary of the reduced spacetime. $\cN_L$ is a diagonal (null) line segment. A point $r$ on $\cN_L$ and its past 
lightcone are indicated. $\E(r)$ equals the conformal metric at the point (instant) $t_r$ on the axis worldline where this 
worldline meets the past lightcone of $r$.}
\label{Symmetry_reduced2}
\end{center}
\end{figure}

We shall define the transformation  $\mu \mapsto \E$ via a chain of transformations 
$\mu \mapsto \V \mapsto \Vh \mapsto \E$ involving the intermediate fields $\V$ and $\Vh$.

The field $\V$ is a density weight $-\tfrac{1}{2}$, positively oriented, real zweibein for the conformal 2-metric:
\begin{equation}
\V_a{}^i \V_b{}^j \dg_{ij} = e_{ab}, \qquad \qquad \det
\V \equiv \frac{1}{2} \eg^{ab}\eg_{ij}\V_a{}^i \V_b{}^j = 1.
\end{equation}
Letters $i, j, ...$ from the middle of the alphabet denote {\em internal indices}, which label 
the elements of the zweibein viewed as a basis of the space $S$ of density weight $-\tfrac{1}{2}$ 
1-forms; $\eg$ is the antisymmetric symbol, with $\eg_{12} = 1$; and $\dg_{ij}$ is the Kronecker 
delta. $\V$ may also be viewed as a linear map from an internal vector space to $S$. Then $\dg$ 
is a Euclidean metric on the internal space, and the internal indices $i, j, ...$ refer to a 
basis in this space which is orthonormal with respect to $\dg$.   

If a reference unit determinant real zweibein $Z$ is chosen, then any other such 
zweibein can be expressed as $\V_a{}^j = Z_a{}^i \V_i{}^j$. The matrices $\V_i{}^j$ 
form the group $SL(2,\R)$. The choice of a reference zweibein is not necessary for 
any of our constructions, but it allows us to describe them in the language of Lie 
groups.

One zweibein corresponding to the conformal metric defined by $\mu$ via \eqref{mu_parametrization} is
\begin{equation}\label{V_mu_def}
 \V = \frac{1}{\sqrt{1 - \mu\bar{\mu}}}\frac{1}{\sqrt{(1-\mu)(1-\bar{\mu})}}
\left[\begin{array}{cc} 1 - \mu\bar{\mu} & -i(\mu - \bar{\mu})\\ 0 & (1-\mu)(1-\bar{\mu}) \end{array}\right].
\end{equation}
But this is not the only possibility. The conformal metric determines the zweibein 
only up to local rotations, that is, up to right multiplication by an arbitrary 
position dependent element $h_i{}^j$ of the group $SO(2)$. One way to fix this 
gauge freedom is to require $\V$ to be upper diagonal and of positive trace, as it 
is in \eqref{V_mu_def}.

To define $\Vh$ we define the connection 
\begin{equation}
J_{i}{}^j = (\V^{-1})_i{}^b d \V_b{}^j
\end{equation} 
on $\cN_L$, deform it, and then integrate the deformed connection. $\V$ at any point $p \in \cN_L$ can 
be recovered from $J$ and the initial value of $\V$ at the reduced spacetime point $\mathbf{0}$ where 
$\cN_L$ meets the axis by integrating $J$ along the segment of $\cN_L$ from the axis to $p$:
\begin{equation}\label{integral_for_V}
\V(p) = \V(\mathbf{0})\: \cP e^{\int_{\mathbf{0}}^{p} J},
\end{equation}
where $\cP$ indicates that the exponential is path ordered. $\Vh$ is obtained from the same integral 
by substituting the deformed connection for $J$. (Note that we are using an exponential ordered 
from left to right, with the lower limit of integration corresponding to the left, and the upper to the right. See
appendix \ref{path_ordered_exp}.)

The connection $J$ is a 1-form on $\cN_L$ valued in the Lie algebra $\sltR$, that is, in the trace free, real, $2 \times 2$ 
matrices. Let $P$ be the symmetric component $\tfrac{1}{2}(J + J^t)$ of $J$, and $Q$ the antisymmetric component 
$\tfrac{1}{2}(J - J^t)$. (These are readily defined using the internal Euclidean metric to raise and lower indices.) Then the 
deformed connection is defined to be 
\begin{equation}\label{Jh_def1}
\Jh(q;r) = Q(q) + \frac{1}{\sqrt{1 - \rho(q)/\rho(r)}} P(q),
\end{equation}
where the radix denotes the principal square root, with $\sqrt{1} = 1$ and branch cut along the negative real axis. 
$\Jh$ is an $\sltC$ valued 1-form on $\cN_L$ that depends on two arguments. The first argument, the {\em field point} 
$q \in \cN_L$, corresponds to the argument of the undeformed connection $J$. $\Jh$ is a 1-form field with respect to $q$. 
The second argument, the {\em deformation point} $r \in \cN_L$, parametrizes the deformation. $\Jh$ is real when $q$ 
lies between $r$ and the axis. 

The field $\Vh(p;r)$ is obtained by integrating $\Jh(\cdot;r)$ along the segment of 
$\cN_L$ from the axis to $p$, holding the deformation point $r$ fixed:
\begin{equation}\label{integral_for_Vh}
\Vh(p;r) = \V(\mathbf{0})\: \cP e^{\int_{\mathbf{0}}^{p} \Jh(\cdot; r)}.
\end{equation}
That is, one replaces $J$ by $\Jh(\cdot;r)$ in the integral \eqref{integral_for_V}, maintaining the same prefactor 
$\V(\mathbf{0})$.  Equivalently $\Vh$ is the solution on $\cN_L$ to the differential equation
\begin{equation}\label{aux_lin_sys}
d\Vh = \Vh\Jh  
\end{equation}
which equals $\V$ on the axis. (See proposition \ref{path_ordered_exponential_prop_derivative} of the appendix). 

The final step is to define the deformed conformal metric $\E$. This is simply the conformal metric corresponding to the 
zweibein field $\U(q) = \Vh(q;q)$ obtained by setting the deformation point equal to the field point in $\Vh$.\footnote{
$U_a^i$ is essentially the field $W_a^i$ studied in \cite{Niedermaier_Samtleben}.
} 
Thus
\begin{equation}
\E_{ab} = \U_a{}^i \dg_{ij} \U_b{}^j. 
\end{equation}

This completes the definition of the transformation $\mu \mapsto \E$. Let us now examine it in detail. First let us verify 
that $\U$ is well defined, real and of determinant $1$, like $\V$. Since $\rho$ increases monotonically along $\cN_L$ as 
one moves away from the axis, the function 
\begin{equation}\label{u_def}
u(q;r) = \frac{1}{\sqrt{1 - \rho(q)/\rho(r)}}
\end{equation}
is real for $q$ on the segment of $\cN_L$ between the axis and $r$, and it is finite everywhere on this segment except at 
$q = r$. It follows that $\Jh = Q + uP$ is finite real and trace free on the segment excluding the endpoint $q = r$, and 
thus that $\Vh(\cdot;r)$ is well defined, real and of determinant $1$ there. $u$ is singular at $q = r$, but because the 
singularity is integrable, $\Vh$ is well defined, real and of determinant $1$ also there: Since $\rho$ is monotonic and 
smooth it may be used as a chart on $\cN_L$. In terms of this chart 
\begin{equation}
\Jh = \left[Q_\rho + \frac{1}{\sqrt{1 - \rho/\rho(r)}}P_\rho \right] d\rho,
\end{equation}
where $P_\rho$ and $Q_\rho$ are the $\rho$ components of the 1-forms $P$ and $Q$. Since $\V$ is also smooth (if a 
smooth $SO(2)$ gauge is adopted) these components are continuous. $\Jh$ thus diverges as an inverse square root of 
$\rho$, which is of course integrable. Proposition \ref{path_ordered_exponential_prop} of the appendix then indicates that 
$\U(r) = \Vh(r;r) = \V(\mathbf{0})\:\cP e^{\int_{0}^{\rho(r)} [Q_\rho + \frac{1}{\sqrt{1 - \rho/\rho(r)}}P_\rho]d\rho}$ 
is well defined, and equal to the limit of $\Vh(q;r)$ as $q \rightarrow r$. This establishes that $\U$ is real and of 
determinant $1$ as claimed. As corollaries $\U_i{}^j = Z^{-1}{}_i{}^a \U_a{}^j$, like $\V_i{}^j$, lies in $SL(2,\R)$ and 
$\E$ is well defined, real, and of determinant $1$. 

For points $p \in \cN_L$ that lie beyond $r$, so that $\rho(p) > \rho(r)$, $u(p;r)$ is the root of a negative real number.
$u$ is therefore pure imaginary, and a branch must be chosen to define its sign. Once a branch is chosen $\Vh$ is well 
defined but lies in $SL(2,\Complex)$ rather than $SL(2,\Real)$. See section \ref{KS_variables}.

Under $SO(2)$ gauge transformations $\Vh$ transforms like $\V$: Recall that under 
such a transformation $\V$ is multiplied on the right by a position dependent 
$SO(2)$ matrix $h$. That is, $\V \mapsto \V h$. Thus
\begin{equation}
 J \mapsto h^{-1}\V^{-1} d(\V h) = h^{-1} J h + h^{-1} d h.
\end{equation}
Taking symmetric and antisymmetric parts one obtains
\begin{align}
 P & \mapsto h^{-1} P h, \label{P_transform}\\
 Q & \mapsto  h^{-1} Q h + h^{-1} d h. \label{Q_transform}
\end{align}
$P$ transforms as an $SO(2)$ tensor, while $Q$ transforms as an $SO(2)$ connection. 
It follows that $\Jh = Q + uP$ transforms exactly like $J$, that is, 
$\Jh \mapsto h^{-1} \Jh h + h^{-1} d h$. This in turn implies that
\begin{equation}\label{Vh_transformation}
 \Vh \mapsto \Vh h,
\end{equation}
as can be demonstrated either by substituting the transform of the connection $\Jh$ 
and zweibein $\V(\mathbf{0})$ into the integral \eqref{integral_for_Vh}, or by noting that 
$\Vh h$ satisfies the differential equation \eqref{aux_lin_sys} with the
transformed $\Jh$ and the transformed initial datum $\V(\mathbf{0}) h$. As a corollary 
\eqref{Vh_transformation} implies that $\E$ is $SO(2)$ gauge invariant. It therefore 
depends only on the conformal metric $e$ (and $\rho$), and not on the zweibein
$\V$ chosen to represent $e$.

We have assumed that $\V$ is regular at the axis, but in fact the action of the Poisson bracket will in general not preserve this condition.
To define the Poisson bracket on $\Vh$ we must therefore define $\Vh$ on a somewhat more general class of $\V$ fields including some that are singular 
at the axis. Instead of defining $\Vh(p)$ as $\V(\mathbf{0})$ parallel transported to $p$ with the deformed connection $\Jh$, as in \eqref{integral_for_Vh}, 
one may define it as $\V(p)$ parallel transported to $\mathbf{0}$ with the undeformed connection $J$, and then parallel transported back to $p$ with the deformed
connection $\Jh$:
\begin{equation}\label{integral_for_Vh2}
\Vh(p;r) = \V(p)T_0(p,\mathbf{0})T(\mathbf{0},p)
\end{equation}
where $T_0(p,q) = \cP e^{\int_p^q J}$ and $T(q,p) = \cP e^{\int_q^p \Jh(\cdot; r)}$. This, by itself, does not extend the definition of $\Vh$ at all, but using 
proposition \ref{gauge_transformation} the expression \eqref{integral_for_Vh2} can be put into a form that is easily extended to the singular $\V$ fields in question 
provided both the field point $p$ and the deformation point $r$ lie off the axis. If one puts $A = \Jh(\cdot; r)$, $\lam = J$,
$a = \mathbf{0}$, $b = p$, and $\Lambda(a) = T_0(p,\mathbf{0})$ in the proposition, so that $\Lambda(q) = T_0(p,q) = \V^{-1}(p)\V(q)$, then the proposition shows that  
\begin{align}
\Vh(p;r) & = \V(p)\: \cP e^{\int_{\mathbf{0}}^{p} T_0(p,z)(\Jh(z; r) - J(z))_z T_0(z,p) dz}\\
& = \V(p)\: \cP e^{\int_{\mathbf{0}}^{p} \V^{-1}(p)\V(z)(\Jh(z; r) - J(z))_z \V^{-1}(z)\V(p) dz}\\
& = \cP e^{\int_{\mathbf{0}}^{p} \V(z)(\Jh(z; r) - J(z))_z \V^{-1}(z)dz}\:\V(p)\\
& = \cP e^{\int_{\mathbf{0}}^{p} (u - 1)\V P_z \V^{-1}dz}\:\V(p).\label{integral_for_Vh3} 
\end{align}
This last expression is our extended definition of $\Vh$. By proposition \ref{path_ordered_exponential_prop} it is well defined whenever $\V(p)$ is defined and $(u - 1)\V P_z \V^{-1}$ 
is integrable on the interval from $\mathbf{0}$ to $p$. If $p$ and $r$ lie off the axis this includes some cases in which $\V$ diverges at the axis, since $u - 1$ vanishes there. 
In particular it defines $\Vh$ on a large enough family of fields to determine the Poisson brackets of $\Vh$ at regular $\V$ fields. Presumably it also suffices to define the 
brackets at some singular $\V$ fields but that will not be explored in the present work. We will always assume that $\V$ is regular at the axis. Only the {\em variations} of 
$\V$ will be allowed to be singular there. 

It might seem that a phase space including only $\V$ fields that are regular at the axis would not be closed under the action of the Poisson bracket, but actually it is, in a 
roundabout way. The variations of $\V$ generated via the Poisson bracket differ from regular variations at most by what we call {\em zero modes}. These are the shock waves mentioned 
in section \ref{data0} that travel along $\cN$ but do not propagate into the interior of the domain of dependence of $\cN$. It is natural to take as the phase space the initial data 
on $\cN$ modulo zero modes. Then the Poisson bracket does not really take us out of the phase space corresponding to regular $\V$ fields. See \cite{PRL} for some related discussion.

\subsection{Coset space non-linear sigma models}\label{CSNLSM}

At each point the Beltrami coefficient $\mu$, or equivalently the conformal metric 
$e$, defines the matrix $\V_i{}^j$ in $SL(2,\R)$ up to right multiplication by an 
$SO(2)$ element. It can thus be identified with an element of the coset space 
$SL(2,\R)/SO(2)$. This suggests that cylindrically symmetric vacuum GR can be 
formulated as a coset space non-linear sigma model. Indeed this is the case. It is  
an $SL(2,\R)/SO(2)$ sigma model coupled to a dilaton and two dimensional gravity 
\cite{Geroch1}\cite{Breitenlohner_Maison_Gibbons}.

This form of the theory of cylindrically symmetric GR generalizes fairly directly 
to cylindrically symmetric reductions of a wide class of field theories, including 
electromagnetism coupled to gravity and various supergravity theories 
\cite{Breitenlohner_Maison_Gibbons}. In these models the field $\V$ takes values in 
some non-compact, connected, real, semi-simple matrix\footnote{A matrix group is
one that has a faithful finite dimensional matrix representation.} Lie group $G$ 
instead of $SL(2,\R)$, and the $SO(2)$ symmetry is replaced by a gauge symmetry 
under right multiplication by a field valued in the maximal compact subgroup $H$ of 
$G$. Although we will only study the vacuum gravity case we will often use aspects 
the formalism of the this wider class of models to clarify the logic.

This formalism is based on a few facts about semi-simple Lie groups: (See \cite{Helgason} 
for a systematic exposition of these ideas.)
The maximal compact subgroup $H$ can always be viewed as the subgroup of elements 
of $G$ invariant under an involutive automorphism $\eta$.\footnote{This follows 
from theorem 1.1 Ch. VI of \cite{Helgason}, and the fact that semi-simple matrix 
Lie groups have finite center, by proposition 4.1 Ch. XVIII of \cite{Hochschild}.}
For example, $SO(2)$ is the subgroup of $SL(2,\R)$ invariant under $g \mapsto 
g^\eta = (g^{-1})^t$. The automorphism $\eta$ on $G$ defines an automorphism of the 
algebra $\fg$, which will also be called $\eta$. In the case of $\slt$ it is just 
minus the transpose: $a^\eta = -a^t \ \forall a \in \slt$. The algebra $\fh$ of 
the maximal compact subgroup $H$ is of course invariant under $\eta$. In 
particular, $\sot$ consists of the antisymmetric $2 \times 2$ matrices.

Since any matrix can be decomposed into a sum of its antisymmetric and symmetric 
parts, the space of trace free matrices, $\slt$, is a sum of the space of 
antisymmetric matrices $\sot$ and the space of trace free symmetric matrices. This 
generalizes to the so called {\em Cartan decomposition} of $\fg$:
\begin{equation}
 \fg = \fh + \fk,
\end{equation}
with $\fh = \tfrac{1}{2}(\fg + \fg^\eta)$ and $\fk = \tfrac{1}{2}(\fg - \fg^\eta)$ 
being the eigensubspaces of $\eta$ corresponding to eigenvalues $1$ and $-1$ 
respectively. Occasionally $G$, $H$, $\fg$, $\fh$ and $\fk$ will denote the corresponding 
complexified objects, which are characterized by $\eta$ in the same way as the real ones.

In the general $G/H$ models mentioned $P = J_\fk = \tfrac{1}{2}(J - J^\eta)$, $Q = J_\fh = \tfrac{1}{2}(J + J^\eta)$, 
and $\Jh = Q + u P$, as in the vacuum gravity case; $\Vh$ and $\U$ are defined in the same way, in terms of 
$\V(\mathbf{0})$ and $\Jh$, as in the vacuum gravity case; and the $H$ gauge invariant field 
$\E$, analogous to the deformed conformal metric, is  $\E = \U ({\U}^\eta)^{-1}$. 
(Subscripts $\fh$ and $\fk$ indicate components in the subspaces of $\fg$ of the same name.)

It is worth noting that the Lie brackets of $\fh$ and $\fk$ always satisfy the 
following conditions:
\begin{equation}\label{hk_brackets}
[\fh,\fh] \subset \fh \qquad [\fk,\fh] \subset \fk \qquad [\fk,\fk] \subset \fh.
\end{equation}
The first relation simply confirms that $\fh$ is a subalgebra. All three relations 
are easily obtained by applying the involutive automorphism $\eta$ to the left side 
of each. For example
$[\fk,\fh]^\eta = [\fk^\eta,\fh^\eta] = - [\fk,\fh]$, so $[\fk,\fh]$ is contained 
in the $\eta$ eigensubspace of eigenvalue $-1$, namely $\fk$.

\section{Relation to the variables of Korotkin and Samtleben}\label{KS_variables}

The variables used by Korotkin and Samtleben in \cite{KorotkinSamtleben} differ slightly from the ones we use. Instead of 
the deformation point $r$ they use the {\em spectral parameter} $w = 2\rho(r) - \rho^+$ to parametrize the deformation, 
where $\rho^+$ is a real constant on $\cN_L$ which may be set to any desired value. Since $\rho$ is monotonic along $\cN_L$
the value of $w$ determines $r$ uniquely. In \cite{KorotkinSamtleben} the deformed zweibein $\Vh$ is a function of the 
field point and $w$, while the deformed conformal metric is replaced by the monodromy matrix $\M(w) = \E(r(w))$. $\rho$ is 
not a dynamical variable in their model, that is, the function $\rho$ on $\cN_L$ does not depend on the state of the system,
so the replacement of the deformation point $r$ by the spectral parameter is quite trivial. In particular, the Poisson 
brackets, and quantum commutators, of $\E$ can be read off from those of $\M$ by simply replacing $w$ by 
$2\rho(r) - \rho^+$ in the expressions for the latter, and vice versa.

As we have seen $\rho$ is effectively non-dynamical also in the $\mu$, $\bar{\mu}$, $\rho$ algebra of our model of 
cylindrically symmetric gravity. There is in fact a datum ($\lam$ in \cite{PRL}) which has non-zero bracket with $\rho$
but it is not included in the subalgebra of data that we study. In a similar way $\rho$ is non-dynamical in \cite{KorotkinSamtleben} 
because the the action is truncated, eliminating terms involving a degree of freedom ($\Gamma$ in \cite{KorotkinSamtleben}) which does not
Poisson commute with $\rho$. The models may be thought of as partial descriptions of cylindrically symmetric gravity, describing 
most of the degrees of freedom.
Alternatively, they may be thought of as complete descriptions of cylindrically symmetric gravity with regularity conditions
at the symmetry axis which eliminate the degree of freedom which fails to Poisson commute with $\rho$.

A further difference between our formalism and that of Korotkin and Samtleben is that in theirs the value of $\Vh$ at 
spatial infinity plays a key role. To define this limiting value we extend the definitions of $\V$, $J$, $\Jh$, and $\Vh$ 
from $\cN_L$ to the whole reduced spacetime: $\V$ is now a zweibein for the conformal metric on the cylindrical 
symmetry orbits in all spacetime, $J$ is $\V^{-1}d\V$, and $\Jh$ is a deformation of $J$ with components 
$\Jh_\pm(\cdot;w) = Q_\pm + u^{\mp 1} P_\pm$ in null coordinates $x^\pm$. Here the definition of $u$ has been generalized 
to
\begin{equation}\label{u_defw}
u = \sqrt{\frac{w + \rho^+}{w - \rho^-}}, 
\end{equation}
where $\rho^+$ and $\rho^-$ are the inward moving and outward moving components of $\rho$ respectively. In cylindrically 
symmetric solutions to the vacuum field equations (with vanishing twist constants) $\square \rho = 0$ on the reduced 
spacetime (\cite{Wald} eq. 7.1.21), so $\rho$ takes the form $\frac{1}{2}(\rho^+ + \rho^-)$ where $\rho^+$ is constant on 
ingoing null curves (moving toward the axis as time advances) while $\rho^-$ is constant on outgoing null curves. This
of course means that $\rho^+$ is a real constant on $\cN_L$.

On $C^2$ solutions $\Jh(\cdot;w)$ defined in this way turns out to be a flat connection for any value of $w$.\footnote{
Conversely, if the connection is flat for all $w$ then $\V$ satisfies the field equations. Thus the flatness of $\Jh$ is 
equivalent to the field equations on $\V$. The existence of such a zero curvature formulation of the field equations is 
characteristic of integrable field theories.}
$\Vh(q;w)$ may therefore be defined by an integral like \eqref{integral_for_Vh} taken along any curve from $\mathbf{0}$ 
to the field point $q$. Which curve is used does not matter since the connection is flat. Note that with the definition 
\eqref{u_defw} $\Jh$ and $\Vh$ are defined also for complex spectral parameter $w$.

\begin{figure}
\centering
\includegraphics[height=3cm]{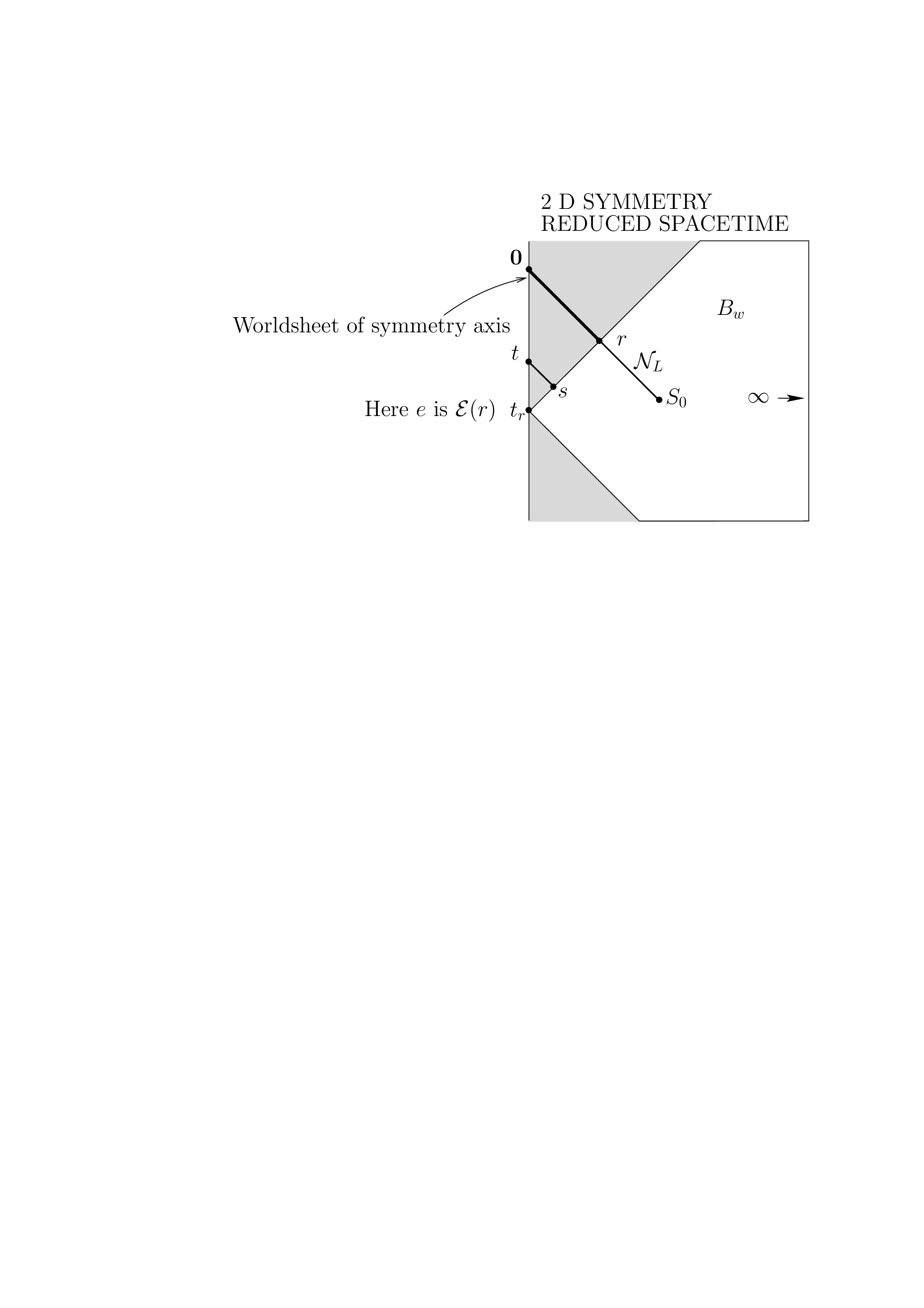}
\caption{This figure, like figure \ref{Symmetry_reduced2}, shows the two dimensional symmetry reduced spacetime. $u$ is real
in the shaded region, whereas it is purely imaginary in its unshaded complement. The unshaded region is therefore
the spacetime domain $B_w$, in which $w = \rho^-(r)$ lies on the branch cut of the function $u$ defined by \eqref{u_defw}.}
\label{branchcutdomain}
\end{figure}

Spatial infinity in \cite{KorotkinSamtleben} is characterized by $\rho^+ \rightarrow \infty$, $\rho^-/\rho^+ \rightarrow 1$. 
($d\rho$ is assumed to be spacelike throughout spacetime.)
The limit $\Vh(\infty;w)$ of $\Vh$ at spatial infinity is defined if there exists a sequence of reduced spacetime points 
such that $\rho^+ \rightarrow \infty$, $\rho^-/\rho^+ \rightarrow 1$ along the sequence, and $\Vh(\cdot;w)$ tends to the 
same limit along all such sequences. For $w$ real $\Vh(\infty;w)$ is actually double valued since $u$ is the principal root 
of a negative real number when $\rho^- > w$ and $\rho^+ > -w$. Korotkin and Samtleben therefore define
\begin{equation}\label{Tpmdef}
 T_{\pm}(w) = \Vh(\infty;w \pm i0) \V^{-1}(\infty),
\end{equation}
which are the central objects in their analysis. Here $\V(\infty)$ is the limit of the zweibein $\V$ at spatial infinity, 
assumed to exist, $\Vh(\cdot;w \pm i0)$ represents the limit $\lim_{\eg\rightarrow 0, \eg >0} \Vh(\cdot;w \pm i\eg)$, and 
$\Vh(\infty;w \pm i0)$ is the limit at spatial infinity of $\Vh(\cdot;w \pm i0)$. 

Note that in Minkowski space $e = \left[\begin{array}{rr} r & 0 \\ 0 & 1/r\end{array} \right]$ in standard cylindrical 
coordinates. $e$ thus has no finite limit either on the axis or at infinity, and of course a zweibein $\V$ of $e$ cannot 
then have finite limits either. Korotkin and Samtleben therefore do not work with asymptotically flat solutions directly, 
but rather with their Kramer-Neugebauer duals. (The Kramer-Neugebauer transformation is a symmetry 
transformation of the cylindrically symmetric vacuum gravity action. See \cite{Breitenlohner_Maison}.) In the 
Kramer-Neugebauer dual of Minkowski space $e = \left[\begin{array}{rr} 1 & 0 \\ 0 & 1\end{array} \right]$ in a suitable 
chart, so under any reasonable definition of asymptotically flat spacetimes that are regular at the axis $e$ has finite 
limits at both infinity and the axis in the Kramer-Neugebauer duals. And indeed Korotkin and Samtleben require that these 
limits exist in their model.

Korotkin and Samtleben quantize the Poisson algebra of $T_+$ and $T_-$, obtaining an algebra they term a ``twisted 
Yangian double'', closely related to Drinfel'ds Yangian algebra \cite{Drinfeld86}. Their quantization of the monodromy 
matrix is obtained by expressing $\M$ as a function of $T_\pm$:
\begin{equation}
\M(w)_{ab} = T_+(w)_{a}{}^c e(\infty)_{cd} T_-(w)_b{}^d 
= \Vh(\infty;w + i0)_a{}^i\dg_{ij}\Vh(\infty;w - i0)_b{}^j.
\end{equation}
(In \cite{KorotkinSamtleben} a basis in which $e(\infty)_{cd} = \dg_{cb}$ is used.) 

In the Kramer Neugebauer duals of asymptotically flat solution spacetimes that they consider this expression for the monodromy 
matrix agrees with our expression $\M(w)=\E(r(w))= \Vh(r(w);w)\Vh^t(r(w);w)$, a fact pointed out in \cite{Niedermaier_Samtleben}. 
In outline the proof runs as follows: The field
\begin{equation}
M(q;w) \equiv \Vh(q;w + i0)\Vh^t(q;w - i0). 
\end{equation}
is independent of the field point $q$ in the region $B_w$ of spacetime in which $w \in \Real$ lies on the branch cut of 
$u$ because
\begin{equation}\label{dM_is_0}
dM(\cdot;w) = \Vh(\cdot;w + i0)[\Jh(\cdot;w + i0) + \Jh^t(\cdot;w - i0)]\Vh^t(\cdot;w - i0) = 0, 
\end{equation}
The last equality holds when $w$ lies on the branch cut because then $u(\cdot;w + i0) = -u(\cdot;w - i0)$, and therefore 
$\Jh(\cdot;w + i0) = -\Jh^t(\cdot;w - i0)$. The spectral parameter $w$ lies on the branch cut of $u$ if and only if $w$ 
is real and $\rho^-(q) > w$, $\rho^+(q) > -w$, for then the radicand in the expression \eqref{u_defw} for $u$ is real and 
negative. If $w = \rho^-(r)$ for some deformation point $r \in \cN_L$ then $w$ lies on the branch cut at all points that 
are spacelike to the point $t_r$ on the axis that lies on the past light cone of $r$. See figure \ref{branchcutdomain}. 
$r$ of course lies on the boundary, $\di B_w$, of this region, as does in a sense, spacelike infinity. The constancy of 
$M(\cdot;w)$ in this region establishes the equality of Korotkin and Samtleben's expression for the monodromy matrix, 
equal to $M(\infty;w)$, with $M(r;w)$, which is equal to $\M(w)=\E(r)$ by continuity of $\Vh(\cdot;w \pm i0)$ along 
$\cN_L$ at $r$.   

To complete the proof two assumptions have to be justified: that the limit deformed zweibein $\Vh(\cdot;w \pm i0)$ really 
is continuous along $\cN_L$ at $r$, and that the limit connection $\Jh(\cdot;w \pm i0)$ is the connection corresponding to 
$\Vh(\cdot;w \pm i0)$ (and thus that $\Jh(\cdot;w \pm i0) = \Vh^{-1}(\cdot;w \pm i0)d\Vh(\cdot;w \pm i0)$). 

The component, $\Jh_-(\cdot;w \pm i0)$, of the limit connection along constant $\rho^+$ lines is well defined except at the
boundary $\rho^- = w$ of $B_w$. Outside $B_w$ it is just $\Jh_-(\cdot;w) = Q_- + \sqrt{\frac{w + \rho^+}{w - \rho^-}}P_-$, 
and inside $B_w$ it is $Q_- \pm i\sqrt{\left|\frac{w + \rho^+}{w - \rho^-}\right|}P_-$. Furthermore the norm 
$\norm{\Jh_-} = \sqrt{\Jh_{-\,i}{}^j\Jh_{-\,k}{}^l\dg^{ik}\dg_{jl}}$ of $\Jh_-(\cdot;w \pm i\eg)$ is bounded, for any 
$\eg>0$, by the function  
\begin{equation}
\norm{Q_-} + \left(\sqrt{\left|\frac{\operatorname{Re}{w} + \rho^+}{\operatorname{Re}{w} -\rho^-}\right|} + 1\right)
\norm{P_-}, 
\end{equation}
which is integrable along any finite segment of a constant $\rho^+$ line. Proposition 
\ref{path_ordered_exponential_prop2} then implies that if $p$ and $q$ are the end points of such a segment then the limit 
$T(p,q;w \pm i0)$ of the $\Jh(\cdot;w \pm i\eg)$ holonomy from $p$ to $q$ is the holonomy defined by 
the limit connection $\Jh(\cdot;w \pm i0)$. A similar argument applies to constant $\rho^-$ segments.

It follows immediately from proposition \ref{path_ordered_exponential_prop} and the integrability of 
$\Jh_-(\cdot;w \pm i0)$ that $\Vh(\cdot;w \pm i0)$ is continuous along $\cN_L$ at $r$. Propositions
\ref{path_ordered_exponential_prop} and \ref{path_ordered_exponential_prop_derivative} also imply that at any point $q$ off 
the line $\rho^- = w$ the limit connection satisfies $\Jh_-(q;w \pm i0) = T^{-1}(p,q;w \pm i0)\di_-T(p,\cdot;w \pm i0)|_q 
= \Vh^{-1}(q;w \pm i0)\di_-\Vh(\cdot;w \pm i0)|_q$, with $p$ any point at the same $\rho^+$ as $q$, and an analogous result 
for $\Jh_+(\cdot;w \pm i0)$. This establishes that $M(q;w) = M(r;w) = \M(w)$ for all finite points $q$ in $B_w$, from which  
it follows that the limiting value $M(\infty;w)$ also equals $\M(w)$.  

Incidentally it is now quite easy to demonstrate that on $C^2$ solutions $\E(r) = \M(w)$ equals the conformal metric $e$ at $t_r$ on 
the axis, provided the limiting value of $e$ on the axis exists and is differentiable along the axis worldline. 
Notice first that $\Jh = J$ along the axis, because $u = 1$ there, except at $t_r$. Thus $\Vh(t;w) = \V(t)$ at any point 
$t$ on the axis to the future of $t_r$. Now suppose that $s$ is the point of intersection of the past light cone of $t$ with 
$\di B_w$ then $\M(w) = M(s;w) = \V(t) T(t, s)T^t(t, s)\V^t(t)$. (See figure \ref{branchcutdomain}.) To show that 
$\M(w) = \V(t_r)\V^t(t_r) = e(t_r)$ it is therefore sufficient to show that $T(t, s)$ tends to $\One$ as $t \rightarrow t_r$. 
But by proposition \ref{path_ordered_exponential_prop} of the appendix 
\begin{equation}
\norm{T(t,s) - \One} \leq \exp\left(\int_{\rho^-(t)}^w \norm{\Jh_-}d\rho^-\right) -1 
\leq \exp\left(\int_{\rho^-(t)}^w \norm{Q_-} + \sqrt{\frac{w - \rho^-(t)}{w - \rho^-}} \norm{P_-}d\rho^-\right) -1 \rightarrow 0
\end{equation}
as $t \rightarrow t_r$.

\section{The Poisson brackets of the new variables}\label{brackets}

\newcommand{\Vhat}{\hat{\cal V}}
\newcommand{\Jhat}{\hat{J}}

To obtain the Poisson brackets of the deformed conformal metric $\E$, we proceed in 
steps, corresponding to those in the definition of the transformation 
$\mu \mapsto \E$. We begin in the following subsection by deriving the necessary 
components of the brackets of the zweibein $\V$ from those of $\mu$ and 
$\bar{\mu}$. Then, in subsection \ref{Vhat_brack} the brackets of the deformed 
zweibein $\Vhat$ are obtained from those of $\V$. The brackets of $\U$, the deformed zweibein 
evaluated at the deformation point, are calculated in subsection \ref{Vhat_brack}. Finally, in 
subsection \ref{section:brackofE}, the brackets of $\U$ are used to calculate the brackets of $\E$.

\subsection{The Poisson brackets of $\V$}\label{V_brack}

We shall calculate a certain block of components of the logarithmic bracket
\be	\label{Vlog_bracket}
\oo{\V}{}^{-1}(p_1)\ot{\V}{}^{-1}(p_2)\{\oo\V(p_1),\ot\V(p_2)\}.
\ee
Here a compact notation for tensor products has been used that will be employed extensively in the remainder of the paper: The 
tensor product $A\otimes B$ of a linear operator $A$ acting on a vector space $V_1$ and a linear operator $B$ acting on another 
vector space $V_2$ is denoted by $\oo A\ot B$, with the index over each factor indicating the space it acts on. In this way the 
bracket $\{\V_a{}^i(p_1), \V_b{}^j(p_2)\}$ may be written in index free notation as $\{\oo\V(p_1),\ot\V(p_2)\}$, or even 
$\{\oo\V,\ot\V\}$ when the arguments of $\oo\V$ and $\ot\V$ are clear from context.

The logarithmic bracket \eqref{Vlog_bracket} is an element of the tensor product of algebras $\oo\fg \ot\fg$. However, only the 
projection of the bracket on $\oo\fk \ot\fk$ will be needed to calculate the brackets of $\E$, which is our ultimate aim. $\E$ 
is a differentiable, $H$ gauge invariant functional of $\V$ and the brackets of such functionals depend only on the $\oo\fk \ot\fk$ 
component of \eqref{Vlog_bracket}: Let $F$ be such a functional, and let $\dg_{gauge}$ be any $H$ gauge variation, that is 
$\dg_{gauge}\V = \V a$ with $a$ any $\fh$ valued function on $\cN_L$, then
\be 
0 = \dg_{gauge} F = \int_{\cN_L} \frac{\dg F}{\dg \V_a{}^j}\V_a{}^i a_i{}^j dx.
\ee
(Here the variable of integration $x$ is a coordinate parameterizing $\cN_L$ and the functional derivative is taken with respect to 
$\V_a{}^j$ as a function of $x$.)
This implies that $\left[\frac{\dg F}{\dg \V}\right]^t \V$ traced together with any element of $\fh$ gives zero. As a consequence, 
when $\dg$ is an arbitrary variation
\be \label{delta_F}
\dg F = \int_{\cN_L} \tr (\left[\frac{\dg F}{\dg \V}\right]^t \V \V^{-1}\dg\V) dx =
\int_{\cN_L} \tr (\left[\frac{\dg F}{\dg \V}\right]^t \V [\V^{-1}\dg\V]_{\fk}) dx,
\ee
so only the $\fk$ component of $\V^{-1}\dg\V$ contributes to the variation of $F$.

To define the Poisson brackets of $\V$ it is necessary to express $\V$ as a function of $\mu$, that is, to fix the gauge. We will
calculate the bracket using symmetric gauge, in which $\V_a{}^i$ is a symmetric matrix of positive trace. But in the end, because
the gauge dependence of the $\oo\fk \ot\fk$ component of the logarithmic bracket is very simple, we can give an expression for it
valid in all gauges.

It will be convenient to work with a similarity transform of $\V$,
\be	\label{Vt_def}
\Vt = \frac{1}{2} \left[\begin{array}{rr} 1 & -i \\ 1 & i\end{array} \right] \V \left[\begin{array}{rr} 1 & 1 \\ i & -i\end{array}
\right].
\ee  
The elements of the columns of $\Vt$ are the components on the 1-forms $dz$ and 
$d\bar{z}$ (with $z = \theta^1 + i\theta^2$ as in \eqref{mu_parametrization}) of 
the complex null basis $\V^\pm = \V^1 \pm i \V^2$ formed from the orthogonal basis 
1-forms $\V^i = \V_a{}^i d\theta^a$. In terms of these components the line element 
on the cylindrical symmetry orbits may be expressed as 
\begin{align}
ds^2 = \rho \V_a{}^i \dg_{ij}\V_b{}^j d\theta^a d\theta^b  = & \rho \V_a{}^+ \V_b{}^- d\theta^a d\theta^b \\
  = & \rho (\V_z{}^+ dz + \V_{\bar{z}}{}^+ d\bar{z})(\V_z{}^- dz + \V_{\bar{z}}{}^- d\bar{z}).
\end{align}
This reproduces the expression \eqref{mu_parametrization} for the line element if 
\be	\label{Vt_mu}
\Vt = \frac{1}{\sqrt{1-\mu\bar{\mu}}} \left[\begin{array}{cc} 1 & \bar{\mu} \\ \mu & 1\end{array} \right].
\ee    
This is not the only possibility, but it is the one corresponding to $\V$ symmetric  with positive trace. Indeed, inverting the 
transformation \eqref{Vt_def} one obtains
\be	\label{Vsym_mu}
\V = \frac{1}{\sqrt{1-\mu\bar{\mu}}} \left[\begin{array}{cc} 1 + \frac{1}{2}(\mu + \bar{\mu}) & -\frac{i}{2}(\mu - \bar{\mu}) 
\\ -\frac{i}{2}(\mu - \bar{\mu}) & 1 - \frac{1}{2}(\mu + \bar{\mu}) \end{array} \right].
\ee
(Further transforming to upper triangular gauge one obtains \eqref{V_mu_def}.)  

Applying the similarity transformation \eqref{Vt_def} to the Pauli matrices one obtains $\widetilde{\sg_x}= \sg_y$,
$\widetilde{\sg_y}= \sg_z$, $\widetilde{\sg_z}= \sg_x$. The transform $\tilde{\fh}$ of the subalgebra 
$\fh = \mathfrak{so}(2)$ is thus generated by $i\sg_z$, and the $SO(2)$ gauge transformation becomes 
$\Vt \mapsto \Vt e^{i\phi \sg_z}$, while $\tilde{\fk}$, the transform of $\fk$, is spanned by $\sg_x$ and $\sg_y$, 
consisting therefore of Hermitian matrices that vanish on the diagonal.

We are now ready to compute the bracket. For any variation $\dg$ that preserves the symmetric gauge
\be	\label{Vt_variation}
\Vt^{-1}\dg\Vt = \frac{1}{2}\frac{\mu\dg\bar{\mu} - \bar{\mu}\dg\mu}{1-\mu\bar{\mu}} \sg_z + \frac{1}{1-\mu\bar{\mu}}
\left[\begin{array}{cc}0 & \dg\bar{\mu} \\ \dg\mu & 0  \end{array}\right].
\ee
Thus 
$[\Vt^{-1}\dg\Vt]_{\tilde{\fk}} = \frac{\dg\mu}{1-\mu\bar{\mu}} s_- +  \frac{\dg\bar{\mu}}{1-\mu\bar{\mu}} s_+$, where 
$s_\pm = \frac{1}{2}(\sg_x \pm i \sg_y)$, and it follows that 
\begin{align}     
[\oo{\Vt}{}^{-1}\ot{\Vt}{}^{-1}\{\oo\Vt,\ot\Vt\}]_{\tilde{\fk}\tilde{\fk}} = & \frac{1}{(1-\mu\bar{\mu})_{\bf 1} (1-\mu\bar{\mu})_{\bf 2}}
(\{\mu({\bf 1}),\bar{\mu}({\bf 2})\} \oo s_- \ot s_+ + \{\bar{\mu}({\bf 1}),\mu({\bf 2})\} \oo s_+ \ot s_-)\\
= & 4\pi G_2 \frac{1}{\sqrt{\rho_{\bf 1}\rho_{\bf 2}}}\left\{ H({\bf 1},{\bf 2}) e^{\int_{\bf 1}^{\bf 2}\frac{\bar{\mu}d\mu - \mu d\bar{\mu}}{1 
- \mu\bar{\mu}}}
\oo s_- \ot s_+ - H({\bf 2},{\bf 1}) e^{-\int_{\bf 1}^{\bf 2}\frac{\bar{\mu}d\mu - \mu d\bar{\mu}}{1 - \mu\bar{\mu}}}\oo s_+ \ot s_-\right\}.
\end{align}

Equation \eqref{Vt_variation} also shows that the $\tilde{\fh}$ component of the connection is
\be
\tilde{Q} = [\Vt^{-1}d\Vt]_{\tilde{\fh}} = \frac{1}{2}\frac{\mu d\bar{\mu} - \bar{\mu} d\mu}{1-\mu\bar{\mu}} \sg_z.
\ee      
The exponential 
$e^{\ag({\bf 1},{\bf 2})} = e^{\int_{\bf 1}^{\bf 2}\frac{\bar{\mu}d\mu - \mu d\bar{\mu}}{1 - \mu\bar{\mu}}}$ can 
therefore be reexpressed in terms of
$\cP e^{\int_{\bf 1}^{\bf 2}\tilde{Q}} = \left[\begin{array}{cc} e^{-\ag/2} & 0 \\ 0 & e^{\ag/2} \end{array}\right]$.
Indeed
\be
e^\ag s_- = e^{\ag/2} s_- e^{\ag/2} = \cP e^{\int_{\bf 1}^{\bf 2}\tilde{Q}} s_-\left[\cP e^{\int_{\bf 1}^{\bf 2}
\tilde{Q}}\right]^{-1} = \cP e^{\int_{\bf 1}^{\bf 2}\tilde{Q}} s_-\cP e^{\int_{\bf 2}^{\bf 1}\tilde{Q}},
\ee
and similarly $e^{-\ag} s_+ = \cP e^{\int_{\bf 1}^{\bf 2}\tilde{Q}} s_+\cP e^{\int_{\bf 2}^{\bf 1}\tilde{Q}}$.
(Of course, $\tilde{\fh} \simeq \sot$ is abelian, so the path ordering in $\cP e^{\int_{\bf 1}^{\bf 2}\tilde{Q}}$ 
is not really necessary.) Thus
\be\label{Vt_bracket1}
[\oo{\Vt}{}^{-1}\ot{\Vt}{}^{-1}\{\oo\Vt,\ot\Vt\}]_{\tilde{\fk}\tilde{\fk}} 
= 4\pi G_2 \frac{1}{\sqrt{\rho_1\rho_2}}\cP e^{\int_{\bf 1}^{\bf 2}\oo{\tilde{Q}}}\left\{ H({\bf 1},{\bf 2}) \oo s_- \ot s_+ 
- H({\bf 2},{\bf 1}) \oo s_+ \ot s_-\right\}\cP e^{\int_{\bf 2}^{\bf 1}\oo{\tilde{Q}}}.
\ee
(Here the superscript $1$ on the integrand $\oo{\tilde{Q}}$ does not imply that it is evaluated at the point $\bf 1$.)

This expression for the bracket can be given a more illuminating, gauge covariant, form.
\be
\oo s_- \ot s_+ = \frac{1}{4}(\oo\sg_x - i\oo\sg_y)(\ot\sg_x + i\ot\sg_y) = 2(\Omega_{\tilde{\fk}} + i\vareg_{\tilde{\fk}})
\ee
where $\Omega_{\tilde{\fk}} = \frac{1}{8}(\oo\sg_x\ot\sg_x + \oo\sg_y\ot\sg_y)$ and
$\vareg_{\tilde{\fk}} = \frac{1}{8}(\oo\sg_x\ot\sg_y - \oo\sg_y\ot\sg_x)$, and
$\oo s_+ \ot s_- = 2(\Omega_{\tilde{\fk}} - i\vareg_{\tilde{\fk}})$.
Furthermore, the step function $H$ is a sum of an odd step function and a constant:
\be	\label{step_expansion}
H({\bf 1},{\bf 2}) = \frac{1}{2}  s({\bf 1},{\bf 2}) + \frac{1}{2},
\ee 
where $s({\bf 1},{\bf 2})$ takes the value $1$ if the point $\bf 1$ lies on $S_0$ or between $S_0$ and the point $\bf 2$, 
and $-1$ otherwise. 
As a result
\be
H({\bf 1},{\bf 2}) \oo s_- \ot s_+ - H({\bf 2},{\bf 1}) \oo s_+ \ot s_- = 2 (s({\bf 1},{\bf 2}) \Omega_{\tilde{\fk}}
+ i \vareg_{\tilde{\fk}}).
\ee
The $\fk\fk$ component of the logarithmic bracket of the original zweibein $\V$ is obtained by acting on both sides of 
\eqref{Vt_bracket1} with the inverse of the similarity transformation \eqref{Vt_def}:
\be\label{V_bracket1}
[\oo{\V}{}^{-1}\ot{\V}{}^{-1}\{\oo\V,\ot\V\}]_{\fk\fk}
= 8\pi G_2 \frac{1}{\sqrt{\rho_1\rho_2}}\cP e^{\int_{\bf 1}^{\bf 2}\oo Q}\left\{ s({\bf 1},{\bf 2}) \Omega_\fk
+ i \vareg_\fk\right\}\cP e^{\int_{\bf 2}^{\bf 1}\oo Q},
\ee
where 
\begin{equation} \label{Omega_def_sl2R}
\Omega_\fk = \frac{1}{8}(\oo\sg_z\ot\sg_z + \oo\sg_x\ot\sg_x)  
\end{equation}
and
\begin{equation} \label{Vareg_def_sl2R}
\vareg_\fk = \frac{1}{8}(\oo\sg_z\ot\sg_x - \oo\sg_x\ot\sg_z).
\end{equation}

Although this expression for the $\fk\fk$ component of the bracket was calculated 
in symmetric gauge, it is actually valid in any gauge, because both sides transform 
in the same way under gauge transformations - by a similarity transformation:
Under a gauge transformation $\V \mapsto \V h$, with $h$ an $H = SO(2)$ valued 
field, and $[\V^{-1}\dg\V]_\fk \mapsto h^{-1}[\V^{-1}\dg\V]_\fk h$ for any 
variation $\dg$, even if $\dg$ acts non trivially on $h$. The left side of 
\eqref{V_bracket1} therefore transforms by a similarity transformation by 
$h({\bf 1})$ in the $1$ space and by $h({\bf 2})$ in the $2$ space. As to the right 
side, the gauge holonomy $\cP e^{\int_{\bf 1}^{\bf 2}\oo Q}$ transforms 
to $\oo{h}{}^{-1}({\bf 1})\cP e^{\int_{\bf 1}^{\bf 2}\oo Q}\oo h({\bf 2})$ while 
$\Omega_\fk$ and $\vareg_\fk$ are invariant under simultaneous similarity 
transformations of both space $1$ and space $2$ by the same $h \in SO(2)$. That is, 
$\Omega_\fk = \oo{h}{}^{-1}({\bf 2})\ot{h}{}^{-1}({\bf 2})\Omega_\fk\oo{h}({\bf 2})\ot{h}({\bf 2})$ 
and similarly for $\vareg_\fk$. It follows that the right side transforms like the left side.

The invariance of $\Omega_\fk$ follows from the fact that it commutes with the generator $\oo\eg + \ot\eg$ 
of $SO(2)$ similarity transformations, $\eg$ being the antisymmetric matrix 
$\left[\begin{smallmatrix} 0 & 1 \\ -1 & 0 \end{smallmatrix} \right]$ which generates $SO(2)$. 
This also proves the invariance of $\vareg_\fk = \frac{1}{2}[\oo\eg, \Omega_\fk]$. The deeper 
reason that  $\Omega_\fk$ is invariant is that  $\Omega_\fk$ is the inverse of the restriction 
of the Killing form of  $\fg$ to $\fk$. $\Omega_\fk$ is invariant under the 
adjoint action of $H$ because both the Killing form, and the subspace $\fk$ of $\fg$ are.  

For later use we define the notations $\Omega_\fg$ for the inverse of the Killing form of $\fg$ 
and $\Omega_\fh$ for the inverse of the restriction of the Killing form to $\fh$. Note that 
because $\fk$ and $\fh$ are Killing orthogonal 
$\Omega_\fg = \Omega_\fk + \Omega_\fh$. Note also that for $\fg = \slt$ and $\fh = \sot$ 
\begin{align}
 \Omega_{\fg\,i}{}^j{}_k{}^l & = \frac{1}{4}[\dg_i^l \dg^j_k - \frac{1}{2}\dg_i^j \dg_k^l]\\
 \Omega_{\fk\,i}{}^j{}_k{}^l & = \frac{1}{8}[\dg_i^l \dg^j_k + \dg^{jl} \dg_{ik} - \dg_i^j \dg_k^l]\\
 \Omega_{\fh\,i}{}^j{}_k{}^l & = \frac{1}{8}[\dg_i^l \dg^j_k - \dg^{jl} \dg_{ik}].
\end{align}
 
The definitions of $\Omega_\fg$, $\Omega_\fk$, and $\Omega_\fh$ as the inverses of 
restrictions of the Killing form can be applied to the Cartan decomposition of any 
semi-simple Lie algebra $\fg$. The real, $\Omega_\fk$, term in the bracket 
\eqref{V_bracket1} can thus be extended straightforwardly to the wider class of 
coset space sigma models mentioned in subsection \ref{CSNLSM}. The  
generalization of $\vareg_\fk$ is less obvious.  However, this object does not 
appear in the brackets of $\E$ or $\U$, calculated in the following subsections, 
so these brackets can be generalized without difficulty. 

The imaginary $\vareg_\fk$ term in the bracket is in fact its strangest aspect.
It arises because the step function $H({\bf 1},{\bf 2})$ in the bracket \eqref{one_gen_mu_mubar} is 
not antisymmetric. As pointed out in \cite{PRL}, this is the price one has to pay 
to obtain brackets of $\mu$ and $\bar{\mu}$ that satisfy the Jacobi relations.

Because of this imaginary term the bracket does not preserve the reality of the conformal metric $e$. That is, a real functional $F$ 
of $e$ can generate a Hamiltonian flow that takes real $e$ to complex $e$: The variation of $e$ on $\cN$ generated by such an $F$ is 
\be
\{F, e\} = \{F, \V\V^t\} = \{F,\V\} \V^t + \V \{F,\V\}^t = 2\V[\V^{-1}\{F,\V\}]_\fk \V^t,
\ee
so if $\V$ is real (which can always be assumed if $e$ is real) the imaginary part of this variation is $2\V\operatorname{Im}[(\V^{-1}\{F,\V\})_\fk] \V^t$.
Since $F$ is a functional only of $e$ it is $H$ gauge invariant. Thus, by the reasoning that led to \eqref{delta_F}, 
\begin{align} 
\operatorname{Im}(\ot\V{}^{-1}\{F, \ot \V\})_\fk & =  \operatorname{Im} \int_{\cN} \oo\tr (\oo{\left[\frac{\dg F}{\dg\V}\right]}^t\oo\V 
[\oo{\V}{}^{-1}\ot{\V}{}^{-1}\{\oo\V,\ot\V\}]_{\fk\fk})\, dx_1\\
	      & = \sum_{A \in {x,z}} c_A \frac{1}{\sqrt{\rho_2}}
\ot{\left[\cP e^{\int_{\bf 2}^{\bf 0} Q} \sg_A \cP e^{\int_{\bf 0}^{\bf 2} Q} \right]},\label{V_im_modes}
\end{align}
where $c_A = \tfrac{\pi}{2} G_2  \int_{\cN_L} \frac{1}{\sqrt{\rho}}\tr (\left[\frac{\dg F}{\dg\V}\right]^t\V\cP e^{\int_{\bf 1}^{\bf 0} Q}
[\eg,\sg_A] \cP e^{\int_{\bf 0}^{\bf 1} Q})\,dx_1$ if $\mathbf{2} \in \cN_L - S_0$, and is given by an analogous integral over $\cN_R$ if $\mathbf{2} \in \cN_R - S_0$. 
If $\bf 2$ lies on the intersection $S_0$ then $c_A$ is the sum of the integrals over $\cN_L$ and $\cN_R$. (Recall that only data living on the same generator
have non-zero brackets.) Since these coefficients are not zero in general a real functional $F$ of $e$ can, and in general does, excite two modes of the conformal 
metric with imaginary coefficients on each branch of $\cN$.

The imaginary component of $(\ot\V{}^{-1}\{F, \ot \V\})_\fk$ is the sum of initial data for inward moving shock waves, 
\begin{equation}\label{zero_mode_L}
\begin{array}{cl} \frac{1}{\sqrt{\rho_2}}
\ot{\left[\cP e^{\int_{\bf 2}^{\bf 0} Q} \sg_A \cP e^{\int_{\bf 0}^{\bf 2} Q} \right]} & \text{for $\mathbf{2} \in \cN_L$}\\
0 & \text{for $\mathbf{2} \in \cN_R - S_0$},\end{array}
\end{equation}
and for analogous outward moving shock waves which are non-zero on $\cN_R$. These are the zero modes mentioned earlier. They do not propagate into the interior of 
the domain of dependence of $\cN$, as can be verified directly from the field equations for cylindrically symmetric GR, as given, for instance, in 
\cite{NKS}.    
A more conceptual argument which establishes the same conclusion also in the symmetryless case is given in \cite{PRL} in terms of the corresponding modes of $\mu$ and $\bar{\mu}$. 

Since the zero modes do not propagate into the interior of the domain of dependence of $\cN$ they are not really 
part of the the initial data that determine the metric there, and it would be desirable to have data from which this mode has
been projected out. As we shall see, the deformed conformal metric is such data.

\subsection{The Poisson brackets of $\Vhat$}\label{Vhat_brack}

We turn now to the calculation of the Poisson bracket of $\oo\Vhat(p_1;r_1)$ with $\ot\Vhat(p_2;r_2)$ for field points $p_1$, $p_2$ 
and deformation points $r_1$, $r_2$ off the axis. As a first step a simple expression will be found for the variation of
$\Vhat$ due to a variation of the field $\cV$. (The variation of $\Vhat$ due to a variation of $\rho$ is not needed 
because $\Vhat$ Poisson commutes with $\rho$.)

Recall that when $\V$ is regular at the point $\mathbf{0}$ where $\cN_L$ meets the symmetry axis then $\Vhat$ at a field point $q$ on $\cN_L$ is
\be\label{Vhat_def_repetition}
\Vhat(q) = \cV(\mathbf{0})T(\mathbf{0},q), 
\ee
where 
\be
T(p,q) = \cP e^{\int_p^q \Jhat}
\ee
is the holonomy from $p$ to $q$ defined by $\Jhat$. 

However, as \eqref{V_bracket1} shows, the Poisson bracket $\oo{\V}{}^{-1}(p_1)\ot{\V}{}^{-1}(p_2)\{\oo\V(p_1),\ot\V(p_2)\}$ diverges as 
$p_1 \rightarrow \mathbf{0}$. The variation of $\V$ generated via the Poisson bracket by $\V(p_2)$ is singular at $\mathbf{0}$. For this 
reason we have provided a somewhat more sophisticated definition of $\Vhat$ in section \ref{transformation} which extends the definition 
\eqref{Vhat_def_repetition} to fields $\V$ having certain types of singularity at $\mathbf{0}$. According to this extended definition
\be\label{Vhat_def2_repeat}
\Vhat(q) = \cP e^{\int_{\mathbf{0}}^q (u-1)\V P_z \V^{-1} dz}\:\V(q).
\ee
We will see that this definition of $\Vhat$ suffices to define the brackets of $\Vhat$ with $\V$, and ultimately with $\Vhat$
at regular $\V$ fields. \eqref{Vhat_def_repetition} defines the variation of $\Vhat$ corresponding to the singular variations of $\V$ 
which the Poisson bracket produces even at solutions in which $\V$ is regular. 

In the following the deformation point will be represented by the {\em deformation parameter} $l = \rho(r)$. $l$ is closely 
related to the spectral parameter $w$, in fact $l = \tfrac{1}{2}(w + \rho^+)$, but it is a bit more convenient for
our purposes. Field points will also be represented by the corresponding values of a coordinate $\chi$ on $\cN_L$. 
$\chi$ starts at $0$ on the axis and increases smoothly and monotonically from there, but is otherwise arbitrary. 
In order to keep the field $\rho$ explicitly visible in our formalism $\chi$ will in general {\em not} be identified with $\rho$. 
The coordinates of the various field points involved in the calculation will be denoted by different letters 
$x, z, ...$, but these are all values of the same function $\chi$ evaluated at the corresponding points. 

By proposition \ref{path_ordered_exponential_functional_derivative} the variation of $\Vhat$ is
\be\label{Vhat_variation_00}
\dg\Vhat(x) = \int_0^x S(0,z) \dg[\V(u-1)P_z\V^{-1}]_z S(z,x) dz\ \V(x) + S(0,x) \dg\V(x),
\ee
where $S(a,b) \equiv \cP e^{\int_{a}^b (u-1)\V P_z \V^{-1} dz}$, provided $(u-1)\V P_z \V^{-1}$ and its variation are integrable on the interval $[0,x]$.
The integrability of $(u-1)\V P_z \V^{-1}$ follows from the smoothness of $\V$ and the local integrability of $u$. Whether or not the variation is integrable
depends on the variation under consideration.

Using the relation $S(0,z) = \Vhat(z)\V^{-1}(z)$, which follows from \eqref{Vhat_def2_repeat}, and $S(z,x) = S(0,z)^{-1}S(0,x)$, \eqref{Vhat_variation_00} 
may be expressed as
\be\label{Vhat_variation_0}
\dg\Vhat(x) = \int_0^x \Vh\V^{-1}\dg[\V(u-1)P_z\V^{-1}]\V\Vh^{-1} dz\ \Vh(x)\V^{-1}(x)\,\V(x) + \Vh(x)\V^{-1}(x)\,\dg\V(x).
\ee
It follows that 
\be\label{Vhat_variation_1}
\Vhat^{-1}(x)\dg\Vhat(x) = \int_0^x T(x,z)(u-1)\left(\dg P_z + [\V^{-1}\dg \V, P_z]\right)_z T(z,x)\, dz + \V^{-1}(x)\dg\V(x).
\ee

The integrand can be rewritten in a useful way. Note first that 
\be
\dg J = \dg[\cV^{-1} d\cV] = d (\cV^{-1}\dg\cV) + [J, \cV^{-1}\dg\cV],
\ee
so, since $[\fk,\fk] \subset \fh$ and $[\fk,\fh] \subset \fk$ 
\be
\dg P = d (\cV^{-1}\dg\cV)_\fk + [Q, (\cV^{-1}\dg\cV)_\fk] + [P, (\cV^{-1}\dg\cV)_\fh] = D (\cV^{-1}\dg\cV)_\fk + [P, (\cV^{-1}\dg\cV)_\fh],
\ee
where $D = d + ad_Q$ is an $H$ covariant derivative on $\fg$ valued fields (with $ad_X Y \equiv [X,Y]$). Thus
\be
\dg P + [\V^{-1}\dg \V, P] = D(\V^{-1}\dg\V)_\fk - [P, (\V^{-1}\dg \V)_\fk],\label{delta_simplification}
\ee
and therefore
\begin{align}
(u - 1)(\dg P & + [\V^{-1}\dg \V, P])\\
& = D((\frac{1}{u} - 1) (\V^{-1}\dg\V)_\fk) + (\frac{1}{u} - 1) [\hat{P}, (\V^{-1}\dg \V)_\fk] - d(\frac{1}{u}) (\V^{-1}\dg\V)_\fk
+ (u - \frac{1}{u})D(\V^{-1}\dg\V)_\fk,\label{delta_simplification1}
\end{align}
where $\hat{P} = uP$ is the $\fk$ component of $\hat{J}$. On $\cN_L$ $u = \frac{1}{\sqrt{1 - \rho/l}}$ so $u - \frac{1}{u} = \frac{u}{l}\rho$ and
$d \frac{1}{u} = - \frac{1}{2} \frac{u}{l}d\rho$ there. It follows that 
\begin{align}
(u - 1)(\dg P + [\V^{-1}\dg \V, P])
& = [D + ad_{\hat{P}}][(\frac{1}{u} - 1) (\V^{-1}\dg\V)_\fk] + \frac{1}{2}\frac{u}{l} \di_z\rho (\V^{-1}\dg\V)_\fk + \frac{u}{l} \rho D_z(\V^{-1}\dg\V)_\fk\\
& = [D + ad_{\hat{P}}][(\frac{1}{u} - 1) (\V^{-1}\dg\V)_\fk] + \frac{u}{l} \sqrt{\rho} D_z[\sqrt{\rho}(\V^{-1}\dg\V)_\fk].\label{delta_simplification2}
\end{align}

The first term in \eqref{delta_simplification2} gives rise to an easily integrable contribution to the integrand of \eqref{Vhat_variation_1}
because for any $\fg$ valued field $X$
\begin{align}
\di_z[T(0,z)X T(z,x)] & = T(0,z)\{\hat{J}_z(z) X - X \hat{J}_z(z) + \di_z X\} T(z,x)\\
		      & = T(0,z)\{[D_z + ad_{\hat{P}_z}]X\} T(z,x).
\end{align}
Equation \eqref{Vhat_variation_1} thus reduces to
\begin{align}
\Vhat^{-1}\dg\Vhat (x) = &\, (\cV^{-1}\dg\cV)_{\shat{\fg}}(x) - \lim_{z \rightarrow 0} T(0,z)(\frac{1}{u} - 1) (\V^{-1}\dg\V)_\fk T(z,x) \nonumber\\
& + \int_0^x T(x,z)\left(\frac{u}{l}\sqrt{\rho} D_z [\sqrt{\rho}(\cV^{-1}\dg\cV)_\fk]\right)_z T(z,x) dz,  \label{Vhat_variation2}
\end{align}
where the inverted caret indicates that the $\fk$ component is divided by $u$: $X_{\shat{\fg}} \equiv X_\fh + \frac{1}{u} X_\fk$.
The limit term vanishes if $(\V^{-1}\dg\V)_\fk$ diverges more slowly than $1/\rho$ as the axis $z = 0$ is approached, 
because  $\frac{1}{u} - 1 = \sqrt{1 - \tfrac{\rho}{l}} - 1$ goes to zero linearly in $\rho$ there and $T(0,z)$ and $T(z,x)$ have 
finite limiting values when $\V$ is regular as we have assumed. Therefore, for such variations 
\be\label{Vhat_variation}
\Vhat^{-1}\dg\Vhat (x) = (\cV^{-1}\dg\cV)_{\shat{\fg}}(x) 
 + \int_0^x T(x,z)\left(\frac{u}{l}\sqrt{\rho} D_z [\sqrt{\rho}(\cV^{-1}\dg\cV)_\fk]\right)_z T(z,x) dz.
\ee
This applies in particular to variations of $\V$ generated by the Poisson bracket \eqref{V_bracket1}, which diverge only as $1/\sqrt{\rho}$ as the
axis is approached.

We are thus ready to calculate the Poisson bracket between the $\Vhat$. 
The formula \eqref{Vhat_variation} expresses logarithmic variations $\Vhat^{-1} \dg\Vhat$ of $\Vhat$ in terms of
logarithmic variations of $\cV$. If we denote the logarithmic bracket
$\oo{\cV}{}^{-1}(x_1)\ot{\cV}{}^{-1}(x_2)\{\oo\cV(x_1),\ot\cV(x_2)\}$ of the field $\cV$ by $\oot A(x_1,x_2) \equiv
\oot A_{\fg\fg}(x_1,x_2)$, then \eqref{Vhat_variation} shows that the logarithmic bracket of the deformed zweibein $\oo\Vhat(x_1)$ with 
the undeformed zweibein $\ot\V(x_2)$ is
\begin{equation}\label{Vhat_V_0}
\oo{\Vhat}{}^{-1}\ot{\V}{}^{-1}\{\oo\Vhat,\ot\V\}= \sA[12]{g}{g}(x_1,x_2) + \int_0^{x_1} 
\oo T(x_1,z) \left(\frac{\ue}{l_1}\sqrt{\rho}\oo D_z [\sqrt{\rho}\A[12]{k}{g}(z,x_2)]\right)_z\oo T(z,x_1) dz.
\end{equation}
The fields inside the round brackets in the integrand are evaluated at $z$, as the subscript indicates, and $l_1$ is of course the deformation parameter of 
$\overset{1}{\Vhat}$, $\overset{1}{u}$, and $\overset{1}{T}$.

Differentiating the expression \eqref{V_bracket1} for the $\fk\fk$ component $\A[12]{k}{k}(z_1,z_2)$ of the logarithmic bracket of the 
undeformed zweibein yields the relations
\bearr
\oo D_{z_1}\left[\sqrt{\rho_1}\A[12]{k}{k}(z_1,z_2)\right]& = & 16\pi G_2\frac{\de(z_1-z_2)}{\sqrt{\rho_2}}\Omk, \label{equAkk1}\\
\ot D_{z_2}\left[\sqrt{\rho_2}\A[12]{k}{k}(z_1,z_2)\right]& = & -16\pi G_2\frac{\de(z_1-z_2)}{\sqrt{\rho_1}}\Omk.\label{equAkk2}
\eearr
The first of these allows us to evaluate the $\A{k}{k}$ contribution to the integral in \eqref{Vhat_V_0}, yielding:
\begin{align}
\oo{\Vhat}{}^{-1}\ot{\V}{}^{-1}\{\oo\Vhat,\ot\V\}= \sA[12]{g}{g}(x_1,x_2)
&+16\pi G_2 \frac{\ue(x_2)}{l_1}\oo T(x_1,x_2)\Omk\oo T(x_2,x_1)\Theta(x_2,x_1)\notag\\
&+\int_0^{x_1} \oo T(x_1,z) \left(\frac{\ue}{l_1}\sqrt{\rho}\oo D_z [\sqrt{\rho}\A[12]{k}{h}(z,x_2)]\right)_z\oo T(z,x_1)dz.\label{Vhat_V_1}
\end{align}
Here $\Theta(x,y) = \int_0^{y} \dg(z-x) dz$ is the distribution corresponding to the step function which is $1$ if $x < y$ and $0$ otherwise.
Note that the Dirac delta and the step $\Theta$ are order $0$ distributions, that is Radon measures, so their products with continuous functions are
well defined. 

In this derivation we have committed the following small sin: The variation of $(\oo u-1) \oo\V \oo P_z \oo\V {}^{-1}$ generated by $\ot\cV(x_2)$ via the Poisson 
bracket is not an integrable function, so 
\eqref{Vhat_variation_00} is not justified. This can be remedied by smearing with a continuous test function of $x_2$ supported on a compact subset of $x_2 > 0$.   
From the expression \eqref{V_bracket1} it is clear that smearing $\A{k}{k}$ in $x_2$ produces a $C^1$ function of $x_1$ which diverges as $1/\sqrt{\rho}$ at the axis. 
Equation \eqref{delta_simplification2} then shows that the corresponding $\dg[(u-1)\V P_z \V^{-1}]$ is integrable, so \eqref{Vhat_variation_00} holds. The calculation 
can then proceed, yielding the $\fg\fk$ component of the bracket $\oo{\Vhat}{}^{-1}\ot{\V}{}^{-1}\{\oo\Vhat,\ot\V\}$ as a distribution in $x_2$.
This is the part of the bracket we will actually use, but we note that the calculation can be justified in the same way for all components of this bracket if a suitable 
gauge, such as upper triangular gauge, is chosen for $\V$ so that all components of $A$ are determined by $\A{k}{k}$. (This does not mean that \eqref{Vhat_V_1} is only valid 
in certain gauges, rather it is in certain gauges that it is evident that the calculation is valid. The result can then be expressed in any gauge, taking always the form 
\eqref{Vhat_V_1}.)

Applying \eqref{Vhat_variation} again, this time to \eqref{Vhat_V_1}, an expression for the logarithmic bracket of $\oo\Vhat(x_1)$ with $\ot\Vhat(x_2)$ is obtained:
\begin{align}
\oo{\Vhat}{}^{-1}\ot{\Vhat}{}^{-1}\{\oo\Vhat,\ot\Vhat\} = &\,\sAs[12]{g}{g}(x_1,x_2)
+16\pi G_2 \frac{1}{l_1}\left(\frac{\ue}{\uz}\right)_{x_2}\oo T(x_1,x_2)\Omk\oo T(x_2,x_1)\Theta(x_2,x_1)\notag\\
&-16\pi G_2 \frac{1}{l_2}\left(\frac{\uz}{\ue}\right)_{x_1}\ot T(x_2,x_1)\Omk\ot T(x_1,x_2)\Theta(x_1,x_2)\notag\\
&+16\pi G_2\int_0^{x_2} \ot T(x_2,z) \left(\frac{\uz}{l_2}\sqrt{\rho}\ot D_z [\frac{\ue}{l_1}\sqrt{\rho}\oo T(x_1,z)\Omk\oo T(z,x_1)\Theta(z,x_1)]\right)_z\ot T(z,x_2)dz\notag\\
&+\int_0^{x_1} \oo T(x_1,z) \left(\frac{\ue}{l_1}\sqrt{\rho}\oo D_z [\sqrt{\rho}\A[12]{k}{h}(z,x_2)]\right)_z\oo T(z,x_1)dz\notag\\
&+\int_0^{x_2} \ot T(x_2,z) \left(\frac{\uz}{l_2}\sqrt{\rho}\ot D_z [\sqrt{\rho}\A[12]{h}{k}(x_1,z)]\right)_z\ot T(z,x_2)dz, \label{braVzero1}
\end{align}
where $l_i$ is the deformation parameter of $\overset{i}{\Vhat}$, $\overset{i}{u}$, and $\overset{i}{T}$.
The relation \eqref{equAkk2} has been used to obtain the second line.

Here we are faced once more with the problem that the variation of $(u-1)\V P_z \V {}^{-1}$, this time generated by $\oo\Vhat$, is not in general 
an integrable function. This can be avoided by smearing $\oo\Vhat$ over $x_1$ or $l_1$, or both, with a test function. Since we are ultimately
interested in the Poisson brackets of the fields $\U(l)$, that is of $\Vh(x;l)$ with $l = \rho(x)$, we will smear with test functions supported 
on such $(x,l)$. However, before undertaking that calculation, in subsection \ref{U_brack}, we will calculate the bracket 
$\oo{\Vhat}{}^{-1}\ot{\Vhat}{}^{-1}\{\oo\Vhat,\ot\Vhat\}$ in the case $x_1 \neq x_2$, $\rho(x_i) \leq l_i$, which can be done without smearing.
This in fact suffices to determine $\{\oo\U(l_1), \ot\U(l_2)\}$ in all cases except $l_1 = l_2$. The task of subsection \ref{U_brack} thus reduces 
essentially to determining the singular distributional component of $\{\oo\U(l_1), \ot\U(l_2)\}$ at $l_1 = l_2$.

Let us therefore evaluate the expression (\ref{braVzero1}) for the bracket $\oo{\Vhat}{}^{-1}\ot{\Vhat}{}^{-1}\{\oo\Vhat,\ot\Vhat\}$ with the 
restriction $x_2 < x_1$,  $\rho(x_i) \leq l_i$. (The bracket can be evaluated for $x_2 > x_1$,  $\rho(x_i) \leq l_i$ in an entirely analogous manner, 
calculating first $\oo{\V}{}^{-1}\ot{\Vhat}{}^{-1}\{\oo\V,\ot\Vhat\}$.) The restrictions on $x_1$ and $x_2$ ensure that 
$\oo{\Vhat}{}^{-1}(x_1)\ot{\V}{}^{-1}(z)\{\oo\Vhat(x_1),\ot\V(z)\}_{\fk\fk}$ is $C^1$ in $z \in (0, x_2]$, or indeed as smooth as $\V$ is, since they 
guarantee that $z$ never gets to the step at $x_1$, and that $\rho(z) \leq \rho(x_2) < \rho(x_1) \leq l_1$. Its only singularity is a $1/\sqrt{\rho(z)}$ 
as $z$ approaches $0$. Thus our calculation of the $\fk\fg$ component of $\oo{\Vhat}{}^{-1}\ot{\Vhat}{}^{-1}\{\oo\Vhat,\ot\Vhat\}$ is valid, by the same 
argument that justified the calculation of $\oo{\Vhat}{}^{-1}\ot{\V}{}^{-1}\{\oo\Vhat,\ot\V\}_{\fg\fk}$ above. Moreover, as in that case  
this justification can be extended to all components of $\oo{\Vhat}{}^{-1}\ot{\Vhat}{}^{-1}\{\oo\Vhat,\ot\Vhat\}$ by adopting a suitable gauge for $\V$ in 
intermediate stages of the calculation.

Our strategy will be to express the integrand of the third term in \eqref{braVzero1} as a total derivative, allowing us to integrate this term in closed form. 
Because $z \leq x_2 < x_1$ the factor in brackets in this term can be written as
\begin{align}\label{third_term}
\left(\frac{\uz}{l_2}\sqrt{\rho}\right.&\left.\ot D_z [\frac{\ue}{l_1}\sqrt{\rho}\oo T(x_1,z)\Omk\oo T(z,x_1)]\vphantom{\frac{\uz}{l_2}\sqrt{\rho}} \right)_z \notag\\
& = \oo T(x_1,z) \left(\frac{\uz}{l_2}\sqrt{\rho}\di_z\left[\frac{\ue}{l_1}\sqrt{\rho}\right]\Omk + \frac{\ue\uz}{l_1l_2}\rho [\oo\Jhat + \ot Q,\Omk] \right)_z \oo T(z,x_1).
\end{align}
To simplify this expression we make use of two identities, 
\be \label{u_identity_1}
\frac{\ue\uz}{l_1 l_2}\rho = -\frac{1}{l_1 - l_2}\left(\frac{\ue}{\uz} - \frac{\uz}{\ue}\right)
\ee
and 
\be\label{u_identity_2}
\frac{\uz}{l_2}\sqrt{\rho} \di_z \left[\frac{\ue}{l_1}\sqrt{\rho}\right] 
= - \frac{1}{l_1 - l_2}\di_z\left(\frac{\ue}{\uz}\right).
\ee
The first identity follows immediately from
\be
\left(\frac{\ue}{\uz} - \frac{\uz}{\ue}\right)\bigg/\frac{\ue\uz}{l_1 l_2}\rho 
= \frac{1}{\rho}l_1l_2\left(\frac{1}{\uz^2} - \frac{1}{\ue^2}\right) = l_1l_2\left(\frac{1}{l_1} - \frac{1}{l_2}\right) = l_2 - l_1.
\ee
To demonstrate the second identity note that
\be
\frac{\uz}{l_2}\sqrt{\rho} \di_z \left(\frac{\ue}{l_1}\sqrt{\rho}\right) + \frac{\ue}{l_1}\sqrt{\rho} \di_z \left(\frac{\uz}{l_2}\sqrt{\rho}\right) =
\di_z\left[\frac{\ue\uz}{l_1 l_2}\rho\right] = - \frac{1}{l_1 - l_2}\di_z\left(\frac{\ue}{\uz} - \frac{\uz}{\ue}\right),
\ee
and that 
\begin{align}
\frac{\uz}{l_2}\sqrt{\rho} \di_z \left(\frac{\ue}{l_1}\sqrt{\rho}\right) - \frac{\ue}{l_1}\sqrt{\rho} \di_z \left(\frac{\uz}{l_2}\sqrt{\rho}\right) 
& = \left[\di_z \ln \left(\frac{\ue}{l_1}\sqrt{\rho}\right) - \di_z \ln \left(\frac{\uz}{l_2}\sqrt{\rho}\right)\right]\frac{\ue\uz}{l_1 l_2}\rho \\
& = - \frac{1}{l_1 - l_2}\left(\frac{\ue}{\uz} - \frac{\uz}{\ue}\right)\left(\frac{\di_z \ue}{\ue} - \frac{\di_z \uz}{\uz}\right) \\
& = - \frac{1}{l_1 - l_2}\di_z\left(\frac{\ue}{\uz} + \frac{\uz}{\ue}\right).
\end{align}
Adding these equations yields \eqref{u_identity_2}.

Substitution of the two identities \eqref{u_identity_1} and \eqref{u_identity_2} reduces \eqref{third_term} to  
\be\label{third_term_2}
- \frac{1}{l_1 - l_2} \oo T(x_1,z) \left(\di_z\left(\frac{\ue}{\uz}\right)\Omk + \left(\frac{\ue}{\uz} - \frac{\uz}{\ue}\right)[\oo\Jhat + \ot Q,\Omk]\right)_z\oo T(z,x_1).
\ee
This expression looks like it is singular at $l_1 = l_2$, but this is not so if $z < x_2 < x_1$. The coefficient of $1/(l_1 - l_2)$ has a regular zero at $l_1 = l_2$ 
which cancels the divergence. The only real divergence is an integrable one at $z = x_2$ when $\rho(x_2) = l_2$.

Now let us analyze the commutator term. $[\oo Q + \ot Q,\Omk] = 0$, because $\Omk$ is invariant under the simultaneous adjoint action of $H$ on both $\oo{\fg}$ and $\ot{\fg}$, so
\be
[\oo\Jh + \ot Q,\Omk] = [\oo\Jh - \oo Q,\Omk] = [\oo{\hat{P}},\Omk].
\ee
Similarly, because $\Omega_\fg$ is invariant under the simultaneous adjoint action of all $G$ on $\oo{\fg}$ and $\ot{\fg}$,
\be	\label{Omega_g_invariant}
0 = [\Pe + \Pz, \Omega_\fg] = [\Pe + \Pz, \Omega_\fh + \Omega_\fk].
\ee
Since
\be
[\Pe, \Omega_\fh]\in \oo{\fk}\otimes\ot{\fh} \qquad [\Pz, \Omega_\fh]\in \oo{\fh}\otimes\ot{\fk}
\qquad [\Pe, \Omega_\fk]\in \oo{\fh}\otimes\ot{\fk} \qquad [\Pz, \Omega_\fk]\in \oo{\fk}\otimes\ot{\fh},
\ee
\eqref{Omega_g_invariant} implies that $[\Pe, \Omega_\fk] = - [\Pz, \Omega_\fh]$ and $[\Pz, \Omega_\fk] = - [\Pe, \Omega_\fh]$. It follows that
\be
\frac{\uz}{\ue}[\oo{\hat{P}}, \Omega_\fk] = -[\ot{\hat{P}}, \Omega_\fh] \qquad \text{and} \qquad \frac{\ue}{\uz}[\ot{\hat{P}}, \Omega_\fk] = -[\oo{\hat{P}}, \Omega_\fh],
\ee
and therefore that
\be\label{uPOmega_id}
\left(\frac{\ue}{\uz} - \frac{\uz}{\ue}\right)[\oo{\hat{P}},\Omk] = \frac{\ue}{\uz}[\oo{\hat{P}},\Omk] + [\ot{\hat{P}},\Omh] 
= \frac{\ue}{\uz}[\oo{\hat{P}} + \ot{\hat{P}},\Omk] + [\oo{\hat{P}} + \ot{\hat{P}},\Omh]
= [\oo\Jh + \ot\Jh, \frac{\ue}{\uz}\Omk + \Omh].
\ee

These results allow us to write the third term of \eqref{braVzero1} as
\begin{align}
- & \frac{16\pi G_2}{l_1 - l_2}\int_0^{x_2}\ot T(x_2,z)\oo T(x_1,z)\left(\di_z\left(\frac{\ue}{\uz}\right)\Omk + [\oo\Jh + \ot\Jh, \frac{\ue}{\uz}\Omk + \Omh]\right)_z 
\oo T(z,x_1)\ot T(z,x_2)\,dz\notag\\
& = - \frac{16\pi G_2}{l_1 - l_2}\int_0^{x_2} \di_z\left[\ot T(x_2,z)\oo T(x_1,z)\left(\frac{\ue}{\uz}\Omk + \Omh\right)_z \oo T(z,x_1)\ot T(z,x_2)\right]\,dz\notag\\
& = \frac{16\pi G_2}{l_1 - l_2}\left\{\ot T(x_2,0)\oo T(x_1,0)\Omg \oo T(0,x_1)\ot T(0,x_2) -  \oo T(x_1,x_2)\left(\frac{\ue}{\uz}\Omk + \Omh\right)_{x_2} \oo T(x_2,x_1)\right\}.
\label{third_term3}
\end{align}

As mentioned earlier, the logarithmic bracket $\oo{\Vhat}{}^{-1}\ot{\Vhat}{}^{-1}\{\oo\Vhat,\ot\Vhat\}$ is also well defined, without smearing, when $x_2 > x_1$ (and $\rho(x_i) \leq l_i$).
To calculate it in this case one first evaluates $\oo{\V}{}^{-1}\ot{\Vhat}{}^{-1}\{\oo\V,\ot\Vhat\}$ in analogy with \eqref{Vhat_V_0} - \eqref{Vhat_V_1}, and then one applies 
\eqref{Vhat_variation} to the variation of $\oo\V$ generated by $\ot\Vhat$ to obtain $\oo{\Vhat}{}^{-1}\ot{\Vhat}{}^{-1}\{\oo\Vhat,\ot\Vhat\}$. 
It follows immediately from the antisymmetry of the bracket of the undeformed zweibein $\V$ that 
\be 
\oo{\V}{}^{-1}\ot{\Vhat}{}^{-1}\{\oo\V,\ot\Vhat\} = - \ot{\Vhat}{}^{-1}\oo{\V}{}^{-1}\{\ot\Vhat,\oo\V\}.
\ee
But applying \eqref{Vhat_variation} to the bracket on the right side of this equation yields precisely the expression \eqref{braVzero1} with the roles of $1$ and $2$ 
reversed. Because $x_1 < x_2$ our argument for the validity of this expression for the bracket applies, as does our calculation of the third term in \eqref{braVzero1}. 

In conclusion, the bracket of the deformed zweibeine $\oo\Vhat$ and $\ot\Vhat$ is well defined without smearing when $x_2 \neq x_1$ and $\rho(x_i) \leq l_i$, and it 
is antisymmetric under interchange of $1$ and $2$. Explicitly the logarithmic bracket is 
\begin{align}
\oo{\Vhat}{}^{-1}(x_1)\ot{\Vhat}{}^{-1}(x_2)&\{\oo\Vhat(x_1),\ot\Vhat(x_2)\}= \sAs[12]{g}{g}(x_1,x_2)\notag\\
&+16\pi G_2 \frac{1}{l_1}\left(\frac{\ue}{\uz}\right)_{x_2}\oo T(x_1,x_2)\Omk\oo T(x_2,x_1)\Theta(x_2,x_1)\notag\\
&-16\pi G_2 \frac{1}{l_2}\left(\frac{\uz}{\ue}\right)_{x_1}\ot T(x_2,x_1)\Omk\ot T(x_1,x_2)\Theta(x_1,x_2)\notag\\
&-\frac{16\pi G_2}{l_1-l_2}\Tt(x_2,x_1)\left(\left(\frac{\uz}{\ue}\right)_{x_1}\Omk+\Omh\right)\Tt(x_1,x_2)\Theta(x_1,x_2)\notag\\
&-\frac{16\pi G_2}{l_1-l_2}\To(x_1,x_2)\left(\left(\frac{\ue}{\uz}\right)_{x_2}\Omk+\Omh\right)\To(x_2,x_1)\Theta(x_2,x_1)\notag\\
&+\frac{16\pi G_2}{l_1-l_2}\To(x_1,0)\Tt(x_2,0)\Omg\To(0,x_1)\Tt(0,x_2)\notag\\
&+\int_0^{x_1} \oo T(x_1,z) \left(\frac{\ue}{l_1}\sqrt{\rho}\oo D_z [\sqrt{\rho}\A[12]{k}{h}(z,x_2)]\right)_z\oo T(z,x_1)dz\notag\\
&+\int_0^{x_2} \ot T(x_2,z) \left(\frac{\uz}{l_2}\sqrt{\rho}\ot D_z [\sqrt{\rho}\A[12]{h}{k}(x_1,z)]\right)_z\ot T(z,x_2)dz. \label{bravhrhozero}
\end{align}
This expression is almost entirely algebraic. The integrals that remain in \eqref{bravhrhozero} involve only the gauge components $\Ahk$ and $\Akh$ of $A$.
They cannot be evaluated without fixing a particular gauge, but on the other hand they do not influence the brackets of the deformed conformal metric $\E$, 
which is gauge invariant. 

The restriction $x_1 \neq x_2$ is in fact not essential for the validity of \eqref{bravhrhozero}. If $\rho(x_i)$ is strictly smaller than $l_i$ for one of the arguments 
$\ove{i}{\Vhat}$ then \eqref{bravhrhozero} holds as a distributional equality, without the restriction $x_1 \neq x_2$. This is easily established by calculating the bracket 
with $\ove{i}{\Vhat}$ smeared with a test function supported on $x_i < \rho^{-1}(l_i)$. The only cases that are really excluded are $l_1 = \rho(x_1) = \rho(x_2) = l_2$, 
and of course the case in which one or both of the $\rho(x_i)$ strictly exceeds $l_i$ - a case we have not attempted to address here. 

Equation \eqref{bravhrhozero} suffices for the calculation of the brackets $\{\oo\U(l_1),\ot\U(l_2)\}$ and $\{\oo\E(l_1),\ot\E(l_2)\}$ save in the case $l_1 = l_2$, which 
will be treated in the next subsection.

\subsection{The Poisson brackets of $\U$}\label{U_brack}

Recall that the deformed conformal metric at $r \in \cN_L$ is $\E(r)=\U(r)\U^t(r) = \Vh(r;r)\Vh^t(r;r)$ 
(see section \ref{transformation}). The Poisson bracket between the deformed conformal metric
at deformation point $r_1$ and at deformation point $r_2$ is therefore
\begin{equation}\label{E_bracket0}
\{\oo \E,\ot \E \} = \{\oo \U,\ot \U\} \oo \U{}^t \ot \U{}^t + \oo \U\{\oo \U{}^t,\ot \U\}\ot \U{}^t 
+ \ot \U\{\oo \U,\ot \U{}^t\}\oo \U{}^t
 + \oo \U \ot \U\{\oo \U{}^t,\ot \U{}^t\}.  
\end{equation}

This sum is nothing but ($4$ times) the symmetrization of the first term on the indices of space $1$ and on the 
indices of space $2$. It is therefore easily evaluated in terms of the logarithmic bracket 
$\oo \U{}^{-1}\ot \U{}^{-1}\{\oo \U,\ot \U\}$: If $\oot C$ denotes this logarithmic bracket, then the first term  in \eqref{E_bracket0} is 
$\oo \U\ot \U \oot C \oo \U{}^t\ot \U{}^t$, and the bracket of the deformed conformal metrics is
\be\label{E_bracket1}
\{\oo \E,\ot \E \} = \oo \U\ot \U[\oot C + {}^t \oot C + \oot C{}^t + {}^t \oot C {}^t]\oo \U{}^t\ot \U{}^t,
\ee
where a pre-superscript $t$ indicates transposition in space $1$ while a post-superscript $t$ indicates 
transposition in space $2$. Now note that $C$ is an element of $\fg\otimes\fg$ and that $\sk$ is the symmetric 
subspace of $\fg = \mathfrak{sl}(2, \Real)$: $\tfrac{1}{2}(a + a^t)$ is the $\sk$ component of $a \in \fg$. Thus 
\be\label{E_bracket2}
\{\oo \E,\ot \E \}= 4\oo \U\ot \U \oot C_{\sk\sk} \,\oo \U{}^t\ot \U{}^t.
\ee
In this subsection we calculate $\oot C_{\sk\sk} = [\oo \U{}^{-1}\ot \U{}^{-1}\{\oo \U,\ot \U\}]_{\sk\sk}$. 

An expression for $\oot C_{\sk\sk}$ can be obtained from equation \eqref{bravhrhozero} for the logarithmic bracket 
of $\Vhat$ by setting the field points of $\Vhe$ and $\Vhz$ equal to their deformation points and projecting to $\fk\otimes\fk$.
Projecting to $\fk$ in both space $1$ and space $2$ immediately eliminates the gauge dependent integrals in lines 
7 and 8 of \eqref{bravhrhozero}. It also eliminates the terms containing factors of $\Omh$ in lines 4 and 5, and 
the $\fh$ contributions to $\Asgg$ in line 1. (It does not eliminate the $\Omh$ term in line 6 because there $\Omh$ 
is multiplied by parallel transport matrices in both space $1$ and space $2$.) Further terms vanish because 
$1/\ue(x_1)$ and $1/\uz(x_2)$ are zero when the field points coincide with the corresponding deformation points:
When this is the case $\rho(x_i) = l_i$, so $1/\overset{i}{u}(x_i)=\sqrt{1 - \frac{l_i}{l_i}}=0$.
Meanwhile the values $\ue(x_2)=\frac{1}{\sqrt{1 - \frac{l_2}{l_1}}}$ and $\uz(x_1)=\frac{1}{\sqrt{1 - \frac{l_1}{l_2}}}$
are finite, assuming $l_1 \neq l_2$. All that is left, ultimately, is the projection on $\fk\otimes\fk$ of line 6:
\be \label{U_brack_0}
\oot C_{\sk\sk}=\frac{16\pi G_2}{l_1-l_2}\left[\To(x_1,0)\Tt(x_2,0)\Om_\fg\To(0,x_1)\Tt(0,x_2)\right]_{\sk\sk}.
\ee

This expression is not defined when $l_1 = l_2$. In fact it cannot even be interpreted as a distribution on any domain that includes the 
line $l_1 = l_2$, because the function $\frac{1}{l_1-l_2}$ is not locally integrable and therefore does not define a distribution. 
However, if one takes care to smear the logarithmic variations of $\oo \U$ and $\ot \U$ with test functions throughout the calculation of 
the bracket one does obtain a distribution on the domain $l_1 > 0, l_2 > 0$, namely
\be \label{U_brack_1}
\oot C_{\sk\sk} = 16\pi G_2\: p.v.\left(\frac{1}{l_1-l_2}\right)\left[\To(x_1,0)\Tt(x_2,0)\Om_\fg\To(0,x_1)\Tt(0,x_2)\right]_{\sk\sk},
\ee
where $p.v.(\frac{1}{l_1 - l_2})$ is the Cauchy principal value of $\frac{1}{l_1 - l_2}$, a distribution defined by 
\be \label{Cauchy_pv_def}
\int_{\Real^2} f(l_1,l_2)\:p.v.\left(\frac{1}{l_1-l_2}\right)d^2l = \lim_{\eg\rightarrow 0^+} \int_{|l_1-l_2|>\eg} 
f(l_1,l_2)\frac{1}{l_1-l_2}\,d^2l.
\ee

Note that the right side of this definition may be rewritten as
\begin{align}
 \lim_{\eg\rightarrow 0^+} \int_{l_1 > l_2 + \eg} \frac{ f(l_1,l_2) -  f(l_2,l_1)}{l_1-l_2}\,d^2l = & \int_{l_1 > l_2} \frac{ f(l_1,l_2) -  f(l_2,l_1)}{l_1-l_2}\,d^2l \\
	     = & \int_{\Real^2} \tfrac{1}{2}[f(l_1,l_2) - f(l_2,l_1)]\:\frac{1}{l_1-l_2}d^2l. \label{Cauchy_pv_def2}
\end{align}
That is, it is just $\frac{1}{l_1-l_2}$ integrated against the antisymmetric component of the test function $f$. 

Note also that $p.v.\left(\frac{1}{l_1-l_2}\right)$ is an order 1 distribution: it is a continuous linear functional on compactly supported $C^1$ 
test functions, not just on smooth ones. Its products with $C^1$ functions are therefore defined. This means that the product \eqref{U_brack_1} is 
defined as a distribution provided $T(\mathbf{0},r;r)$ is $C^1$ in $l = \rho(r)$. And it is, as we shall now see, because $\V$ and $\rho$ are 
assumed to be smooth on $\cN_L$. 

In fact, the smoothness of $\V$ as a function of $\rho$ implies that the holonomy $T(p,r;r)$ is smooth in the variables 
$s = \sqrt{1 - \frac{\rho(p)}{l}}$ and $l$: Since $\rho$ is smooth and monotonic it can be used as a chart on $\cN_L$
\footnote{We have avoided using this chart to keep the field $\rho$ visible in our formalism but we will make an exception in this subsection.}
and $T(\mathbf{0},r;r)$ may be expressed as a function of $\rho(r) = l$ and $\rho(p)$ which will be called just $\rho$ here. 
This function $T(\rho,l;l)$ is determined by the initial value problem 
\begin{equation}
 T(l,l;l) = \One \qquad\qquad \left(\frac{\di T}{\di \rho}\right)_l = - [Q_\rho(\rho) + u P_\rho(\rho)]\:T, 
\end{equation}
where $Q_\rho$ and $P_\rho$ are the $\fh$ and $\fk$ components of $\V^{-1}\di_\rho \V$, which are smooth in $\rho$ because $\V$ is. 
Changing to the variables $s, l$ the initial value problem becomes
\begin{equation}
 T(0,l;l) = \One \qquad\qquad \left(\frac{\di T}{\di s}\right)_l = 2l[ s Q_\rho(l[1 - s^2]) + P_\rho(l[1 - s^2])]\:T. 
\end{equation}
Since the right side is $C^\infty$ in $s$, $l$ and $T$, it follows that the solution is $C^\infty$ in $s$ and $l$. See for instance Ch. II, Sec. 4 of 
\cite{Lefschetz}.
 
Let us demonstrate \eqref{U_brack_1}. As a first step consider the logarithmic variation of $\U$ corresponding to a given smooth variation of $\V$, smeared with a smooth
$\fk$ valued test function $\varphi$ of $l$. By \eqref{Vhat_variation} it takes the value
\be\label{U_smeared_variation}
\int_0^\infty {\rm tr}[\varphi\, \U^{-1}\dg \U]\, dl = \int_0^\infty dl \int_0^l d\rho\, 
{\rm tr}[T(\rho,l)\varphi T(l,\rho)\frac{u}{l}\sqrt{\rho} D_\rho (\sqrt{\rho}(\cV^{-1}\dg\cV)_\fk)].
\ee
We shall suppose that the support of $\varphi$ is compact and excludes $l = 0$. 

The integrand is integrable because it is the product of a continuous function of compact support in the the domain of integration with a locally integrable function, $u$, 
so the order of integration may be reversed giving
\be\label{U_smeared_variation1}
\int_0^\infty {\rm tr}[\varphi\, \U^{-1}\dg \U]\, dl = \int_0^\infty d\rho\,
{\rm tr}[F \sqrt{\rho} D_\rho (\sqrt{\rho}(\cV^{-1}\dg\cV)_\fk) ],
\ee
with 
\begin{equation}\label{F_def}
  F(\rho) = \int_\rho^\infty  T(\rho,l)\varphi(l) T(l,\rho)\frac{u}{l}\, dl.
\end{equation}

Suppose now that the variation of $\V$ is that generated via the Poisson bracket by $\ot\V$ at a point $x_2$. Then, by \eqref{equAkk1}
\begin{equation}\label{U_smeared_variation2}
 \int_0^\infty \oo{\rm tr}[\oo \varphi \,\oo \U{}^{-1} \ot \V{}^{-1}\{\oo \U, \ot \V\}]\, dl_1 = 16 \pi G_2\oo{\rm tr}[\oo F(\rho(x_2))\oot\Omega_\fk].
\end{equation}
(To better distinguish it from $\ot \V$, $\V$ has been labeled with a $1$, as have the remaining variables appearing in the left side of 
\eqref{U_smeared_variation1} and the representation space in which $\V$ acts.) Of course the variation of $\V$ generated by $\ot\V(x_2)$ is not
smooth, it must be smeared in $x_2$ before \eqref{U_smeared_variation1} can be applied, so \eqref{U_smeared_variation2} is a distributional equation.

Equation \eqref{U_smeared_variation2} provides an expression for the logarithmic variation of $\ot \V$ generated via the Poisson bracket by 
the logarithmic gradient $\oo\U{}^{-1} \dI \,\oo \U$ of $\oo \U$ on phase space smeared with $\oo \varphi$. Substituting this logarithmic variation into 
\eqref{U_smeared_variation1} yields the expression
\begin{equation}\label{smeared_U_bracket0}
  \int_0^\infty dl_2 \int_0^\infty dl_1 \ \oot{\rm tr}[\oo\varphi \ot\varphi\,\oo \U{}^{-1} \ot \U{}^{-1}\{\oo \U, \ot \U\}]  
  = 16 \pi G_2\int_0^\infty d\rho\  \oot {\rm tr}[\ot F \sqrt{\rho} \ot D_\rho [\sqrt{\rho}\oo F\oot\Omega_\fk] ]
\end{equation}
for the logarithmic bracket of $\oo\U$ and $\ot\U$ smeared with test functions $\oo\varphi$ and $\ot\varphi$. Note that at any given phase space 
point $p$ the smeared logarithmic bracket on the left side of the equation may be viewed as the bracket between the smeared fields 
$\int_0^{\infty} \overset{i}{{\rm tr}}[\overset{i}{\theta}\, \overset{i}{\U}] dl_i$ with $\overset{i}{\theta} = \overset{i}{\varphi}\,\overset{i}{\U}(p){}^{-1}$, 
which is perhaps a more conventional way of defining a smeared Poisson bracket.

In the preceding calculation \eqref{Vhat_variation} was assumed to be valid be for the logarithmic variation of $\ot \V$ defined by \eqref{U_smeared_variation2},
which requires that $F_\fk$ is sufficiently regular. In fact $F$ is a $C^1$ function of $\rho$ on its entire domain $[0,\infty)$, which is more than sufficient. 
To see this consider first the case $\rho < l_{min}/2$, where $l_{min} > 0$ is the minimum of $l$ in the support of $\varphi$. For such $\rho$ the continuity of $dF/d\rho$
is easily established using the fact that both the integrand of \eqref{F_def} and its derivative in $\rho$ are continuous, and that the support of $\varphi$ is compact.
If on the other hand $\rho > 0$ $F$ may be reexpressed as an integral over $a = l/\rho$:
\begin{equation}\label{F_2}
  F(\rho) = \int_1^\infty \phi \frac{u}{a}\, da
\end{equation}
with $\phi(\rho,a) =  T(\rho,\rho a;\rho a)\varphi(\rho a) T(\rho a,\rho; \rho a)$. Since $T$ is a $C^1$ function of $s$ and $l$, it is also a $C^1$ function of $\rho$ and 
$t = \sqrt{a - 1}$, because $s = t/\sqrt{t^2 + 1}$ and $l = \rho (t^2 + 1)$ are smooth functions of $\rho$ and $t$. It follows that $\di_\rho \phi$ is continuous in $\rho$ and $a$.
Since it is also of compact support in $a$, and $u = (1 -a^{-1})^{-1/2}$ is locally integrable, the $\rho$ derivative of the integrand $\phi \frac{u}{a}$ of \eqref{F_2} is integrable 
over the two dimensional domain $a \in [1, \infty)$, $\rho \in [\rho_1, \rho_2]$, where $\rho_2 \geq \rho_1 > 0$. Thus 
\begin{equation}\label{F_3}
  F(\rho_2) = F(\rho_1) + \int_1^\infty da\,\int_{\rho_1}^{\rho_2} d\rho \di_\rho\phi \frac{u}{a},
\end{equation}
and by Fubini's theorem the order of integration may be reversed. It follows that the $\rho$ derivative of $F$ is the integral of $\di_\rho\phi \frac{u}{a}$
and, by dominated convergence, that $dF/d\rho$ is continuous.

To demonstrate \eqref{U_brack_1} we must evaluate the right side of \eqref{smeared_U_bracket0}, $16\pi G_2 \int_0^\infty H\,d\rho$ with  
\begin{equation}
 H = \oot {\rm tr}[\ot F \sqrt{\rho} \ot D_\rho [\sqrt{\rho}\oo F\Omega_\fk]] 
 = \oot{\rm tr}[\oot \Omega_\fk\left(\tfrac{1}{2}\oo F \ot F + \rho \ot F \frac{d}{d\rho} \oo F + \rho \ot F [\oo Q, \oo F]\right)]
\end{equation}

Note that $H$ is integrable because the $\overset{i}{F}$ are $C^1$ and of compact support, so $\int_0^\infty H d\rho = \lim_{\rho_0 \rightarrow 0^+} \int_{\rho_0}^\infty H d\rho$.  
When $\rho_0 > 0$ the integral $I(\rho_0) \equiv \int_{\rho_0}^\infty H d\rho$ may be expressed as an integral over $\rho$ and $a_1$ and $a_2$ with $a_i = l_i/\rho$:
\begin{equation}\label{I_rho_0}
 I(\rho_0) = \int_{\rho_0}^\infty d\rho \int_1^\infty da_2 \int_1^\infty da_1 
 \ \oot{\rm tr}[\Omega_\fk\left(\tfrac{1}{2}\oo\phi \ot\phi + a_1 \di_{a_1} \oo\phi \ot\phi - \rho [\oo \Phat, \oo\phi \ot\phi]\right)]
 \frac{\oo u \ot u}{a_1 a_2}.
\end{equation}
Here we have used once more the fact that the $\rho$ derivative may be taken inside the integral \eqref{F_2} for $F$, so that $dF/d\rho$ is the integral of
\begin{equation}
 [\di_\rho\phi]_a \frac{u}{a} = \left([\di_\rho \phi]_l + \left(\frac{\di l}{\di \rho}\right)_a [\di_l \phi]_\rho \right)\frac{u}{a} 
                          = \left(-[\Jhat_\rho, \phi] + \frac{a}{\rho} [\di_a \phi]_\rho \right)\frac{u}{a}. 
\end{equation}

The integrand of \eqref{I_rho_0} is also integrable because it is the product of a continuous function of compact support and a locally integrable factor 
$\oo u \ot u$: The first term in the bracket is continuous, and the remaining two terms sum to $\rho \left([\di_\rho \oo\phi]_a + [\oo Q{}_\rho,\oo\phi]\right) \ot\phi$ 
which is also continuous; the integrand has compact support in the domain $\rho\geq \rho_0 > 0$ and $a_i \geq 1$ because $l_i = \rho a_i$ are bounded in the supports 
of $\overset{i}{\varphi}$. It follows that $I(\rho_0)$ can be expressed as the limit of the integral 
\begin{equation}\label{I_rho_0_epsilon}
 I_\eg(\rho_0) = \int_{\rho_0}^\infty d\rho \int_{1 + \eg}^\infty da_2 \int_{1 + \eg}^\infty da_1 
 \ \oot{\rm tr}[\Omega_\fk\left(\tfrac{1}{2}\oo\phi \ot\phi + a_1 \di_{a_1} \oo\phi \ot\phi - \rho [\oo \Phat, \oo\phi \ot\phi]\right)]
 \frac{\oo u \ot u}{a_1 a_2},
\end{equation}
as $\eg > 0$ tends to zero.

When $a_1 \neq a_2$ the integrand can be shown to be a divergence using two identities:  
\begin{align}
 \left(\tfrac{1}{2}\oo\phi \ot\phi + a_1 \di_{a_1} \oo\phi \ot\phi \right)\frac{\oo u \ot u}{a_1 a_2}
      = & -  \frac{1}{a_1 - a_2} \frac{\oo u}{\ot u} [ 1 +  a_1 \di_{a_1} +  a_2 \di_{a_2}] \oo\phi \ot\phi \nonumber \\
        & + \di_{a_1}\left(\frac{a_1}{a_1 - a_2}\frac{\ot u}{\oo u}\oo\phi\, \ot\phi \right) 
        + \di_{a_2}\left(\frac{a_2}{a_1 - a_2}\frac{\oo u}{\ot u}\oo\phi\, \ot\phi \right)      
\end{align}
and 
\begin{equation}
 [\oo \Phat, \Omega_\fk]\frac{\oo u \ot u}{a_1 a_2} = - \frac{1}{a_1 - a_2}[\oo\Jhat + \ot\Jhat, \frac{\oo u}{\ot u} \Omega_\fk + \Omega_\fh].
\end{equation}
The second follows immediately from \eqref{u_identity_1} and \eqref{uPOmega_id}. The first is related to \eqref{u_identity_1} and \eqref{u_identity_2} but is most easily 
verified by expanding the derivatives on the right side.

Applying these identities to the integrand of \eqref{I_rho_0_epsilon} one obtains 
\begin{align}
\lefteqn{ \oot{\rm tr}[\left(\tfrac{1}{2}\oo\phi \ot\phi + a_1 \di_{a_1} \oo\phi \ot\phi - \rho [\oo \Phat, \oo\phi \ot\phi]\right)\Omega_\fk]
 \frac{\oo u \ot u}{a_1 a_2}}\qquad\qquad\qquad\qquad & \nonumber\\ 
 = & - \frac{1}{a_1 - a_2} \frac{\oo u}{\ot u} [ 1 +  a_1 \di_{a_1} +  a_2 \di_{a_2}] \oot{\rm tr}[\oo\phi \ot\phi \Omega_\fk]  
     + \frac{1}{a_1 - a_2} \rho \:\oot{\rm tr}[[\oo\Jhat + \ot\Jhat, \oo\phi \ot\phi] \left(\frac{\oo u}{\ot u} \Omega_\fk + \Omega_\fh\right)]\nonumber \\
   & + \di_{a_1}\left(\frac{a_1}{a_1 - a_2}\frac{\ot u}{\oo u}\:\oot{\rm tr}[\oo\phi\, \ot\phi \Omega_\fk] \right) 
     + \di_{a_2}\left(\frac{a_2}{a_1 - a_2}\frac{\oo u}{\ot u}\:\oot{\rm tr}[\oo\phi\, \ot\phi \Omega_\fk] \right)\\
 = & - \di_\rho \left(\frac{1}{a_1 - a_2} \rho \:\oot{\rm tr}[\oo\phi \ot\phi \left(\frac{\oo u}{\ot u}  \Omega_\fk +  \Omega_\fh \right)]\right)
     + \frac{1}{a_1 - a_2} [ 1 +  a_1 \di_{a_1} +  a_2 \di_{a_2}] \oot{\rm tr}[\oo\phi \ot\phi \Omega_\fh] \nonumber \\
   & + \di_{a_1}\left(\frac{a_1}{a_1 - a_2}\frac{\ot u}{\oo u}\:\oot{\rm tr}[\oo\phi\, \ot\phi  \Omega_\fk] \right) 
     + \di_{a_2}\left(\frac{a_2}{a_1 - a_2}\frac{\oo u}{\ot u}\:\oot{\rm tr}[\oo\phi\, \ot\phi  \Omega_\fk] \right) \\
 = & - \di_\rho \left(\frac{1}{a_1 - a_2} \rho \:\oot{\rm tr}[\oo\phi \ot\phi \left(\frac{\oo u}{\ot u}  \Omega_\fk +  \Omega_\fh \right)]\right) \nonumber\\
   & + \di_{a_1}\left(\frac{a_1}{a_1 - a_2} \oot{\rm tr}[\oo\phi\, \ot\phi \left(\frac{\ot u}{\oo u}  \Omega_\fk +  \Omega_\fh\right)] \right) 
     + \di_{a_2}\left(\frac{a_2}{a_1 - a_2} \oot{\rm tr}[\oo\phi\, \ot\phi\left(\frac{\oo u}{\ot u}  \Omega_\fk +  \Omega_\fh\right)] \right)\label{simplified_integrand}
\end{align}

Note that the integrand of \eqref{I_rho_0_epsilon} is linear in the product $\varphi_1$ and $\varphi_2$. In the following we will resolve this product into an antisymmetric 
component $A$ and a symmetric component $S$ under the interchange of the two test functions $\varphi_1$ and $\varphi_2$, 
\begin{align}
 A(\rho;l_1,l_2) = & \oo T(\rho, l_1)\ot T(\rho, l_2) \frac{1}{2}\left(\oo\varphi_1(l_1) \ot\varphi_2(l_2) - \oo\varphi_2(l_1) \ot\varphi_1(l_2)\right) \ot T(l_2,\rho) \oo T(l_1,\rho)\\
 S(\rho;l_1,l_2) = & \oo T(\rho, l_1)\ot T(\rho, l_2) \frac{1}{2}\left(\oo\varphi_1(l_1) \ot\varphi_2(l_2) + \oo\varphi_2(l_1) \ot\varphi_1(l_2)\right) \ot T(l_2,\rho) \oo T(l_1,\rho),
\end{align}
and we will treat the two corresponding components of the integrand differently. The $A$ component will be expressed in the form \eqref{simplified_integrand}, that is as
\begin{align}
 \oot{\rm tr}[\left(\tfrac{1}{2}A + a_1 \di_{a_1} A - \rho [\oo \Phat, A]\right)\Omega_\fk]\frac{\oo u \ot u}{a_1 a_2} =
 & - \di_\rho \left( \rho \:\oot{\rm tr}[\frac{A}{a_1 - a_2} \left(\frac{\oo u}{\ot u}  \Omega_\fk +  \Omega_\fh \right)]\right) \nonumber\\
  & + \di_{a_1}\left(a_1 \oot{\rm tr}[\frac{A}{a_1 - a_2} \left(\frac{\ot u}{\oo u}  \Omega_\fk +  \Omega_\fh\right)] \right) 
    + \di_{a_2}\left(a_2 \oot{\rm tr}[\frac{A}{a_1 - a_2} \left(\frac{\oo u}{\ot u}  \Omega_\fk +  \Omega_\fh\right)] \right),\label{simplified_integrandA}
\end{align}
while the $S$ component will be simplified in a different way. 

Notice that $A/(a_1 - a_2)$ is smooth in $\rho$, $a_1$, and $a_2$ when $\rho> 0, a_i>1$ since 
$(\oo\varphi_1(\rho a_1) \ot\varphi_2(\rho a_2) - \oo\varphi_2(\rho a_1) \ot\varphi_1(\rho a_2))/(a_1 - a_2)$ is smooth. Both sides of \eqref{simplified_integrandA} are therefore 
continuous at $a_1 = a_2$, demonstrating that this equation holds in the entire domain of integration of \eqref{I_rho_0}, including this line.  
Since the $A$ component of the integrand of \eqref{I_rho_0} is a divergence, and the derivand in each term is smooth except at $a_1 = 1$ and $a_2 = 1$ which lie outside the domain of 
integration, the $A$ component of the integral $I_\eg(\rho_0)$ is a sum of boundary terms at $\rho = \rho_0$, $a_1 = 1 + \eg$, and $a_2 = 1 + \eg$. 

The latter two of these boundary terms vanish as $\eg$ tends to zero: Consider for instance the boundary term at $a_1 = 1 + \eg$,
\begin{equation}
 - \int_{\rho_0}^\infty d\rho \int_{1 + \eg}^\infty da_2  
 \ a_1 \oot{\rm tr}[\frac{A}{a_1 - a_2} \left(\frac{\ot u}{\oo u}  \Omega_\fk +  \Omega_\fh\right)]. 
\end{equation}
The $\Omega_\fk$ term in the integrand is the product of $\ot u = (1 - a_2^{-1})^{-\frac{1}{2}}$, which is locally integrable and independent of $a_1$, and a factor which is continuous in 
$a_1$, $a_2$ and $\rho$ and of compact support in the domain of integration. The $\Omega_\fh$ term is continuous and compactly supported. The whole integrand is thus bounded by an integrable 
$a_1$ independent function. It follows by dominated convergence that the limit of the integral is the integral of the limiting value of the integrand on the plane 
$a_1 = 1$, $a_2 > 1$, $\rho > \rho_0$, and this is zero. 

The $A$ contribution to $I(\rho_0)$ is therefore just the limit as $\eg \rightarrow 0^+$ of the $\rho = \rho_0$ boundary term,
\begin{equation}\label{A_rho_integral}
 \int_{1 + \eg}^\infty da_2 \int_{1 + \eg}^\infty da_1\  \rho_0 \:\oot{\rm tr}[\frac{A}{a_1 - a_2} \left(\frac{\oo u}{\ot u}  \Omega_\fk +  \Omega_\fh \right)]  
 = \int_{\rho_0(1 + \eg)}^\infty dl_2 \int_{\rho_0(1 + \eg)}^\infty dl_1\ \oot{\rm tr}[\frac{A}{l_1 - l_2} \left(\frac{\oo u}{\ot u}  \Omega_\fk +  \Omega_\fh \right)],
\end{equation} 
and the $A$ component of $\int_0^\infty H d\rho$ is obtained by taking the limit of this expression as both $\rho_0, \eg >0$ approach $0$. To evaluate this limit note that if 
$\rho_0(1 + \eg) < l_{min\,12}$, the minimum value attained by $l$ in the supports of the test functions $\varphi_1$ and $\varphi_2$, then the lower limits of integration in 
\eqref{A_rho_integral} may as well be set to $0$. Furthermore, if $\rho_0 < l_{min\,12}/2$ then $\frac{\oo u}{\ot u} < \sqrt{2}$. Thus for sufficiently small $\rho_0$, 
the integrand is bounded by a $\rho_0$ independent integrable function, and by dominated convergence the $\rho_0 \rightarrow 0^+$ limit of the integral is 
\begin{align}
\lefteqn{\int_0^\infty dl_2 \int_0^\infty dl_1\ \oot{\rm tr}[\frac{A(0;l_1,l_2)}{l_1 - l_2} (\Omega_\fk + \Omega_\fh)]} \qquad\qquad\qquad &\nonumber \\
= & \int_0^\infty dl_2 \int_0^\infty dl_1\ \frac{1}{l_1 - l_2}\: \oot{\rm tr}[\frac{1}{2}\left(\oo\varphi_1(l_1) \ot\varphi_2(l_2) - \oo\varphi_2(l_1) \ot\varphi_1(l_2)\right) 
\ot T(l_2,0) \oo T(l_1,0)  \Omega_\fg \oo T(0, l_1)\ot T(0, l_2)].\label{A_rho_integral3}
\end{align}

If we set $f(l_1, l_2) = \oot{\rm tr}[\oo\varphi_1(l_1) \ot\varphi_2(l_2) \ot T(l_2,0) \oo T(l_1,0)  \Omega_\fg \oo T(0, l_1)\ot T(0, l_2)]$ then, because $\Omega_\fg$
is symmetric with respect to interchange of the spaces $1$ and $2$, the left side of \eqref{A_rho_integral3} becomes
\begin{equation}
 \int_0^\infty dl_2 \int_0^\infty dl_1\ \tfrac{1}{2}[f(l_1,l_2) - f(l_2,l_1)]\frac{1}{l_1 - l_2} = \int_0^\infty dl_2 \int_0^\infty dl_1\ p. v. \left(\frac{1}{l_1 - l_2}\right)\:f(l_1,l_2)
\end{equation}
(by \eqref{Cauchy_pv_def2}).

This is of course the result we expect for the whole integral $\int_0^\infty H d\rho$, not just the $A$ component. But the $S$ component vanishes.
The $S$ component of $I_\eg(\rho_0)$ is 
\begin{equation}
 \int_{\rho_0}^\infty d\rho \int_{1 + \eg}^\infty da_2 \int_{1 + \eg}^\infty da_1
 \ \oot{\rm tr}[\Omega_\fk\left(\tfrac{1}{2}S + a_1 \di_{a_1} S - \rho [\oo \Phat, S]\right)] \frac{\oo u \ot u}{a_1 a_2}. 
\end{equation}
Since the domain of integration is symmetric under interchange of $a_1$ and $a_2$ we may symmetrize the integrand with respect to this interchange without changing the integral. 
Similarly the trace with $\Omega_\fk$ is symmetric under the interchange of the spaces $1$ and $2$, so we may also symmetrize 
$\tfrac{1}{2}S + a_1 \di_{a_1} S - \rho [\oo \Phat, S]$ with respect to this interchange without changing the result. The integrand may therefore be replaced by 
\begin{equation}
 \frac{1}{2} \oot{\rm tr}[\Omega_\fk\left([1 + a_1 \di_{a_1} + a_2 \di_{a_2}] S - \rho [\oo \Phat + \ot \Phat, S]\right)] \frac{\oo u \ot u}{a_1 a_2}.
\end{equation}
The commutator term may also be written as $- \rho [\oo \Jhat + \ot \Jhat, S]$, since $[\oo Q + \ot Q, \Omega_\fk] = 0$, so the symmetrized integrand equals
$\tfrac{1}{2}\di_\rho \left( \rho\: \oot{\rm tr}[\Omega_\fk S] \frac{\oo u \ot u}{a_1 a_2}\right)_{a_1,a_2}$. Applying once more the divergence theorem we see that the $S$ component of 
$I_\eg(\rho_0)$ consists of just a boundary term at $\rho = \rho_0$: 
\begin{equation}
 - \frac{1}{2}  \int_{1 + \eg}^\infty da_2 \int_{1 + \eg}^\infty da_1\ \rho_0\: \oot{\rm tr}[\Omega_\fk S] \frac{\oo u \ot u}{a_1 a_2}  
 = - \frac{1}{2} \rho_0  \int_{\rho_0(1 + \eg)}^\infty dl_2\: \int_{\rho_0(1 + \eg)}^\infty dl_1 \oot{\rm tr}[\Omega_\fk S] \frac{\oo u \ot u}{l_1 l_2}.
\end{equation}
This is $\rho_0$ times an integral that tends to a finite limit as $\eg, \rho_0 \rightarrow 0^+$. The $S$ component of $\int_0^\infty H d\rho$ therefore vanishes.

In sum, we have found that 
\begin{align}
  \lefteqn{\int_0^\infty dl_2 \int_0^\infty dl_1 \ \oot{\rm tr}[\oo\varphi \ot\varphi \,\oo \U{}^{-1} \ot \U{}^{-1}\{\oo \U, \ot \U\}]} \qquad\qquad\qquad & \nonumber\\  
  = & 16 \pi G_2 \int_0^\infty dl_2 \int_0^\infty dl_1 \:\oot{\rm tr}[\oo\varphi \ot\varphi \:
  p. v. \left(\frac{1}{l_1 - l_2}\right)\ot T(l_2,0) \oo T(l_1,0)  \Omega_\fg \oo T(0, l_1)\ot T(0, l_2)], \label{smeared_U_bracket1} 
\end{align}
which is precisely what we wished to demonstrate, namely \eqref{U_brack_1}.

Before going on to calculate the Poisson bracket between $\E$s, let us return to the problem of the generation of imaginary components of 
the conformal metric by the Poisson bracket. Recall that we found in subsection \ref{V_brack} that the Poisson bracket does not quite preserve 
the reality of the conformal metric $e$: Real functionals of the conformal metric can generate an imaginary contribution to $e$, a sum of 
two zero modes with imaginary coefficients. We pointed out that these imaginary modes are a nuisance rather than a catastrophe 
because they do not propagate off the initial data surface - they do not affect the solution defined by the initial data in the 
interior of the domain of dependence of $\cN$. We also claimed that this problem does not arise if the deformed conformal metric $\E$ is used as 
data in place of $e$, because $\E$ is insensitive to zero modes. Let us verify this last claim on $\cN_L - S_0$ (leaving aside the more subtle situation
at $S_0$). 

The variation of $\E$ generated by a real functional $F$ of $e$ on $\cN$ is
\begin{equation}
 \{F, \E\} = \{F, \U \U^t \} = \U[\U^{-1}\{F, \U\} + (\U^{-1}\{F, \U\})^t]\U^t = 2\U(\U^{-1}\{F, \U\})_\fk \U^t.
\end{equation}
Thus, if $\U$ is real the imaginary component of the variation of $\E$ generated by $F$ is proportional to the imaginary component of $(\U^{-1}\{F, \U\})_\fk$.
The logarithmic variation of $\U$ is in turn determined by the logarithmic variation of $\V$ via \eqref{Vhat_variation} with $\rho = l$. Specifically
\begin{equation}
(\U^{-1}\{F, \U\})_\fk(x) = \int_0^x T(x,z)\left(\frac{u}{l}\sqrt{\rho} D_z [\sqrt{\rho}(\V^{-1}\{F, \V\})_\fk]\right)_z T(z,x) dz.
\end{equation} 
At the end of subsection \ref{V_brack} we saw that $\operatorname{Im}(\V^{-1}\{F, \V\})_\fk$ is a sum of zero modes. On $\cN_L - S_0$ in particular
this reduces to a sum of the zero modes given by \eqref{zero_mode_L}, which we may call $\phi^L_A$ . But $D_z [\sqrt{\rho}\phi^L_A] = 0$, so these zero 
modes do not contribute to the $\fk$ component of the logarithmic variation of $\U$. $\E$ is thus insensitive to the zero modes in $\V$, which implies as a corollary 
that it is insensitive to the imaginary component of the variation of $\V$ generated by $F$. The Hamiltonian flow generated by a real functional of $e$
preserves the reality of $\E$. This is of course consistent with the absence of any imaginary terms in the bracket of $\E$ with $\E$ \eqref{poibraE}.

This result does not really contradict our claim that $\E$ determines $e$ uniquely. $\E$ determines a unique {\em real} $e$.
Furthermore, the zero modes of $e$ are not genuine initial data, since they do not affect the Cauchy development off $\cN$, so $\E$ together
with the data on $S_0$ is complete initial data.

\subsection{The Poisson algebra of the deformed conformal metric $\E$}\label{section:brackofE}

To obtain the brackets $\{\oo\E, \ot\E\}$ all that remains to do is to reexpress \eqref{U_brack_1}, 
\be
\oot C_{\sk\sk} = 16\pi G_2\: p.v.\left(\frac{1}{l_1-l_2}\right)\left[\To(x_1,0)\Tt(x_2,0)\Om_\fg\To(0,x_1)\Tt(0,x_2)\right]_{\sk\sk},
\ee
in a convenient form and substitute the result into \eqref{E_bracket2},
\be
\{\oo \E,\ot \E \}= 4\,\oo \U\ot \U \oot C_{\sk\sk} \,\oo \U{}^t\ot \U{}^t.
\ee

Let us define $\vOm = \oo Z\ot Z\Om_\fg \oo Z{}^{-1}\ot Z{}^{-1}$, where 
$Z_a{}^i$ is an arbitrarily chosen unit determinant zweibein. $\vOm_a{}^b{}_c{}^d$ is the tangent space tensor that corresponds, 
via the zweibein $Z_a{}^i$, to the internal space tensor $\Om_\fg{}_i{}^j{}_k{}^l$. Since $\Om_\fg$ is invariant under equal 
$G = SL(2,\Real)$ transformations in spaces $1$ and $2$ all choices of zweibein lead to the same $\vOm$. Indeed 
\begin{equation}
\vOm_a{}^b{}_c{}^d = \frac{1}{4}[\dg_a^d \dg^b_c - \frac{1}{2}\dg_a^b \dg_c^d] 
\end{equation}
regardless of the zweibein chosen. 

Since $Z$ is arbitrary we can choose it to be equal to $\V(0)$. Thus
$\Ve{}^{-1}(0)\Vz{}^{-1}(0)\vOm\Ve(0)\Vz(0) = \Om_\fg$ and 
\begin{align}
\oot C_{\sk\sk}
= p.v.\left(\frac{16\pi G_2}{l_1-l_2}\right) & \left[\To(x_1,0)\Tt(x_2,0)\Ve{}^{-1}(0)\Vz{}^{-1}(0)\vOm\Ve(0)\Vz(0)\To(0,x_1)\Tt(0,x_2)\right]_{\sk\sk}\\
= p.v.\left(\frac{16\pi G_2}{l_1-l_2}\right) & \left[\oo \U{}^{-1}(x_1)\ot \U{}^{-1}(x_2)\vOm\,\oo \U(x_1)\ot \U(x_2)\right]_{\sk\sk}\\
= p.v.\left(\frac{4\pi G_2}{l_1-l_2}\right) & \bigg(\oo \U{}^{-1}\ot \U{}^{-1}\vOm\,\oo \U\ot \U
+ \oo \U{}^{-1}\ot \U{}^t\vOm^t\,\oo \U(\ot \U{}^t)^{-1}\\
&\ + \oo \U{}^t\ot \U{}^{-1}\ {}^t\vOm (\oo\U{}^t)^{-1}\ot \U 
+ \oo \U{}^t\ot \U{}^t\ {}^t\vOm^t (\oo\U{}^t)^{-1} (\ot\U{}^t)^{-1} \bigg).
\end{align}
Inserting this expression for $C_{\sk\sk}$ in \eqref{E_bracket2}, produces the remarkably elegant result
\be\label{poibraE}
\{\oo \E,\ot \E\}= p.v.\left(\frac{16\pi G_2}{l_1-l_2}\right)\left(\vOm \oo\E\ot\E 
 + \oo\E\,{}^t\!\vOm \ot\E + \ot\E \vOm^t \oo\E + \oo\E\ot\E\,{}^t\!\vOm^t\right).
\ee
Expressed explicitly in terms of the components of $\E$ and the area density $\rho$ the bracket is
\be\label{poibraE_explicit}
\{\E_{ab}(\mathbf{1}),\E_{cd}(\mathbf{2})\}= p.v.\left(\frac{16\pi G_2}{\rho(\mathbf{1})-\rho(\mathbf{2})}\right)
\operatorname{Sym}_{(ab),(cd)} \left(\E_{ad}(\mathbf{1})\E_{cb}(\mathbf{2}) 
- \frac{1}{2}\E_{ab}(\mathbf{1})\E_{cd}(\mathbf{2}) \right),
\ee
where $\operatorname{Sym}_{(ab),(cd)}$ indicates that the expression must be symmetrized with respect to interchange of 
the indices in the pairs $a,b$ and $c,d$.

The Poisson bracket \eqref{poibraE} is equivalent to the Poisson bracket of the monodromy matrix $\M$ given in \cite{KorotkinSamtleben}. 
To obtain the bracket of \cite{KorotkinSamtleben} we first transform \eqref{poibraE} from the tangent space of the symmetry orbits to 
the internal space using an arbitrary, non-dynamical, unit determinant zweibein $Z$. We obtain a bracket between the internal components 
$\E_{ij}$ of $\E$ identical in form to \eqref{poibraE} but with $\vOm$ replaced by $\Om_\fg$. This bracket may be slightly simplified 
using the identities $\Om^t_\fg = -\Om^\eta_\fg = \Omk - \Omh = {}^t\!\Om_\fg$ and ${}^t\!\Om_\fg^t = \Omk + \Omh = \Omg$, 
yielding finally
\be\label{poibraE_int}
\{\oo \E,\ot \E\}= p.v.\left(\frac{16\pi G_2}{l_1-l_2}\right)\left(\Om_\fg \oo\E\ot\E + \oo\E\ot\E \Om_\fg
 - \oo\E\Om_\fg^\eta \ot\E - \ot\E \Om_\fg^\eta \oo\E \right).
\ee
This, in turn, is equivalent to a bracket on the internal components $\M_{ij}$ of the monodromy matrix:
Recall that $\E(r) = \M(\rho^-(r))$, with $\rho^- = 2\rho - \rho^+$, so $\M = \E \circ (\rho^-)^{-1}$. Since $\rho^-$ Poisson 
commutes with itself and $\E$, the bracket \eqref{poibraE} is equivalent to
\be\label{poibraM}
\{\oo \M,\ot \M\}= p.v.\left(\frac{32\pi G_2}{w_1-w_2}\right)\left(\Om_\fg \oo\M\ot \M+\oo\M\ot\M\Om_\fg
 -\oo\M\Om^\eta_\fg \ot\M-\ot\M \Om^\eta_\fg \oo\M\right).
\ee

The bracket \eqref{poibraM} agrees with that obtained by Korotkin and Samtleben \cite{KorotkinSamtleben} taking into account that 
they work with units such that $8\pi G_2 = 1$ and their $\Om_\fg$ is four times ours.
Nevertheless their result differs in two important respects from ours. First, their bracket is derived
in a completely different way, from a canonical formulation in terms of spacelike initial data. Second, their formalism 
assumes that space extends infinitely far from the symmetry axis, and is asymptotically flat in a suitable sense.
Our derivation involves only data on the finite null segment $\cN_L$, it makes no assumptions about the
existence or properties of more distant regions of spacetime. In this sense our result is more general than 
that of \cite{KorotkinSamtleben}.\footnote{
It is in fact mentioned in \cite{KorotkinSamtleben} that $\M$ can be defined without invoking spatial infinity, but 
this approach is not developed there. In their calculation of the brackets of $\M$ they define $\M$ in terms of fields
at spatial infinity.}
This is important because in the absence of symmetries four dimensional null canonical general relativity is difficult
to formulate except in a quasi-local form, in which the initial data hypersurface is truncated before the generators
form caustics.

\section{Definition and Poisson brackets of $\E$ in the absence of cylindrical symmetry}\label{symmetryless}

Our classical results apply quite directly to data on a double null sheet $\cN$ in full GR, without cylindrical symmetry, 
provided $\cN$ satisfies a stringent regularity condition. If the generators of a branch of $\cN$ meet at a caustic at which 
$\mu$ does not diverge, then the change of variables $\mu \mapsto \E$ defined in section \ref{transformation} may be 
applied, unchanged, to $\mu$ along each generator of the branch. The resulting deformed conformal metric, which now 
depends on the transverse coordinates $\theta^1, \theta^2$, satisfies the Poisson brackets
\be\label{poibraE_4}
\{\oo \E,\ot \E\}= p.v.\left(\frac{16\pi G}{\rho(\mathbf{1})-\rho(\mathbf{2})}\right)\dg^2(\theta_\mathbf{2} - \theta_\mathbf{1})\left(
\vOm \oo\E\ot\E  + \oo\E\,{}^t\!\vOm \ot\E + \ot\E \vOm^t \oo\E + \oo\E\ot\E\,{}^t\!\vOm^t\right),
\ee
obtained by replacing $G_2$ by $G\dg^2(\theta_\mathbf{2} - \theta_\mathbf{1})$ in \eqref{poibraE}. Or equivalently
\be\label{poibraE_explicit_4}
\{\E_{ab}(\mathbf{1}),\E_{cd}(\mathbf{2})\}= p.v.\left(\frac{16\pi G}{\rho(\mathbf{1})-\rho(\mathbf{2})}\right) 
\dg^2(\theta_\mathbf{2} - \theta_\mathbf{1})
\operatorname{Sym}_{(ab),(cd)} \left(\E_{ad}(\mathbf{1})\E_{cb}(\mathbf{2}) 
- \frac{1}{2}\E_{ab}(\mathbf{1})\E_{cd}(\mathbf{2}) \right).
\ee

Generically $\mu$ does diverge in a caustic. On the other hand the conformal metric does not diverge 
along the generators of a past light cone as one approaches the vertex. As noted in section \ref{data0} this provides a 
simple way to construct double null sheets satisfying the regularity condition everywhere in a smooth spacetime: Choose 
two points such that their past light cones intersect, and define $S_0$ to be a disk in this intersection. The generators 
of the light cones that connect $S_0$ to the vertices sweep out the double null sheet. A formalism restricted to double 
null sheets satisfying the regularity condition may therefore suffice to describe arbitrary spacetimes. On the other hand, 
it is also likely that the formalism can be extended to non-regular double null sheets.  

\section{Quantization}\label{quantization}

Korotkin and Samtleben \cite{KorotkinSamtleben} have proposed an associative $*$-algebra that quantizes 
\eqref{poibraM} in the vacuum gravity case: It is generated by the quantum monodromy matrix $\M$, a symmetric matrix of 
self adjoint operators depending on the spectral parameter $w$, which satisfy exchange relation
\be\label{quantum_exchange_relation}
R(v - w)\oo \M(v) R'(w - v + 2i a) \ot \M(w)
= \ot \M(w)R'(v - w + 2i a)\oo \M(v)R(w - v)\frac{v - w - 2ia}{v - w + 2ia}, 
\ee
where $a = 4\pi G_2\hbar$, and $R(u) = (u - ia/2)I - 4ia \Om_\fg$ and $R'(u) = (u - i a/2)I - 4ia \Om^\eta_\fg$, with 
$I = \oo\One \ot\One$ the identity map on the product of spaces 1 and 2. 

The definition of the $*$-algebra must be completed by a quantization of the classical condition $\det \M = 1$
which is compatible with the other relations defining the algebra. Korotkin and Samtleben 
impose such a condition, not directly on $\M$ itself but on $T_+$ and $T_-$, a pair of operators 
depending on data all the way out to spatial infinity. (Their classical counterparts are defined in 
section \ref{KS_variables} of the present work). $T_\pm$ are not available to us, since our initial 
data surface, $\cN$, is compact and so knows nothing of spatial infinity. Their condition might 
still be usable if it were translated into a condition directly on $\M$, but this is not straightforward 
and will be left to future investigations. 

The exchange relation \eqref{quantum_exchange_relation} refers to the internal components of $\M$, which depend on the 
arbitrarily chosen reference zweibein $Z_a^i$. The exchange relation can, however, be rewritten in a manifestly $Z$ independent form
in terms of tangent space tensors:
\be\label{quantum_exchange_relation2}
\vR (v - w)\oo \M(v) {\vR}'(w - v + 2i a) \ot \M(w)
= \ot \M(w)\,{}^t\!{\vR}'^t(v - w + 2i a)\oo \M(v)\,{}^t\!{\vR}^t(w - v)\frac{v - w - 2ia}{v - w + 2ia}, 
\ee
with $\vR (u)_a{}^b{}_c{}^d = (u - ia/2)\dg_a^b\dg_c^d - 4ia \vOm_a{}^b{}_c{}^d = u \dg_a^b\dg_c^d - i a\dg_a^d\dg_c^b$ and 
${\vR}'(u)^a{}_{b c}{}^d = (u - ia/2)\dg_b^a\dg_c^d + 4ia \vOm_b{}^a{}_c{}^d = (u - i a)\dg^a_b\dg_c^d + i a\dg^a_c\dg_b^d$.

The quantization is extended to $\rho$ by setting the commutator of $\rho$ with the quantum monodromy matrix $\M$ to zero,
as suggested by the fact that classically $\rho$ Poisson commutes with $\M$. A quantization of the Poisson bracket 
\eqref{poibraE} of the deformed conformal metric $\E(\cdot) = \M(\rho^-(\cdot))$ can then be read off immediately from the 
exchange relation \eqref{quantum_exchange_relation2} for $\M$: 
\begin{equation}\label{quantum_exchange_relationE}
\vR(\Delta)\oo \E(\mathbf{1}) {\vR}'(-\Delta + 2ia)\ot 
\E(\mathbf{2})= \ot \E(\mathbf{2})\,{}^t\!{\vR}'^t(\Delta + 2ia)\oo 
\E(\mathbf{1})\,{}^t\!{\vR}^t(-\Delta) \frac{\Delta - 2ia}{\Delta + 2ia},
\end{equation}
with $\Delta = \rho^-(\mathbf{1}) - \rho^-(\mathbf{2})= 
2[\rho(\mathbf{1}) - \rho(\mathbf{2})]$. The symmetry and reality conditions on $\M$ imply that 
\begin{align}
\E_{ab} &= \E_{ba}\\ 
\E_{ab}^* &= \E_{ab}.
\end{align}
The quantization of the classical condition $\det\E = 1$ of course remains to be determined. 

That the exchange relation \eqref{quantum_exchange_relation2} indeed quantizes the Poisson bracket
\eqref{poibraE} can be verified directly by expanding the exchange relation to first order in $a$.

Note that $a$, like $\rho$ and $\Delta$, is an area density: because $G_2$ is Newton's constant 
$G$ divided by the $\theta$ coordinate area of $S_0$, $a$ is $16\pi$ times the Planck area divided 
by this coordinate area. That is, $a$ is an area density that assigns $16\pi$ Planck areas to $S_0$. 

What has been obtained (modulo the quantization of the condition $\det\E = 1$) is an algebraic 
quantization. The exchange relations, equivalent to commutation 
relations, have been specified exactly, as have further relations 
defining the $*$-algebra of the quantum deformed conformal metric. 
However, no unitary representation of the 
algebra by operators in a Hilbert space has been given. The 
specification of such a representation is probably necessary in order 
to complete the quantization. $*$-algebras often admit 
several unitarily inequivalent unitary representations (this is 
the case for instance for the $*$-algebra of initial data in 
canonically quantized free field theory) and unitarily 
inequivalent representations define distinct theories because 
the possible assignments of expectation values to the 
set of observables differ between such representations. See \cite{Wald_CSQFT} 
for a detailed discussion of this issue.

Korotkin and Samtleben do propose a representation of their 
quantum algebra in \cite{KorotkinSamtleben_PRL} but they were not 
able to show that it is unitary, i.e. that the $*$ operation in 
the algebra is mapped to the adjoint operation in the representation 
\cite{KorotkinSamtleben}.   

When one tries to take over the quantization \eqref{quantum_exchange_relationE} to the symmetryless case a difficulty arises.
To pass from \eqref{poibraE} to \eqref{poibraE_4} one substitutes $G_2$, Newton's constant divided by the coordinate area of $S_0$,
by $G\dg^2(\theta_\mathbf{2} - \theta_\mathbf{1})$. To generalize \eqref{quantum_exchange_relationE} one should therefore replace 
$a$ by $4\pi G\hbar \dg^2(\theta_\mathbf{2} - \theta_\mathbf{1})$.
However the resulting expression is not well defined, because the exchange relation \eqref{quantum_exchange_relationE} is not linear 
in $a$.

\section{Acknowledgements}

The present work has developed from the Diplom thesis of one of the authors (A.F.) completed under the direction of 
the other author (M.R.). The main results, which are included in the thesis, are joint work of A.F. and M.R.. The present 
article was written by M.R., who added several refinements and extensions to the results in the process. 

The authors thank Prof. Herbert Balasin for bringing about their collaboration, and A. F. thanks the 
Physics Institute of the Science Faculty of the Universidad de la Republica of Uruguay for hospitality during an extended stay 
there, and the International Office of the Technische Universit\"{a}t Wien for financial support during this visit. 
M.R. thanks the Centro de Ciencias Matem\'aticas de la UNAM in Morelia, Mexico, the Perimeter Institute in Waterloo, Canada 
and the Centre de Physique Th\'eorique in Luminy, France, for hospitality during the course of this work, and 
Jose Antonio Zapata, Bianca Dittrich, and Alejandro Perez for stimulating discussions. 
In an early stage of this project the work of M.R. was partially supported by the Foundational Questions Institute 
under grant RFP2-08-24, later it was partially supported by PAPIIT-Universidad Nacional Aut\'onoma de M\'exico through 
grant IN109415.

\appendix

\section{Properties of the path ordered exponential}\label{path_ordered_exp}

The results presented here are well known, at least in outline, but we have found no reference presenting them in the precise form that we need.

The path ordered exponential is the holonomy defined by a connection $A$ along a curve. Here and in the following $A$ will be a Lebesgue 
integrable function on a (possibly infinite) segment $I$ of the real line taking values in the complex square matrices of some fixed finite dimensionality $n$.

\begin{definition}\label{path_order_def}
The path ordered exponential of $A$ from $a \in I$ to $b \in I$ is defined by the power series
\begin{equation}\label{path_order_def_eq}
T(a,b) \equiv \cP e^{\int_{a}^{b} A dz} = \One + \sum_{s = 1}^\infty \int_a^b dz_1 \int_{z_1}^b dz_2 ... \int_{z_{s-1}}^b dz_s A(z_1)...A(z_s).
\end{equation}
\end{definition}

If $a \leq b$ then
\begin{equation}\label{path_order_def_eq1}
 \cP e^{\int_{a}^{b} A dz} = \One + \sum_{s = 1}^\infty \int_{a < z_1 < ...< z_s < b} A(z_1)...A(z_s) dz_1 ... dz_s,
\end{equation}
while if $a \geq b$
\begin{equation}\label{reverse_path_order_def_eq1}
 \cP e^{\int_{a}^{b} A dz} = \One + (-1)^s\sum_{s = 1}^\infty \int_{a > z_1 > ...> z_s > b} A(z_1)...A(z_s) dz_1 ... dz_s.
\end{equation}

Note that in our definition the exponential is ordered from left to right in the sense that factors of $A$ with argument closer to the lower 
bound of integration, $a$, appear to the left of factors with arguments closer to $b$, the upper bound of integration. The opposite ordering is 
often used in the literature and leads to expressions that differ by transpositions of factors from the ones obtained here. 

Note also that the definition applies even when $a$ or $b$, or both, are infinite, provided the interval $I$ on which $A$ is integrable 
includes these points.

The most elementary properties of the path ordered exponential can be obtained directly from this definition.

\begin{proposition}\label{path_ordered_exponential_prop}
The path ordered exponential $T(a,b) = \cP e^{\int_{a}^{b} A dz}$ is well defined and continuous in $a$ and $b$ for all $a, b$ in the interval $I$. 
Moreover $T$ satisfies the bound $\norm{T(a,b) - \One} \leq e^{|\int_a^b \norm{A} dz|} - 1$ for any submultiplicative norm $\norm{\cdot}$, and the product relation 
$T(a,b)T(b,c) = T(a,c)$ for all $a, b , c \in I$.  
\end{proposition}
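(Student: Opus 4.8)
The plan is to prove the three assertions in the order: the norm bound (which also yields well-definedness, via absolute convergence), then the product relation, and finally continuity, which will follow from the first two.

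\emph{The bound and well-definedness.} For $a \le b$ I would start from the ordered form \eqref{path_order_def_eq1}, apply a submultiplicative norm inside the $s$-th term, and bound it by $\int_{a<z_1<\cdots<z_s<b}\norm{A(z_1)}\cdots\norm{A(z_s)}\,dz_1\cdots dz_s$. The integrand is invariant under permutations of the $z_i$, and the $s!$ orderings of the variables tile the cube $[a,b]^s$ up to a set of measure zero, so (by Tonelli, legitimate since $\norm{A}$ is integrable) this ordered integral equals $\tfrac{1}{s!}\big(\int_a^b\norm{A}\,dz\big)^s$. Summing over $s$ gives $\norm{T(a,b)-\One}\le\sum_{s\ge1}\tfrac{1}{s!}\big(\int_a^b\norm{A}\big)^s=e^{\int_a^b\norm{A}}-1$. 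The identical estimate applied to \eqref{reverse_path_order_def_eq1} covers $a\ge b$ and produces the stated bound with $|\int_a^b\norm{A}|$. Since this shows the defining series \eqref{path_order_def_eq} converges absolutely at every $(a,b)$, $T(a,b)$ is well defined.

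\emph{The product relation.} For an ordered triple $a\le b\le c$ I would argue directly from \eqref{path_order_def_eq1}: multiplying the series for $T(a,b)$ and $T(b,c)$ and rearranging termwise (permitted by the absolute convergence just established), one relabels the combined variables as $y_1,\dots,y_s$ with $s=k+m$. For fixed $s$ the union over $k$ of the regions $\{a<y_1<\cdots<y_k<b<y_{k+1}<\cdots<y_s<c\}$ is precisely the simplex $\{a<y_1<\cdots<y_s<c\}$ up to the null set where some $y_i=b$, because an increasing tuple has a well-defined number of entries below $b$; hence the product collapses to the series for $T(a,c)$. To obtain the relation for arbitrary $a,b,c$ I would instead extract from \eqref{path_order_def_eq} the two Volterra integral equations $T(a,b)=\One+\int_a^b T(a,z)A(z)\,dz$ and $T(b,c)=\One+\int_b^c A(z)T(z,c)\,dz$ (valid for all orderings, with signed integrals), which give $\partial_b T(a,b)=T(a,b)A(b)$ and $\partial_b T(b,c)=-A(b)T(b,c)$ almost everywhere; then $b\mapsto T(a,b)T(b,c)$ is absolutely continuous with derivative zero a.e., hence constant and equal to its value $T(a,c)$ at $b=a$ (using $T(a,a)=\One$).

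\emph{Continuity and the main obstacle.} Continuity is then cheap: the product relation gives $T(a,b)-T(a,b_0)=T(a,b_0)[T(b_0,b)-\One]$, and the bound yields $\norm{T(b_0,b)-\One}\le e^{|\int_{b_0}^b\norm{A}|}-1\to 0$ as $b\to b_0$ by absolute continuity of the Lebesgue integral; continuity in the lower argument is symmetric. I expect the main obstacle to be the careful bookkeeping in the product relation rather than any conceptual subtlety: justifying the termwise rearrangement of the double series, and the measure-theoretic claim that the sub-simplices reassemble into the full simplex, together with the sign care needed to derive the integral equations from \eqref{path_order_def_eq} uniformly in the ordering of $a$ and $b$.
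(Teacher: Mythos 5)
Your proof is correct, and for the most part it tracks the paper's: the norm bound via the symmetrization identity $\int_{a<z_1<\cdots<z_s<b}\norm{A(z_1)}\cdots\norm{A(z_s)}=\tfrac{1}{s!}\bigl(\int_a^b\norm{A}\bigr)^s$, the Cauchy-product/simplex-decomposition argument for $a\le b\le c$, and continuity from the product relation plus absolute continuity of the Lebesgue integral are all essentially the paper's steps. Where you genuinely diverge is in extending the product relation to arbitrary orderings of $a,b,c$. The paper reduces this to showing $T(a,b)T(b,a)=\One$ for $a\le b$, which it proves by writing the product as a single path-ordered exponential of a reflected connection $\tilde A$ on $[-1,1]$ that is odd in the parameter, so that every higher-order term of the series contains a vanishing factor $\int_{-m}^{m}\tilde A$. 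You instead derive the two Volterra equations $T(a,b)=\One+\int_a^b T(a,z)A(z)\,dz$ and $T(b,c)=\One+\int_b^c A(z)T(z,c)\,dz$ (which the paper only establishes later, in its Proposition \ref{integral_eqn}, by exactly the dominated-convergence argument you would need), conclude that $b\mapsto T(a,b)T(b,c)$ is absolutely continuous with a.e.\ derivative $T A\, T' - T A\, T'=0$, hence constant and equal to $T(a,c)$. Your route is self-contained given the bound, handles all orderings of $a,b,c$ in one stroke (so your separate simplex argument for the ordered case is actually redundant), and avoids the somewhat delicate reparametrization bookkeeping of the reflection trick; its cost is that it front-loads the integral-equation characterization and leans on standard but nontrivial facts about absolutely continuous matrix-valued functions (Leibniz rule a.e., and that zero a.e.\ derivative implies constancy). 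The paper's argument stays entirely at the level of the defining series. Both are valid.
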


Note that for all $a,b \in I$, finite or not, continuity requires that $T(a',b') \rightarrow T(a,b)$ when $a', b' \rightarrow a, b$.

\begin{proof}
By \eqref{path_order_def_eq}
\begin{equation}\label{path_order_def_eq2}
\cP e^{\int_{a}^{b} A dz} - \One =  \sum_{s = 1}^\infty \int_a^b dz_1 \int_{z_1}^b dz_2 ... \int_{z_{s-1}}^b dz_s A(z_1)...A(z_s).
\end{equation}
Because $A$ is integrable this series is absolutely convergent: Let $\norm{\cdot}$ be a submultiplicative norm on $n \times n$ matrices
(such as for instance $\norm{A}^2 = \sum_{i,j} |A_i{}^j|^2$) then the sum of the norms of the terms in the series \eqref{path_order_def_eq2} is
\begin{align}
\sum_{s = 1}^\infty \norm{\int_a^b dz_1 \int_{z_1}^b dz_2 ... \int_{z_{s-1}}^b dz_s A(z_1)...A(z_s)}
& \leq \sum_{s = 1}^\infty \left|\int_a^b dz_1 \int_{z_1}^b dz_2 ... \int_{z_{s-1}}^b dz_s \norm{A(z_1)}...\norm{A(z_s)}\right|\\
& = \sum_{s = 1}^\infty \frac{1}{s!} \left|\int_{a}^{b} \norm{A(z)} dz \right|^s.\label{norm_sum}
\end{align}
Since $A$ is integrable $|\int_{a}^{b} \norm{A(z)} dz| < \infty$ and the series \eqref{norm_sum} converges to
the finite value $e^{|\int_{a}^{b} \norm{A(z)} dz|} - 1$. Since the space of $n \times n$ matrices is Cauchy
complete this implies that the original series \eqref{path_order_def_eq2} converges. It also implies that
\begin{equation}\label{path_ordered_exp_bound}
 \norm{\cP e^{\int_{a}^{b} A dz} - \One} \leq e^{|\int_{a}^{b} \norm{A} dz|} - 1.
\end{equation}

When $a\leq b \leq c$ the product relation follows from the expression \eqref{path_order_def_eq1}: The order by
order product series (that is, the Cauchy product) of the series for $T(a,b)$ and $T(b,c)$ is the series for $T(a,c)$, 
and since the series of the two factors converge absolutely, by Mertens' theorem the Cauchy product converges
to the product of the two factors. See \cite{Hardy}. 

To complete the proof of the product relation it is sufficient to demonstrate that $T(a,b)T(b,a) = \One$ when $a \leq b$.
This product can be expressed as $\tilde{T}(-1,0)\tilde{T}(0,1)$ where $\tilde{T}$ is the path ordered exponential of the connection
\begin{equation}
     \tilde{A}(t)= 
\begin{cases}
    (b - a) A(b + t(b-a)), & \text{if } t \in [-1,0]\\
    (b - a) A(b - t(b-a)), & \text{if } t \in [0,1],
\end{cases}
\end{equation}
which by the preceding result equals $\tilde{T}(-1,1) = \One + \sum_{s = 1}^\infty \int_{-1 < t_1 < ...< t_s < 1} \tilde{A}(t_1)...\tilde{A}(t_s) dt_1 ... dt_s$.
All terms in this series save the first, $\One$, vanish: Consider the order $s$ term. At each point of the domain of integration $\{-1 < t_1 < ...< t_s < 1\}$  
at least one of the variables $t_i$ will have the smallest absolute value. The domain is therefore the union, disjoint modulo intersections of measure zero, of
the sets $U_i = \{-1 < t_1 < ...< t_s < 1, |t_i| \leq |t_j| \forall j \}$. A sequence $[t_1, ..., t_n]$ belongs to $U_i$ iff $-1 < t_1 < ...<t_{i-1} < 0 < t_{i+1} < .... < t_s < 1$ 
and $|t_i| \leq m \equiv \min(|t_{i-1}|, |t_{i+1}|)$. The integral $\int_{U_i} \tilde{A}(t_1)...\tilde{A}(t_s) dt_1 ... dt_s$ therefore contains a factor
$\int_{-m}^m \tilde{A}(t_i) dt_i$, which vanishes because $\tilde{A}$ is an odd function of $t$.

The continuity of the path ordered exponential now follows from the product relation
and the bound \eqref{path_ordered_exp_bound} which implies that 
$\cP e^{\int_{a}^{b} A dz} \rightarrow \One$ when $b \rightarrow a$.
\end{proof}

A more powerful way to characterize the path ordered exponential is via an integral equation, or rather either 
of two equivalent integral equations.

\begin{proposition}\label{integral_eqn}
For any $a \in I$ the path ordered exponential $T(a,b) = \cP e^{\int_{a}^{b} A dz}$ is the only solution to the integral equation
\begin{equation}\label{int_eq_TA}
 T(a,b) = \One + \int_a^b T(a,z)A(z)dz\ \ \forall b\in I 
\end{equation}
that is bounded as a function of $b$. For any fixed $b \in I$ it is also the unique solution to the integral equation
\begin{equation}\label{int_eq_AT} 
 T(a,b)	= \One + \int_a^b A(z)T(z,b)dz\ \ \forall a\in I  
\end{equation}
that is bounded as a function of $a$ on $I$. 
\end{proposition}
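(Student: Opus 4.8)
The plan is to prove the existence and uniqueness parts separately. For existence I would verify directly that the series \eqref{path_order_def_eq} defining $T(a,b)$ satisfies each integral equation, and for uniqueness I would run a Picard--Gr\"onwall iteration exploiting the integrability of $A$ together with the boundedness hypothesis.

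First I would establish \eqref{int_eq_TA}. Taking $a \le b$ for definiteness (the case $a \ge b$ being identical after reversing the orientation of $I$), I would substitute the ordered form \eqref{path_order_def_eq1} of $T(a,z)$ into $\int_a^b T(a,z)A(z)\,dz$. The contribution of $\One\cdot A(z)$ integrates to the order-one term $\int_a^b A(z)\,dz$ of $T(a,b)-\One$, while the order-$s$ term of $T(a,z)$, multiplied on the right by $A(z)$ and integrated over $z \in [a,b]$, becomes $\int_{a<z_1<\cdots<z_s<z<b} A(z_1)\cdots A(z_s)A(z)\,d^{s}z\,dz$; relabelling $z$ as $z_{s+1}$ identifies this with the order-$(s+1)$ term of $T(a,b)$. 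Summing over $s$ therefore reproduces $T(a,b)-\One$, which is \eqref{int_eq_TA}. The interchange of summation and integration needed here is legitimate because the series converges absolutely, the sum of the norms being dominated by $e^{\int_a^b \norm{A}\,dz}-1$ exactly as in the proof of proposition \ref{path_ordered_exponential_prop}. Equation \eqref{int_eq_AT} is obtained in the same way, except that one factors out the leftmost factor $A(z_1)$ rather than the rightmost: substituting \eqref{path_order_def_eq1} for $T(z,b)$ into $\int_a^b A(z)T(z,b)\,dz$ and relabelling $z$ as the new first integration variable reconstitutes the series for $T(a,b)-\One$.

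For uniqueness of \eqref{int_eq_TA} with $a$ fixed, I would let $T_1,T_2$ be two bounded solutions and set $D(b)=T_1(a,b)-T_2(a,b)$, so that $D(b)=\int_a^b D(z)A(z)\,dz$ and $M \equiv \sup_{b\in I}\norm{D(b)}<\infty$. Iterating this identity $n$ times and using submultiplicativity of the norm gives the standard estimate $\norm{D(b)} \le M\,\bigl|\int_a^b \norm{A}\,dz\bigr|^{n}/n!$. Since $A$ is integrable on $I$, the quantity $\int_a^b\norm{A}\,dz$ is finite (bounded by $\int_I \norm{A}\,dz$) even when $b$ is an infinite endpoint, so the right-hand side tends to $0$ as $n\to\infty$ and $D\equiv 0$. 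The argument for \eqref{int_eq_AT} is the mirror image, with $b$ fixed and $D$ regarded as a function of the lower limit $a$.

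The routine parts of this proof are the two series manipulations and the iteration estimate; the only points requiring genuine care are (i) justifying the term-by-term integration, which rests on the absolute convergence already established in proposition \ref{path_ordered_exponential_prop}, and (ii) recognising that the boundedness hypothesis is precisely what makes the Gr\"onwall estimate vanish uniformly up to the (possibly infinite) endpoints of $I$ --- without it a solution could in principle fail to be controlled at infinity, so that $M$ need not be finite. I expect the bookkeeping of the nested simplices under relabelling, together with tracking the orientation in the case $a>b$, to be the most error-prone step, though it is conceptually straightforward.
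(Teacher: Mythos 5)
Your proposal is correct and follows essentially the same route as the paper: existence by verifying the integral equations term by term on the absolutely convergent series (the paper phrases this via the recursion $T_r = \One + \int_a^b T_{r-1}A\,dz$ for partial sums plus dominated convergence, which is your relabelling argument), and uniqueness by iterating the integral equation and using boundedness to obtain the $M\,|\!\int_a^b\norm{A}\,dz|^{r}/r!$ decay. The only cosmetic difference is that the paper iterates on a single bounded solution to show it equals $\lim_r T_r$, whereas you iterate on the difference of two bounded solutions; the estimate and the role of the boundedness hypothesis are identical.
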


\begin{proof}
 First let us demonstrate that the path ordered exponential satisfies the integral equations \eqref{int_eq_TA} and \eqref{int_eq_AT}. Let $T_r(a,b)$ be the $r$th partial
 sum of the series \eqref{path_order_def_eq} for $T(a,b)$. It is clear from the expressions \eqref{path_order_def_eq1} and \eqref{reverse_path_order_def_eq1} for this 
 series that
 \begin{equation}
  T_r(a,b) = \One + \int_a^b T_{r-1}(a,z)A(z)dz.
 \end{equation}
 But as $r \rightarrow \infty$ $T_{r-1}(a,z)A(z)$ converges pointwise to $T(a,z)A(z)$ and 
 \begin{equation}
  \norm{T_{r-1}(a,z)A(z)} \leq \norm{A(z)} + (e^{|\int_{a}^{z} \norm{A} dz'|} - 1)\norm{A(z)} 
 \end{equation}
 by prop. \ref{path_ordered_exponential_prop}, so it is bounded by an integrable function. It follows from the dominated convergence theorem that 
 $T(a,b) = \lim_{r \rightarrow \infty} T_r(a,b)$ satisfies \eqref{int_eq_TA}. The demonstration of \eqref{int_eq_AT} is analogous.
 
 Prop. \ref{path_ordered_exponential_prop} assures that $T(a,b)$ is bounded, since it is continuous and $I$ is either compact or $T(a,b)$ has finite limiting values 
 as $a$ or $b$ approach infinity.
 
 Now let us prove that \eqref{int_eq_TA} and boundedness implies that $T$ is the path ordered exponential. We therefore suspend for a moment the definition of $T$
 as the path ordered exponential, supposing only that it is a bounded solution of \eqref{int_eq_TA}, and show that this implies that $T(a,b) 
 = \lim_{r \rightarrow \infty} T_r(a,b) \equiv \cP e^{\int_{a}^{b} A dz}$ (where $T_r(a,b)$ is still defined as the $r$th partial sum of \eqref{path_order_def_eq}).  
 Substituting \eqref{int_eq_TA} into itself iteratively $r-1$ times one obtains
 \begin{equation}
  T(a,b) - T_r(a,b) = \int_a^b dz_1 \int_{z_1}^b dz_2 ... \int_{z_{r-1}}^b dz_r T(a,z_1)A(z_1) ...A(z_r).
 \end{equation}
 Since $T(a,z)$ is bounded on $I$: $\norm{T(a,z)}< M\ \ \forall z\in [a,b]$ for some finite $M$. Thus
 \begin{equation}
  \norm{T(a,b) - T_r(a,b)} \leq \frac{1}{r!}\left|\int_a^b \norm{A} dz\right|^r M,
 \end{equation}
 implying that $T(a,b) = \lim_{r \rightarrow \infty} T_r(a,b)$. A similar argument shows that \eqref{int_eq_AT} and the boundedness of $T$ also implies that
 $T$ is the path ordered exponential.
\end{proof}

Note that when $a$ and $b$ are finite \eqref{int_eq_TA} (or \eqref{int_eq_AT}) implies that $T$ is the path ordered exponential also under the weaker 
hypothesis that $T(a,b)$ is locally integrable in $b$ since then this equation implies that $T$ is the integral of a locally integrable function, and thus continuous. 

The indefinite integrals of Lebesgue integrable functions are more than just continuous, they are {\em absolutely continuous}. Absolutely continuous functions
are differentiable almost everywhere and, by the fundamental theorem of calculus for the Lebesgue integral (Theorem 7.20 of \cite{Rudin87}), the integral over 
an interval $[a,b]$ of the derivative of such a function $f$ is equal to $f(b) - f(a)$.    

The path ordered exponential may therefore be characterized as the solution to a differential equation. 

\begin{proposition}\label{path_ordered_exponential_prop_derivative}
The path ordered exponential $T(a,b) = \cP e^{\int_{a}^{b} A dz}$ satisfies the differential equation 
\begin{equation}\label{holonomy_diff_eq}
 dT/db = T A
\end{equation} 
at almost all $b \in I$. Moreover, it is the only absolutely continuous function that satisfies this equation almost everywhere and takes the value 
$\One$ when $b = a$. If $A$ is continuous then \eqref{holonomy_diff_eq} holds everywhere on $I$. 
\end{proposition}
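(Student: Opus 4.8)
The plan is to build everything on the integral equation \eqref{int_eq_TA} supplied by Proposition \ref{integral_eqn}, so that the differential equation emerges from the Lebesgue version of the fundamental theorem of calculus rather than from a direct term-by-term differentiation of the series \eqref{path_order_def_eq}. First I would fix $a \in I$ and observe that the integrand $z \mapsto T(a,z)A(z)$ appearing in \eqref{int_eq_TA} is Lebesgue integrable on every compact subinterval of $I$: by Proposition \ref{path_ordered_exponential_prop} the factor $T(a,\cdot)$ is continuous, hence locally bounded, and $A$ is integrable by hypothesis, so their product is integrable. Consequently $b \mapsto T(a,b)$ is the indefinite integral of a locally integrable function, and is therefore absolutely continuous. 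The Lebesgue differentiation theorem (the fundamental theorem of calculus in the form of Theorem 7.20 of \cite{Rudin87}) then gives, at almost every $b \in I$, that the derivative of the indefinite integral equals its integrand, i.e. $dT/db = T(a,b)A(b)$, which is exactly \eqref{holonomy_diff_eq}. The initial condition $T(a,a) = \One$ is immediate from Definition \ref{path_order_def}, since every term of the series \eqref{path_order_def_eq} beyond the leading $\One$ carries an empty domain of integration.

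For the uniqueness assertion I would run the argument in reverse. Suppose $\tilde{T}$ is absolutely continuous on $I$, satisfies $d\tilde{T}/db = \tilde{T}A$ for almost every $b$, and equals $\One$ at $b = a$. Because $\tilde{T}$ is absolutely continuous, the same fundamental theorem of calculus recovers it from its almost-everywhere derivative, so $\tilde{T}(b) = \One + \int_a^b \tilde{T}'(z)\,dz = \One + \int_a^b \tilde{T}(z)A(z)\,dz$; that is, $\tilde{T}$ solves the integral equation \eqref{int_eq_TA}. On any compact subinterval containing $a$ and $b$ the function $\tilde{T}$ is continuous, hence bounded and in particular locally integrable, so by the remark following Proposition \ref{integral_eqn} — that on finite intervals local integrability already forces a solution of \eqref{int_eq_TA} to be the path ordered exponential — we conclude $\tilde{T}(b) = T(a,b)$ on that subinterval, and therefore for every $b \in I$.

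Finally, for the case of continuous $A$, I would note that $z \mapsto T(a,z)A(z)$ is then continuous (continuity of $T(a,\cdot)$ again from Proposition \ref{path_ordered_exponential_prop}), so its indefinite integral is $C^1$ with derivative equal to the integrand at \emph{every} point by the classical fundamental theorem of calculus; hence \eqref{holonomy_diff_eq} holds throughout $I$ rather than merely almost everywhere.

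I do not expect a genuine structural obstacle here, since the substance has already been packaged into Propositions \ref{path_ordered_exponential_prop} and \ref{integral_eqn}; the only real care point is measure-theoretic bookkeeping. Specifically, one must invoke the fundamental theorem of calculus in its correct Lebesgue forms — almost-everywhere differentiability of an indefinite integral, and reconstruction of an absolutely continuous function from its almost-everywhere derivative — and, when $I$ is infinite, must avoid assuming global boundedness of $\tilde{T}$ by localizing the uniqueness argument to compact subintervals, where the weaker local-integrability version of the uniqueness in Proposition \ref{integral_eqn} applies.
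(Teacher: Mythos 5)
Your proof is correct and follows essentially the same route as the paper's: differentiate the integral equation \eqref{int_eq_TA} via the Lebesgue fundamental theorem of calculus for the almost-everywhere statement, reverse the argument using absolute continuity for uniqueness, and invoke the classical fundamental theorem when $A$ is continuous. Your extra care in localizing the uniqueness step to compact subintervals (where continuity gives boundedness, so Proposition \ref{integral_eqn} applies) is a welcome refinement of the paper's terser argument but not a different method.
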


\begin{proof}
Equation \eqref{holonomy_diff_eq} follows from differentiating \eqref{int_eq_TA} in $b$. By the fundamental theorem of calculus for the Lebesgue integral 
the derivative is well defined and equal to the integrand on the right side of \eqref{int_eq_TA} almost everywhere.
Conversely, suppose that $T(a,b)$ satisfies \eqref{holonomy_diff_eq} almost everywhere, and that it is absolutely continuous. Then $T(a,b) - T(a,a) = \int_a^b T(a,z)A(z)dz$, 
again by the fundamental theorem of calculus for the Lebesgue integral. The condition $T(a,a) = \One$ then implies that  $T(a,b) = \One + \int_a^b T(a,z)A(z)dz$. 

If $A$ is continuous the integrand of \eqref{int_eq_TA} is continuous so the ordinary fundamental theorem of calculus implies that \eqref{holonomy_diff_eq} holds everywhere.
\end{proof}

Path ordered exponentials satisfy a convergence theorem similar to the dominated convergence theorem for ordinary Lebesgue integrals.

\begin{proposition}\label{path_ordered_exponential_prop2}
Suppose $A_m$ is a sequence of integrable matrix valued functions on a possibly infinite interval $[a,b]$ that converges pointwise to 
the function $A_\infty$. Suppose, moreover, that there exists an integrable function $g$ such that $\norm{A_m} \leq g$ for all $m$. 
Then the path ordered exponential of $A_m$ from $a$ to $b$ converges to the path ordered exponential of $A_\infty$.
\end{proposition}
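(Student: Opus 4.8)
The plan is to pass to the limit $m\to\infty$ directly inside the power series definition \eqref{path_order_def_eq}, using the dominated convergence theorem twice. Write $T_m(a,b)=\cP e^{\int_a^b A_m\,dz}$ and let $I_s^{(m)}$ denote its order-$s$ term, so that
\begin{equation}
T_m(a,b)=\One+\sum_{s=1}^\infty I_s^{(m)},\qquad
I_s^{(m)}=\int_a^b dz_1\int_{z_1}^b dz_2\cdots\int_{z_{s-1}}^b dz_s\,A_m(z_1)\cdots A_m(z_s).
\end{equation}
As a preliminary I would note that $A_\infty$ is integrable: taking the pointwise limit in $\norm{A_m}\le g$ gives $\norm{A_\infty}\le g$ almost everywhere, so $T_\infty(a,b)=\cP e^{\int_a^b A_\infty\,dz}$ is well defined by proposition \ref{path_ordered_exponential_prop}.

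First I would fix $s$ and show $I_s^{(m)}\to I_s^{(\infty)}$. On the simplex $a<z_1<\cdots<z_s<b$ the integrand $A_m(z_1)\cdots A_m(z_s)$ converges pointwise to $A_\infty(z_1)\cdots A_\infty(z_s)$, since each factor converges and matrix multiplication is continuous; moreover, by submultiplicativity of the norm together with the uniform bound $\norm{A_m}\le g$, it is dominated—uniformly in $m$—by $g(z_1)\cdots g(z_s)$, whose integral over the simplex is $\tfrac{1}{s!}(\int_a^b g\,dz)^s<\infty$. Applying the ordinary dominated convergence theorem to each of the $n^2$ matrix entries, each bounded in modulus by the Frobenius norm, then yields $I_s^{(m)}\to I_s^{(\infty)}$ for every fixed $s$.

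The crucial step is to exchange the limit $m\to\infty$ with the summation over $s$. For this I would apply dominated convergence a second time, now with respect to counting measure in the index $s$: the same estimate gives the $m$-independent bound $\norm{I_s^{(m)}}\le G_s:=\tfrac{1}{s!}(\int_a^b g\,dz)^s$, with $\sum_{s\ge 1}G_s=e^{\int_a^b g\,dz}-1<\infty$. Since $I_s^{(m)}\to I_s^{(\infty)}$ for each $s$ and the terms are dominated by the summable, $m$-independent sequence $(G_s)_s$, the interchange is justified and
\begin{equation}
\lim_{m\to\infty}T_m(a,b)=\One+\sum_{s=1}^\infty\lim_{m\to\infty}I_s^{(m)}
=\One+\sum_{s=1}^\infty I_s^{(\infty)}=T_\infty(a,b),
\end{equation}
which is the assertion.

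I expect this interchange of limit and infinite sum to be the only real obstacle; the per-order limits are routine Lebesgue theory. Everything could alternatively be compressed into a single application of dominated convergence, by regarding the series as one integral over the disjoint union $\bigsqcup_{s\ge 0}\{a<z_1<\cdots<z_s<b\}$ (the $s=0$ piece being a unit atom carrying $\One$) equipped with the product Lebesgue measures, the dominating function being $g(z_1)\cdots g(z_s)$ on the $s$-th component with total mass $e^{\int_a^b g\,dz}$. I would nevertheless favour the two-step presentation, since it displays most clearly how the single hypothesis $\norm{A_m}\le g$ supplies both the per-order domination and the uniform summable bound needed to commute the limit past the sum.
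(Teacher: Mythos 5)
Your proof is correct and follows essentially the same route as the paper's: dominated convergence applied first to each order-$s$ integral (dominated by $g(z_1)\cdots g(z_s)$) and then to the series over $s$ (dominated by the summable sequence $\tfrac{1}{s!}\left|\int_a^b g\,dz\right|^s$). Your preliminary remark that $\norm{A_\infty}\le g$ a.e., so that $T_\infty$ is well defined, is a small but welcome addition that the paper leaves implicit.
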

\begin{proof}
The idea is to apply Lebesgue's dominated convergence theorem \cite{Royden} twice: first to the integral in each term
of the series expansion of the path ordered exponential of $A_m$, to show that it converges to the corresponding term in the 
path ordered exponential of $A_\infty$, and then to the series to show that the limit of the sum converges to the sum for
$A_\infty$. We start with the integrals: We know that $A_m(z_1)...A_m(z_s) \rightarrow A_\infty(z_1)...A_\infty(z_s)$ as 
$m \rightarrow \infty$, and that $\norm{A_m(z_1)...A_m(z_s)} \leq g(z_1)...g(z_s)$, 
with $\int_a^b dz_1 ... \int_{z_{s-1}}^b dz_s\: g(z_1)...g(z_s) < \infty$. The dominated convergence theorem then shows that 
\begin{equation}
\int_a^b dz_1 ... \int_{z_{s-1}}^b dz_s\: A_m(z_1)...A_m(z_s)
\rightarrow \int_a^b dz_1 ... \int_{z_{s-1}}^b dz_s\: A_\infty(z_1)...A_\infty(z_s).
\end{equation}
These are of course the terms in the expansion of the path ordered exponentials.

Because $A_m$ is bounded by $g$
\begin{equation}
 \norm{\int_a^b dz_1 ... \int_{z_{s-1}}^b dz_s\: A_m(z_1)...A_m(z_s)} \leq \left|\int_a^b dz_1 ... \int_{z_{s-1}}^b dz_s\: g(z_1)...g(z_s)\right|
 = \frac{1}{s!}\left|\int_a^b g dz\right|^s.
\end{equation}
This bound can be summed from $s = 1$ to $\infty$, yielding $e^{|\int_a^b g dz|} - 1 < \infty$. Applying the dominated convergence theorem to the series for 
the path ordered integral we conclude that
\begin{align}
 \cP e^{\int_{a}^{b} A_m(z) dz} - \One
 & \equiv \sum_{s=1}^\infty \int_a^b dz_1 ... \int_{z_{s-1}}^b dz_s\: A_m(z_1)...A_m(z_s) \nonumber\\
 & \rightarrow \sum_{s=1}^\infty \int_a^b dz_1 ... \int_{z_{s-1}}^b dz_s\: A_\infty(z_1)...A_\infty(z_s) \equiv
 \cP e^{\int_{a}^{b} A_\infty(z) dz} - \One.
\end{align}
\end{proof}

Path ordered exponentials are holonomies and they transform in a simple way under ``gauge transformations'', that is, local changes of basis in the $n$ dimensional space in 
which the matrix connection $A$ acts. 

\begin{proposition}\label{gauge_transformation}
Suppose $\Lambda(x) = \Lambda(a)\:\cP e^{\int_{a}^{x} \lam dz}$ with $\lambda$ an integrable $n \times n$ matrix valued function on $[a,b]$ and $\Lambda(a)$ 
an invertible $n \times n$ matrix then
\begin{equation}
\Lambda(a) T(a,b) \Lambda^{-1}(b) = \cP e^{\int_{a}^{b} \Lambda(A - \lambda)\Lambda^{-1} dz}
\end{equation}
\end{proposition}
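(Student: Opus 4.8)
The plan is to show that the matrix function obtained by conjugating the holonomy of $A$ satisfies the same first-order differential equation, with the same initial value, as the path ordered exponential of the gauge-transformed connection, and then to conclude by the uniqueness clause of Proposition \ref{path_ordered_exponential_prop_derivative}. Concretely, I would set
\[
U(x) = \Lambda(a)\,T(a,x)\,\Lambda^{-1}(x), \qquad x \in [a,b],
\]
and aim to prove that $U$ is the path ordered exponential of $B \equiv \Lambda(A - \lam)\Lambda^{-1}$ from $a$. Since $T(a,a) = \One$, one reads off immediately $U(a) = \Lambda(a)\Lambda^{-1}(a) = \One$, the correct initial condition.

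The first preparatory step is to control $\Lambda$ and its inverse. By hypothesis $\Lambda(x) = \Lambda(a)\,\cP e^{\int_a^x \lam\,dz}$, so Proposition \ref{path_ordered_exponential_prop_derivative} makes $\Lambda$ absolutely continuous with $d\Lambda/dx = \Lambda\lam$ almost everywhere. The product relation of Proposition \ref{path_ordered_exponential_prop} shows that $\cP e^{\int_a^x \lam\,dz}$ is invertible, with inverse $\cP e^{\int_x^a \lam\,dz}$; hence $\Lambda(x)$ is invertible for every $x$ and $\Lambda^{-1}(x) = \cP e^{\int_x^a \lam\,dz}\,\Lambda^{-1}(a)$. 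This last expression exhibits $\Lambda^{-1}$ as a path ordered exponential (in its lower limit) times a constant matrix, hence absolutely continuous; applying Proposition \ref{integral_eqn} to the connection $\lam$ and differentiating the integral equation \eqref{int_eq_AT} in the lower limit then gives $d\Lambda^{-1}/dx = -\lam\,\Lambda^{-1}$ almost everywhere. In particular $\Lambda$ and $\Lambda^{-1}$ are continuous, thus bounded on the compact interval $[a,b]$, so $B = \Lambda(A-\lam)\Lambda^{-1}$ is integrable, since $A$ and $\lam$ are.

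The second step is the differentiation itself. As a product of absolutely continuous matrix functions on a compact interval, $U$ is absolutely continuous and Leibniz's rule holds almost everywhere. Using $dT(a,x)/dx = T(a,x)A(x)$ from Proposition \ref{path_ordered_exponential_prop_derivative} together with $d\Lambda^{-1}/dx = -\lam\Lambda^{-1}$, I would obtain
\[
\frac{dU}{dx} = \Lambda(a)\,T(a,x)\bigl(A(x) - \lam(x)\bigr)\Lambda^{-1}(x),
\]
the $\lam$-contribution arising from the derivative of $\Lambda^{-1}$. Inserting the factor $\Lambda^{-1}(x)\Lambda(x) = \One$ between $T(a,x)$ and $(A-\lam)$ rewrites the right-hand side as $U(x)\,\Lambda(x)(A(x)-\lam(x))\Lambda^{-1}(x) = U(x)\,B(x)$. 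Thus $U$ is an absolutely continuous solution of $dU/dx = U\,B$ taking the value $\One$ at $x=a$, and the uniqueness assertion of Proposition \ref{path_ordered_exponential_prop_derivative} forces $U(x) = \cP e^{\int_a^x B\,dz}$. Evaluating at $x=b$ yields the claim.

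The conceptual skeleton is short once the differential-equation characterization is available; the genuinely delicate points are measure-theoretic. The main obstacle will be justifying that $U$ is absolutely continuous and that the Leibniz rule is valid almost everywhere for these products of factors that are only differentiable almost everywhere, together with the absolute continuity of $\Lambda^{-1}$ and the formula for its derivative. Writing $\Lambda^{-1}$ itself as a path ordered exponential, as above, removes the last worry, and absolute continuity of the product follows from the standard fact that products of absolutely continuous functions on a compact interval are absolutely continuous; so these obstacles are technical rather than substantive.
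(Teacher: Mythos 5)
Your proof is correct, but it takes a genuinely different route from the paper's. The paper does not differentiate anything: it invokes the integral-equation characterization of Proposition \ref{integral_eqn}, notes that $\Lambda(a)T(a,x)\Lambda^{-1}(x)$ is bounded, and verifies directly that it satisfies $U(x) = \One + \int_a^x U(z)\,\Lambda(z)(A-\lam)(z)\Lambda^{-1}(z)\,dz$ by substituting the integral equations \eqref{int_eq_TA} for $T$ and \eqref{int_eq_AT} for $\Lambda^{-1}(z)\Lambda(x)$ and observing that the two resulting double-integral cross terms cancel after exchanging the order of integration. You instead use the differential characterization of Proposition \ref{path_ordered_exponential_prop_derivative}, computing $dU/dx = UB$ almost everywhere via the Leibniz rule and concluding by uniqueness. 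Both are legitimate; the trade-off is that the paper's argument needs only boundedness of the candidate and a Fubini-type cancellation, with no appeal to absolute continuity of products, whereas yours is conceptually shorter but leans on the standard (unproved in the paper) fact that a product of absolutely continuous matrix functions is absolutely continuous and obeys the Leibniz rule a.e., and on the derivative of the holonomy in its \emph{lower} limit, which you correctly extract from \eqref{int_eq_AT} since Proposition \ref{path_ordered_exponential_prop_derivative} states only the upper-limit equation. One small caution: the paper allows the interval $I$ to be infinite, so where you argue boundedness of $\Lambda$ and $\Lambda^{-1}$ "on the compact interval $[a,b]$" you should instead cite the bound of Proposition \ref{path_ordered_exponential_prop}, which gives boundedness from the integrability of $\lam$ regardless of whether the interval is compact; with that substitution your argument goes through unchanged.
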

\begin{proof}
 By prop. \ref{path_ordered_exponential_prop} both $\Lambda(x)$ and $T(a,x)$ are bounded functions of $x$. It follows that $\Lambda(a) T(a,x) \Lambda^{-1}(x)$ is also. 
 By prop. \ref{integral_eqn} it therefore suffices to show that
 \begin{equation}\label{integral_eq_gauge_transformation}
  \Lambda(a) T(a,x) \Lambda^{-1}(x) = \One + \int_a^x \Lambda(a) T(a,z) \Lambda^{-1}(z) \Lambda(z) (A - \lambda)(z) \Lambda^{-1}(z) dz
 \end{equation}
 for all $x \in [a,b]$ to demonstrate the claim. 
 
 Now 
 \begin{equation}
  \Lambda^{-1}(z)\Lambda(x) = \cP e^{\int_{z}^{x} \lam dz'} = \One + \int_{z}^{x} \lam(z')\Lambda^{-1}(z')\Lambda(x)dz'
 \end{equation}
 by the integral equation \eqref{int_eq_AT} of prop. \ref{integral_eqn}. Using the other equation equation, \eqref{int_eq_TA}, to expand $T(a,z)$ one obtains the following expression for 
 the integral on the right of \eqref{integral_eq_gauge_transformation}: 
 \begin{align}
  \lefteqn{\int_a^x \Lambda(a) T(a,z)(A - \lambda)(z) \Lambda^{-1}(z) dz}\qquad\qquad\qquad \\
  = \Lambda(a) & \left\{ \int_a^x T(a,z)A(z) dz + \int_a^x dz \int_z^x dz'\: T(a,z)A(z)\lambda(z')\Lambda^{-1}(z')\Lambda(x)\right.\notag \\
		    & \left.  - \int_a^x \lam(z)\Lambda^{-1}(z)\Lambda(x)dz - \int_a^x dz' \int_a^{z'} dz\: T(a,z)A(z)\lambda(z')\Lambda^{-1}(z')\Lambda(x)\right\}\Lambda^{-1}(x).
 \end{align}
 Since the second and fourth terms cancel this equals
 \begin{align}
  \Lambda(a) & \left\{ \One + \int_a^x T(a,z)A(z) dz - \One - \int_a^x \lambda(z)\Lambda^{-1}(z)\Lambda(x)dz\right\} \Lambda^{-1}(x) \\
  & = \Lambda(a)\left\{T(a,x) - \Lambda^{-1}(a)\Lambda(x)\right\}\Lambda^{-1}(x)\\
  & = \Lambda(a)T(a,x)\Lambda^{-1}(x) - \One.
 \end{align}
 This proves \eqref{integral_eq_gauge_transformation}, and thus the proposition.
\end{proof}

With the previous proposition in hand we are ready to calculate the functional derivative of the path ordered exponential with respect to the connection.

\begin{proposition}\label{path_ordered_exponential_functional_derivative}
At each point in the Banach space $L^1(I)$ of integrable connections $A$ on $I$ the derivative $\dI T(a,b)$ of the path ordered exponential by the 
connection exists as a bounded linear operator from $L^1(I)$ to the $n \times n$ matrices. The contraction of the derivative with any variation of the 
connection $\dg A \in L^1(I)$ is  
\begin{equation}
 \dI T(a,b) \cont \dg A = \int_a^b T(a,z) \dg A(z) T(z,b) dz.
\end{equation}
\end{proposition}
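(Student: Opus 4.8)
The plan is to compare the path ordered exponential built from $A$ with the one built from the perturbed connection $A + \dg A$ (write $B = \dg A \in L^1(I)$ for brevity), extract an \emph{exact} identity relating the two, and only then linearize in $B$. Throughout I abbreviate $T_A(a,b) = \cP e^{\int_a^b A\,dz}$ and $T_{A+B}(a,b) = \cP e^{\int_a^b (A+B)\,dz}$, and use freely the norm bound and product relation of Proposition \ref{path_ordered_exponential_prop}, the integral equations of Proposition \ref{integral_eqn}, and the differential characterization $\di_b T_A(a,b) = T_A(a,b)A(b)$ of Proposition \ref{path_ordered_exponential_prop_derivative}.

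The heart of the argument is the \emph{resolvent identity}
\begin{equation}\label{resolvent_identity_plan}
 T_{A+B}(a,b) - T_A(a,b) = \int_a^b T_{A+B}(a,z)\, B(z)\, T_A(z,b)\, dz,
\end{equation}
valid for arbitrary finite $B$. To prove it I set $R(b) = T_A(a,b) + \int_a^b T_{A+B}(a,z) B(z) T_A(z,b)\, dz$ and show that $R$ and $T_{A+B}(a,\cdot)$ solve the same initial value problem. Using the product relation to factor $T_A(z,b) = T_A(z,a) T_A(a,b)$, the entire $b$ dependence of the integral is isolated in a single right hand factor $T_A(a,b)$; the integral is therefore absolutely continuous in $b$ and may be differentiated by the Lebesgue fundamental theorem of calculus together with $\di_b T_A(a,b) = T_A(a,b)A(b)$. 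A short computation then gives $\di_b R = R A + T_{A+B}(a,\cdot)\, B$ almost everywhere, with $R(a) = \One$, which is precisely the inhomogeneous linear equation $\di_b X = X A + T_{A+B}(a,\cdot)\,B$, $X(a)=\One$, satisfied by $T_{A+B}(a,\cdot)$. Uniqueness for this equation reduces to Proposition \ref{integral_eqn}: the difference $D$ of any two bounded solutions obeys the homogeneous equation $D(b) = \int_a^b D(z)A(z)\,dz$, so by iterating and bounding as in the proof of Proposition \ref{integral_eqn} one gets $\norm{D(b)} \le \tfrac{1}{r!}\bigl|\int_a^b\norm{A}\bigr|^r \sup\norm{D} \to 0$, forcing $D \equiv 0$. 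This establishes \eqref{resolvent_identity_plan}.

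Given \eqref{resolvent_identity_plan}, I would take as the candidate derivative the map $B \mapsto L\cont B := \int_a^b T_A(a,z) B(z) T_A(z,b)\, dz$. It is linear in $B$ and bounded from $L^1(I)$ into the $n \times n$ matrices, since $\norm{L\cont B} \le M_A^2 \int_a^b \norm{B(z)}\,dz = M_A^2\,\norm{B}_{L^1}$ with $M_A = e^{\int_I \norm{A}\,dz} < \infty$ (finite because $A \in L^1(I)$). Subtracting $L\cont B$ from \eqref{resolvent_identity_plan} leaves the remainder $\int_a^b [T_{A+B}(a,z) - T_A(a,z)]\, B(z)\, T_A(z,b)\, dz$; applying \eqref{resolvent_identity_plan} once more to the bracket yields $\norm{T_{A+B}(a,z) - T_A(a,z)} \le M^2\,\norm{B}_{L^1}$ uniformly in $z$, where $M = e^{\int_I (\norm{A} + \norm{B})\,dz}$ stays bounded as $\norm{B}_{L^1} \to 0$. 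Hence the remainder is $O(\norm{B}_{L^1}^2) = o(\norm{B}_{L^1})$, which shows that $T_A(a,b)$ is Fr\'echet differentiable at $A$ with derivative $L$, that is $\dI T(a,b)\cont \dg A = \int_a^b T(a,z)\,\dg A(z)\,T(z,b)\,dz$, as claimed.

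The only genuinely new work is establishing the exact identity \eqref{resolvent_identity_plan}; once it is available, differentiability and the remainder estimate are routine consequences of bounds already proved. The delicate points there are the uniqueness of the inhomogeneous linear equation (reduced to Proposition \ref{integral_eqn}) and the measure theoretic justification of differentiating the integral in $b$, both handled by the absolute continuity of indefinite Lebesgue integrals. As an alternative I could differentiate the defining series \eqref{path_order_def_eq} term by term: the degree $s$ term is a bounded $s$-linear function of $A$ whose derivative, by the product rule, inserts $\dg A$ into each of the $s$ slots, and splitting the ordered simplex at the inserted argument resums to $\int_a^b T(a,z)\,\dg A(z)\, T(z,b)\,dz$, the interchange of differentiation and summation being justified by dominating the series of derivatives exactly as the series itself was dominated in Proposition \ref{path_ordered_exponential_prop}. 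I would nonetheless prefer the resolvent route, since it isolates the single estimate that actually requires proof.
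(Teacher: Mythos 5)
Your proposal is correct, but it reaches the result by a genuinely different route than the paper. The paper first establishes differentiability at the zero connection directly from the integral equation (obtaining the bound $\norm{T(a,b)-\One-\int_a^b A\,dz}\leq K(e^K-1)$ with $K=\int_I\norm{A}\,dz$), and then transports this to an arbitrary base point $A_0$ by applying the gauge-transformation identity of proposition \ref{gauge_transformation} with $\lam=A_0$ and $\Lambda=T_0(a,\cdot)$, so that $T(a,b)T_0(b,a)$ becomes the path ordered exponential of $\Delta=T_0(a,\cdot)(A-A_0)T_0(\cdot,a)$ and the zero-connection estimate can be reused verbatim. You instead prove the exact Duhamel (resolvent) identity $T_{A+B}(a,b)-T_A(a,b)=\int_a^b T_{A+B}(a,z)B(z)T_A(z,b)\,dz$ and extract the derivative by replacing $T_{A+B}$ with $T_A$ under the integral, controlling the swap by the uniform $O(\norm{B}_{L^1})$ bound on $T_{A+B}(a,z)-T_A(a,z)$ that the identity itself supplies. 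Both arguments rest on the same elementary estimates and both yield an explicitly quadratic remainder. The trade-off is that the paper's route recycles proposition \ref{gauge_transformation} and needs no new exact identity, whereas yours must establish the Duhamel formula from scratch --- including a uniqueness argument for the inhomogeneous linear equation, which is essentially a second run of the Picard iteration already used in proposition \ref{integral_eqn}, and the almost-everywhere product rule for the absolutely continuous factors, which you correctly flag as the delicate measure-theoretic point. In exchange, the Duhamel identity makes the first-order structure and the size of the error term completely transparent, and your fallback of term-by-term differentiation of the defining series would also go through under the domination already established in proposition \ref{path_ordered_exponential_prop}.
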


\begin{proof}
Let us first prove the claim at the trivial connection $A_0 = 0$. The path ordered exponential $T$ of a (not necessarily zero) connection $A \in L^1(I)$ satisfies
\begin{equation}
 T(a,b) = \One + \int_a^b A(z)T(z,b)dz = \One + \int_a^b A(z)dz + \int_a^b A(z)[T(z,b) - \One]dz.
\end{equation}
Thus
\begin{align}
 \norm{T(a,b) - \One - \int_a^b A(z)dz} & \leq \left|\int_a^b \norm{A(z)}\norm{T(z,b) - \One}dz\right|\\
			                & \leq \left|\int_a^b \norm{A(z)}(e^{|\int_z^b \norm{A(z')}dz'|} - 1)dz\right|\\
					& \leq K(e^K - 1).
\end{align}
with $K = \int_I \norm{A(z)}dz$. It follows that for all $\eg>0$ there exists a $\dg > 0$ such that 
\begin{equation}
 \norm{T(a,b) - \One - \int_a^b A(z)dz} \leq \eg K 
\end{equation}
when $K < \dg$. Because $K$ is the $L^1(I)$ norm of $A$ this means that $T(a,b)$ has a derivative at $A_0 = 0$ 
(that is, a linear approximation) defined by $\dI T(a,b)\cont \dg A = \int_a^b \dg A(z) dz\ \forall \dg A \in L^1(I)$.
It is clear that this is a bounded operator in the $L^1(I)$ operator norm.

Now let us prove the claim at a non-zero connection $A_0 \in L^1(I)$. This will be done essentially by acting on the connections with the ``gauge transformation'' 
of prop. \ref{gauge_transformation}, so that $A_0$ is mapped to zero, and applying the preceding result. Specifically we set $\lam = A_0$ and $\Lambda(a) = \One$, 
so that $\Lambda(z) = T_0(a,z) \equiv \cP e^{\int_{a}^{z} A_0 dz'}$. Then proposition \ref{gauge_transformation} shows that the path ordered exponential $T$ of any 
connection $A \in L^1(I)$ satisfies 
\begin{equation}
T(a,b) T_0(b,a) = \cP e^{\int_{a}^{b} \Delta dz},
\end{equation}
with $\Delta(z) \equiv T_0(a,z)(A - A_0)T_0(z,a)$. It follows that
\begin{equation}
T(a,b)T_0(b,a) = \One + \int_a^b \Delta(z)dz + \int_a^b \Delta(z)[\cP e^{\int_{z}^{b} \Delta dz'} - \One]dz,
\end{equation}
and thus that
\begin{equation}
T(a,b) = T_0(a,b) + \int_a^b T_0(a,z)(A - A_0)T_0(z,b)dz + \int_a^b \Delta(z)[\cP e^{\int_{z}^{b} \Delta dz'} - \One]dz\  T_0(a,b).
\end{equation}
The norm of $\Delta(z)$ is bounded by $e^{2|\int_a^z \norm{A_0} dz'|}\norm{A(z) - A_0(z)}$, so
\begin{equation}
\norm{T(a,b) - T_0(a,b) - \int_a^b T_0(a,z)(A - A_0)T_0(z,b)dz} 
\leq M^3 K (e^{M^2 K} - 1),
\end{equation}
with $M = e^{\int_I \norm{A_0} dz}$ and $K = \int_I \norm{A - A_0} dz$. As before, for all $\eg > 0$ there exists $\dg > 0$ such that
\begin{equation}
\norm{T(a,b) - T_0(a,b) - \int_a^b T_0(a,z)(A - A_0)T_0(z,b)dz} \leq \eg K,
\end{equation}
when $K < \dg$. Since $K$ is the $L^1(I)$ norm of $A - A_0$ this means that $T(a,b)$ is differentiable at $A_0$ with derivative determined by
$\dI T(a,b)\cont \dg A = \int_a^b T_0(a,z) \dg A(z) T_0(z,b) dz$. Again it is clear that this operator is bounded.
\end{proof}

\centerline{------}

\end{document}